\newcommand\CMLLPAR{
\usepackage{cmll}
\newcommand\IPar{\mathord{\parr}}
}
\newlist{Itemize}{itemize}{1}
\setlist[Itemize,1]{label=\textbullet,leftmargin=0.8em}
\newlist{Enumerate}{enumerate}{1}
\setlist[Enumerate]{label*=\arabic*,leftmargin=0.8em}
\theoremstyle{plain}
\newtheorem{theorem}{Theorem}
\newtheorem{lemma}[theorem]{Lemma}
\theoremstyle{definition}
\newtheorem{definition}[theorem]{Definition}
\newtheorem{proposition}[theorem]{Proposition}
\newtheorem{example}[theorem]{Example}
\theoremstyle{remark}
\newtheorem{remark}{Remark}
\renewenvironment{proof}{\begin{IEEEproof}}{\end{IEEEproof}}
\newcommand{\Endproof}{
  \ifmmode 
  \else \leavevmode\unskip\penalty9999 \hbox{}\nobreak\hfill
  \fi
  \quad\hbox{$\Box$}
  \par\medskip}
\newcommand\Eqref[1]{(\ref{#1})}
\renewcommand{\phi}{\varphi}
\renewcommand\epsilon{\varepsilon}
\newcommand\Equiv{\Leftrightarrow}
\newcommand{\St}{\mid}
\newcommand{\arrow}{\rightarrow}
\newcommand{\Fbot}{{\mathord{\perp}}}
\newcommand{\Top}{\top}
\newcommand\cA{\mathcal{A}}
\newcommand\cB{\mathcal{B}}
\newcommand\cC{\mathcal{C}}
\newcommand\cD{\mathcal{D}}
\newcommand\cF{\mathcal{F}}
\newcommand\cI{\mathcal{I}}
\newcommand\cK{\mathcal{K}}
\newcommand\cL{\mathcal{L}}
\newcommand\cM{\mathcal{M}}
\newcommand\cS{\mathcal{S}}
\newcommand\cT{\mathcal{T}}
\newcommand\cX{\mathcal{X}}
\newcommand\Fini{{\mathrm{fin}}}
\newcommand\Part[1]{{\mathcal P}({#1})}
\newcommand\Union{\bigcup}
\newcommand\Inter{\bigcap}
\newcommand{\Linarrow}{\multimap}
\newcommand\Myleft{}
\newcommand\Myright{}
\newcommand\Web[1]{\Myleft|{#1}\Myright|}
\newcommand\Mset[1]{[{#1}]}
\newcommand\Par[2]{{#1}\mathrel{\IPar}{#2}}
\newcommand\ITens{\otimes}
\newcommand\Tens[2]{{#1}\ITens{#2}}
\newcommand\Tensi[3]{{#2}\ITens_{#1}{#3}}
\newcommand\Tensp[2]{\left({#1}\ITens{#2}\right)}
\newcommand\IWith{\mathrel{\&}}
\newcommand\With[2]{{#1}\IWith{#2}}
\newcommand\IPlus{\oplus}
\newcommand\Plus[2]{{#1}\IPlus{#2}}
\newcommand\Orth[2][]{#2^{\Fbot_{#1}}}
\newcommand\Orthp[2][]{(#2)^{\Fbot_{#1}}}
\newcommand\Bwith{\mathop{\&}}
\newcommand\Bplus{\mathop\oplus}
\newcommand\Inj[1]{\overline\pi_{#1}}
\newcommand\Biorth[1]{#1^{\Fbot\Fbot}}
\newcommand\Triorth[1]{{#1}^{\Fbot\Fbot\Fbot}}
\newcommand\One{1}
\newcommand\Onelem{*}
\newcommand\Seq[1]{\vdash{#1}}
\newcommand\IExcl{{\mathord{!}}}
\newcommand\Locun[1]{1^J}
\newcommand\Isom\simeq
\newcommand\Comp{\mathrel\circ}
\newcommand\Funinv[1]{#1^{-1}}
\newcommand\Limpl[2]{{#1}\Linarrow{#2}}
\newcommand\Limplp[2]{\left({#1}\Linarrow{#2}\right)}
\newcommand\Nat{{\mathbb{N}}}
\newcommand\True{\mathsf{true}}
\newcommand\False{\mathsf{false}}
\newcommand\Zero{0}
\newcommand\List[3]{#1_{#2},\dots,#1_{#3}}
\newcommand\Subst[3]{{#1}\left[{#2}/{#3}\right]}
\newcommand\Substbis[2]{{#1}\left[{#2}\right]}
\newcommand\Mfin[1]{\mathcal M_\Fini({#1})}
\newcommand\Evlin{\operatorname{\mathsf{ev}}}
\newcommand\REL{\operatorname{\mathbf{Rel}}}
\newcommand\Redst[1]{\mathop{\mathsf{Red}}}
\newcommand\Tuple[1]{\langle{#1}\rangle}
\newcommand\Msetofsubst[1]{\bar F}
\newcommand\Matapp[2]{{#1}\Compl{#2}}
\newcommand\Matappa[2]{{#1}\cdot{#2}}
\newcommand\Matappap[2]{\left({#1}\cdot{#2}\right)}
\newcommand\Leftu{\lambda}
\newcommand\Rightu{\rho}
\newcommand\Assoc{\alpha}
\newcommand\Sym{\gamma}
\newcommand\Retri\zeta
\newcommand\Retrp\rho
\newcommand\Impl[2]{{#1}\Rightarrow{#2}}
\newcommand\Tsem[1]{\llbracket{#1}\rrbracket}
\newcommand\Tsemr[1]{\llbracket{#1}\rrbracket^{\REL}}
\newcommand\Tsemn[1]{\llbracket{#1}\rrbracket^{\NUTS}}
\newcommand\Psemr[1]{\llbracket{#1}\rrbracket^{\REL}}
\newcommand\Psemn[1]{\llbracket{#1}\rrbracket^{\NUTS}}
\newcommand\Psem[1]{\llbracket{#1}\rrbracket}
\newcommand\Tnat\iota
\newcommand\Loop\Omega
\newcommand\Timpl\Impl
\newcommand\Simpl\Impl
\newcommand\Emptyseq{()}
\newcommand\Weak[1]{\operatorname{\mathsf{w}}_{#1}}
\newcommand\Contr[1]{\operatorname{\mathsf{contr}}_{#1}}
\newcommand\Der[1]{\operatorname{\mathsf{der}}_{#1}}
\newcommand\Digg[1]{\operatorname{\mathsf{dig}}_{#1}}
\newcommand\Id{\operatorname{\mathsf{Id}}}
\newcommand\Proj[1]{\pi_{#1}}
\newcommand\Excl[1]{\oc{#1}}
\newcommand\Exclp[1]{\oc({#1})}
\newcommand\Int[1]{\wn{#1}}
\newcommand\Prom[1]{#1^!}
\newcommand\Relincl\eta
\newcommand\Relrestr\rho
\newcommand\Seelyz{\mathsf m^0}
\newcommand\Seelyt{\mathsf m^2}
\newcommand\Monoidalz{\mu^0}
\newcommand\Monoidalt{\mu^2}
\newcommand\Compl{\,}
\newcommand\Curlin{\mathsf{cur}}
\newcommand\Op[1]{{#1}^{\mathsf{op}}}
\newcommand\Em[1]{{#1}^\oc}
\newcommand\Snum[1]{\overline{#1}}
\newcommand\Vect[1]{\vec{#1}}
\newcommand\Bnfeq{\mathrel{\mathord:\mathord=}}
\newcommand\Bnfor{\,\,\mathord|\,\,}
\newcommand\Tot[1]{\mathop{\mathsf{Tot}}(#1)}
\newcommand\Vcsnot[1]{\mathbb{#1}}
\newcommand\Vcstnot[1]{\mathbb{#1}}
\newcommand\Vsnot[1]{\mathbb{#1}}
\newcommand\Strfun[1]{\overline{#1}}
\newcommand\Strnat[1]{\widehat{#1}}
\newcommand\ALGFUN[2]{\mathbf{Alg}_{#1}(#2)}
\newcommand\COALGFUN[2]{\mathbf{Coalg}_{#1}(#2)}
\newcommand\Lfpll[2]{\mu#1\,#2}
\newcommand\Gfpll[2]{\nu#1\,#2}
\newcommand\LLvars{\mathcal V}
\newcommand\Oneelem{*}
\newcommand\Ncut{(\textsf{cut})}
\newcommand\Ntens{($\ITens$)}
\newcommand\Npar{($\IPar$)}
\newcommand\Ncontr{(\textsf{c})}
\newcommand\Nlfp{($\mu$-\textsf{fold})}
\newcommand\Ngfp{($\nu$-\textsf{rec})}
\newcommand\Ngfpbis{($\nu$-\textsf{rec}${}'$)}
\newcommand\Ordinals{\mathbb O}
\newcommand\MULL{\mu\mathsf{LL}}
\newcommand\MUMALL{\mu\mathsf{MALL}}
\newcommand\MUALL{\mu\mathsf{ALL}}
\newcommand\LL{\mathsf{LL}}
\newcommand\Sone{1}
\newcommand\Sbot{\mathord\bot}
\tikzset{cong/.style={draw=none,edge node={node [sloped, allow upside down, auto=false]{$\cong$}}},
  Isom/.style={draw=none,every to/.append style={edge node={node [sloped, allow upside down, auto=false]{$\cong$}}}}}
\newcommand\isom{\mathrel{\stackon[-0.1ex]{\makebox*{\scalebox{1.08}{\AC}}{=\hfill\llap{=}}}{{\AC}}}}
\newcommand\nvisom{\rotatebox[origin=cc] {-90}{$ \isom $}}
\newcommand\visom{\rotatebox[origin=cc] {90} {$ \isom $}}
       \newcommand\Coalgca[1]{\underline{#1}}
\newcommand\Coalgstr[1]{h_{#1}}
\newcommand\LCAT{\cL}
\newcommand\Obj[1]{\mathsf{Obj}(#1)}
\newcommand\Promp[1]{\Prom{(#1)}}
\newcommand\Fcomod[1]{\mathsf{fc}_{#1}}
\newcommand\Klp[2]{{#1}_{#2}}
\newcommand\Mtens[3]{#2\otimes_{#1}#3}
\newcommand\Mevlin[1]{\Evlin_{#1}}
\newcommand\Kcomod[2]{#1[#2]}
\newcommand\Morth[2]{{#2}^{\Fbot[#1]}}
\newcommand\Mbiorth[2]{{#2}^{\Fbot[#1]\Fbot[#1]}}
\newcommand\Mexcl[2]{\oc_{#1}{#2}}
\newcommand\Mder[1]{\Der{}[#1]}
\newcommand\Mdigg[1]{\Digg{}[#1]}
\newcommand\Mfun[2]{{#1}[#2]}
\newcommand\Strid{\cI}
\newcommand\Strcst[1]{\cK^{#1}}
\newcommand\Fungfp[1]{\nu#1}
\newcommand\Funlfp[1]{\mu#1}
\newcommand\Funfp[1]{\sigma#1}
\newcommand\SF{\mathsf{F}}
\newcommand\Eset[1]{\left\{#1\right\}}
\newcommand\Excll[1]{\oc\oc{#1}}
\newcommand\RELI{\REL^{\subseteq}}
\newcommand\Relii{\eta^+}
\newcommand\Relip{\eta^-}
\newcommand\Botlin{\mathord\perp}
\newcommand\Upcl[1]{\mathord\uparrow#1}
\newcommand\Total[1]{\mathcal T(#1)}
\newcommand\NUTS{\mathbf{Nuts}}
\newcommand\Promhc[1]{#1^{(\mathord\oc)}}
\newcommand\Prommhc[1]{#1^{(\mathord\oc\mathord\oc)}}
\newcommand\VNUTS[1]{\mathbf{Vnuts}_{#1}}
\newcommand\VREL[1]{\REL_{#1}}
\newcommand\Nutsm{\mathsf p}
\newcommand\Nutsf{\mathsf u}
\newcommand\Leaf[1]{\langle#1\rangle}
\newcommand\Bnode[2]{\langle#1,#2\rangle}
\newcommand\Fseq[1]{{#1}^{<\omega}}
\newcommand\Klfree[1]{\mathsf F_{#1}}
\begin{document}
\title{Categorical models of Linear Logic with fixed points of
  formulas} 

\author{Thomas Ehrhard and Farzad Jafarrahmani\\
Université de Paris, CNRS, IRIF, F-75006, Paris, France}


\maketitle

\begin{abstract}
  We develop a categorical semantics of $\MULL$, a version of
  propositional Linear Logic with least and greatest fixed points
  extending David Baelde's propositional $\MUMALL$ with
  exponentials. Our general categorical setting is based on 
  Seely categories and on strong functors acting on them. We exhibit
  two simple instances of this setting. In the first one, which is
  based on the category of sets and relations, least and greatest
  fixed points are interpreted in the same way. In the second one,
  based on a category of sets equipped with a notion of totality
  (non-uniform totality spaces) and relations preserving it, least
  and greatest fixed points have distinct interpretations. This latter
  model shows that $\MULL$ enjoys a denotational form of normalization
  of proofs.
\end{abstract}

\section{Introduction}
Propositional Linear Logic is a well-established logical system
introduced by Girard in~\cite{Girard87}. It provides a fine-grain
analysis of proofs in intuitionistic and classical logic, and more
specifically of their cut-elimination. $\LL$ features a
logical account of the structural rules (weakening,
contraction) which are handled implicitly in intuitionistic and
classical logic. For this reason, $\LL$ has many useful outcomes in the
Curry-Howard based approach to the theory of programming: logical
understanding of evaluation strategies, new syntax
of proofs/programs (proof-nets), connections with other branches of
mathematics (linear algebra, functional analysis, differential calculus), new
operational semantics (geometry of interaction) etc.

However propositional $\LL$ is not a reasonable programming language,
by lack of data-types and iteration or recursion principles. This is
usually remedied by extending propositional $\LL$ to the
$2^{\mathrm{nd}}$ order, thus defining a logical system in which
Girard's System $\SF$~\cite{Girard89} can be embedded.
Another option to turn propositional $\LL$ into a programming language
--~closer to usual programming~-- is to extend it
with least and greatest fixed points of formulas. Such an extension
was early suggested by Girard in an unpublished note~\cite{Girard92},
though the first comprehensive proof-theoretic investigation of such
an extension of $\LL$ is recent: in~\cite{Baelde12} Baelde considers an
extension $\MUMALL$ of Multiplicative Additive $\LL$ sequent calculus
with least and greatest fixed points.
His motivations arose from a proof-search and system verification
perspective and therefore his $\MUMALL$ logical system is a predicate
calculus. Our purpose is to develop a more Curry-Howard oriented point
of view on $\LL$ with fixed points and therefore we stick to the
proposition calculus setting of~\cite{Girard89}. But,
unlike~\cite{Baelde12} we include the exponentials in our system from
the beginning%
\footnote{Exponentials are not considered in $\MUMALL$ because some
  form of exponential can be encoded using inductive/coinductive
  types, however these exponentials are not fully satisfactory from
  our point of view because their denotational interpretation does not
  satisfy all required isomorphisms; specifically, the \emph{Seely
    isos} are lacking.}%
, so we call it $\MULL$ rather than propositional $\MUMALL$ and we
consider it as an alternative to the ``system $\SF$'' approach to
representing programs in $\LL$. Our system $\MULL$ could also have
applications to session types, in the line of~\cite{LindleyMorris16}.
The $\nu$-introduction rule of $\MULL$ (Park's rule, that is
rule~\Ngfp{} of Section~\ref{sec:MULL-syntax}) leads to subtle
cut-elimination rewrite rules for which Baelde could prove
cut-elimination in $\MUMALL$, showing for instance that a proof of the
type of integers $\Lfpll\zeta{(\Plus\One\zeta)}$ necessarily reduces
to an integer (in contrast with $\LL$, $\MUMALL$ enjoys
only a restricted form of sub-formula property). There are alternative
proof-systems for the same logic, involving infinite or cyclic proofs,
see~\cite{BaeldeDoumaneSaurin16}, whose connections with the
aforementioned finitary proof-system are not completely clear yet.

Since the proof-theory (and hence the ``operational semantics'') of
$\MULL$ is still under development, it is important to investigate its
categorical semantics, whose definition does not rely on the precise
choice of inference and rewrite rules we equip $\MULL$ with, see the
\emph{Outcome} § below. We develop here a categorical semantics of
$\MULL$ extending the standard notion of Seely category%
\footnote{Sometimes called new-Seely category: it is a cartesian
  symmetric monoidal closed category with a $\ast$-autonomous
  structure and a comonad $\Excl\_$ with a strong symmetric monoidal
  structure from the cartesian product to the tensor product.}%
of classical $\LL$, see~\cite{Mellies09}. Such a model of $\MULL$
consists of a Seely category $\cL$ and of a class of functors
$\cL^n\to\cL$ for all possible arities $n$ which will be used for
interpreting $\MULL$ formulas with free variables. These functors have
to be equipped with a strength to deal properly with contexts in the
rule~\Ngfp, see Section~\ref{sec:MULL-comments} for a discussion on
these contexts in particular.

Then we develop a simple instance of this setting which consists in
taking for $\cL$ the category of sets and relations, a well-known
Seely model of $\LL$. The \emph{variable sets} are the strong functors
we consider on this category. They are the pairs
$\Vsnot F=(\Strfun{\Vsnot F},\Strnat{\Vsnot F})$ where
$\Strnat{\Vsnot F}$ is the strength and
$\Strfun{\Vsnot F}:\REL^n\to\REL$ is a functor which is
Scott-continuous in the sense that it commutes with directed unions of
morphisms. This property implies that $\Strfun{\Vsnot F}$ maps
injections to injections and is cocontinuous on the category of sets
and injections.
There is no special requirement about the strength $\Strnat{\Vsnot F}$
beyond naturality, monoidality and compatibility with the
comultiplication of the comonad $\Excl\_$. Variable sets form a Seely
model of $\MULL$ where linear negation is the identity on objects. The
formulas $\Lfpll\zeta F$ and $\Gfpll\zeta F$ are interpreted as the
same variable set, exactly as $\ITens$ and $\IPar$ are interpreted in
the same way (and similarly for additives and exponentials). This
denotational ``degeneracy'' at the level of types is a well known
feature of $\REL$ which does not mean at all that the model is trivial.
For instance normal multiplicative exponential $\LL$ proofs which have
distinct relational interpretations have distinct associated
proof-nets~\cite{CarvalhoDeFalco12,Carvalho16}.

Last we enrich this model $\REL$ by considering sets equipped with an
additional structure of \emph{totality}: a \emph{non-uniform totality
  space} (NUTS) is a pair $X=(\Web X,\Total X)$ where $\Web X$ is a
set and $\Total X$ is a set of subsets of $\Web X $which intuitively
represent the total, that is, terminating computations of type
$X$. This set $\Total X$ is required to coincide with its bidual for a
duality expressed in terms of non-empty intersections.
This kind of definition by duality is ubiquitous in $\LL$
since~\cite{Girard87} and has been categorically formalized as
\emph{double gluing} in~\cite{HylandSchalk03}. We don't use this
categorical formalization here however as it would not simplify the
presentation.
One nice feature of this specific duality is that the bidual of a set
of subsets of $\Web X$ is simply its upwards-closure
(wrt.~inclusion)\footnote{This new model is a major simplification
  wrt.~notions of totality on coherence spaces~\cite{Girard86} or
  Loader's totality spaces~\cite{Loader94} where biduality is much
  harder to deal with because it combines totality with a form of
  determinism.}, see
Lemma~\ref{lemma:NUTS-biorth-uppper-closed}. Given two NUTS $X$ and
$Y$ there is a natural notion of \emph{total relation}
$t\subseteq\Web X\times\Web Y$ giving rise to a category $\NUTS$ which
is easily seen to be a Seely model of $\LL$. To turn it into a
categorical model of $\MULL$, we need a notion of strong functors
$\NUTS^n\to\NUTS$. Rather than considering them directly as functors,
we define \emph{variable non-uniform totality spaces} (VNUTS) as pairs
$\Vsnot X=(\Web{\Vsnot X},\Total{\Vsnot X})$ where
$\Web{\Vsnot X}:\REL^n\to\REL$ is a variable set and, for each tuple
$\Vect X=(\List X1n)$ of VNUTS's, $\Total{\Vsnot X}(\Vect X)$ is a
totality structure on the set
$\Strfun{\Web{\Vsnot X}}(\Web{\Vect X})$. It is also required that the
action of the functor $\Strfun{\Web{\Vsnot X}}$ on $\NUTS$ morphisms
and the strength $\Strnat{\Vsnot X}$ respect this totality
structures. Then it is easy to derive from such a VNUTS $\Vsnot X$ a
strong functor $\NUTS^n\to\NUTS$ and we prove that, equipped with
these strong functors, $\NUTS$ is a model of $\MULL$.

\paragraph{Outcome} %
One major benefit of this construction is that it gives a value to all
proofs of $\MULL$, invariant by cut-elimination. Moreover, the fact
that this value is total shows in a \emph{syntax independent way} that
when $\pi$ is for instance a $\MULL$ proof of $\Plus\One\One$ (the
type of booleans), the value associated with $\pi$ is non-empty, that
is, $\pi$ has a defined boolean value $\True$ or $\False$%
\footnote{Or both because our $\NUTS$ model accepts
  non-determinism. By adding a \emph{non-uniform} coherence relation
  as defined in~\cite{BucciarelliEhrhard99,Boudes11} to the model one
  can show that this value is actually a uniquely defined boolean. See
  also Section~\ref{sec:example-integers}.}%
.  We could also obtain this by a normalization theorem: $\pi$ reduces
to one of the two normal proofs of $\Plus\One\One$ (and if we prove
for instance a Church-Rosser theorem we will know that this proof is
unique). Such proofs would depend of course on the actual presentation
of the syntax whereas our denotational argument does not.

\paragraph{Related work}%
There is a vast literature on extending logic with fixed point that we
cannot reasonably summarize, see the discussions
in~\cite{DoumanePHD,BaeldeDoumaneSaurin16}.  Cut-elimination of such
systems has been extensively investigated, see for
instance~\cite{BrotherstonSimpson11,MomiglianoTiu12,McDowellMiller00,CamposFiore20}.
Closer to ours is the work of Santocanale~\cite{Santocanale02} and its
categorical interpretation in $\mu$-bicomplete
categories~\cite{FortierSantocanale13} which, unlike most
contributions in this field, considers also categorical
interpretations of proofs. Santocanale \emph{et al.}~consider
circular proofs whereas we use Park's rule. A deeper difference
lies in the logic itself: from an $\LL$ point of view the logic
considered by Santocanale \emph{et al.}~is \emph{purely additive
  linear logic with least and greatest fixed points $\MUALL$} which
seems too weak in our Curry-Howard perspective. And indeed
$\mu$-bicomplete categories do not provide the monoidal and
exponential structures required for interpreting $\MULL$.

In~\cite{Loader97}, that we became aware of only recently (and seems
related to the earlier report~\cite{Geuvers92}), Loader extends the
simply typed $\lambda$-calculus with inductive types and develops its
denotational semantics. His models are cartesian closed categories
$\cC$ equipped with a class of strong functors and seem very close to
ours (Section~\ref{sec:cat-models}): one might think that any of our
models yields a Loader model as its Kleisli category. This is not the
case because in a Loader model the category $\cC$ is
cocartesian%
\footnote{To account for the disjunction of his logical
  system which is crucial for defining interesting data-types such as the
  integers.} %
whereas the Kleisli category of a Seely category is not
cocartesian in general: this would require to have an iso between
$\Exclp{\Plus XY}$ and $\Plus{\Excl X}{\Excl Y}$ which is usually
absent.
%
Loader studies two concrete instances of his models: one is based on
recursion theory (partial equivalence relations) and the other on a
notion of domains with totality described as a model of $\LL$. This model
might give rise to one of our Seely models, this point requires
further studies.
Our NUTS are quite different from Loader totality
domains which feature a notion of ``consistency'' enforcing some kind
of determinism and, combined with totality, allow the Kleisli category
to be cocartesian as well.  Our model is based on $\REL$ and therefore
is compatible with
non-determinism~\cite{BucciarelliEhrhardManzonetto12} and PCF
recursion. This is important for us because we would like to consider
rules beyond Park's rule for inductive and coinductive types, based on
PCF fixed points --~with further guardedness conditions for
  guaranteeing termination~-- in the spirit
of~\cite{Coquand93,Paulin-Mohring93,Gimenez98} or even on infinite
terms in the spirit of~\cite{BaeldeDoumaneSaurin16}.

We mention also the work of
Clairambault~\cite{Clairambault09,Clairambault13} who investigates the
game with totality semantics of an extension of intuitionistic logic
with least and greatest fixed points (independently
of~\cite{Loader97,Geuvers92}). A Kleisli-like connection with his work
should be sought too.

\paragraph{Notations}
We use the following conventions: $\Vect a$ stands for a
list $(\List a1n)$. A unary operation $f$ is extended to lists of
arguments in the obvious way: $f(\Vect a)=(f(a_1),\dots,f(a_n))$. When
we write natural transformations, we very often omit the objects where
they are taken keeping them implicit for the sake of readability,
when they can easily be retrieved from the context. If $\cA$ is a
category then $\Obj\cA$ is its class of objects and if $A,B\in\Obj\cA$
then $\cA(A,B)$ is the set of morphisms from $A$ to $B$ in $\cA$ (all
the categories we consider are locally small). If
$\cF:\cA\times\cB\to\cC$ is a functor and $A\in\Obj\cA$ then
$\cF_A:\cB\to\cC$ is the functor defined by $\cF_A(B)=\cF(A,B)$ and
$\cF_A(f)=\cF(A,f)$ (we often write $A$ instead of $\Id_A$).

Most proofs can be found in an Appendix.

\section{Categorical models of $\LL$}\label{sec:LL-models}

\subsection{Seely categories.}
\label{sec}
We recall the basic notion of categorical model of $\LL$. Our main
reference is the notion of a \emph{Seely category} as presented
in~\cite{Mellies09}. We refer to that survey for all the technical
material that we do not recall here.

A Seely category is a symmetric monoidal closed category (SMCC)
$(\LCAT,\ITens,\One,\Leftu,\Rightu,\Assoc,\Sym)$ where
$\Leftu_X\in\cL(\Tens\One X,X)$, $\Rightu_X\in\cL(\Tens X\One,X)$,
$\Assoc_{X,Y,Z}\in\cL(\Tens{(\Tens XY)}{Z},\Tens{X}{(\Tens{Y}{Z})})$
and $\Sym_{X,Y}\in\cL(\Tens XY,\Tens YX)$ are natural isomorphisms
satisfying coherence diagrams that we do not record here. We use
$\Limpl XY$ for the object of linear morphisms from $X$ to $Y$,
$\Evlin\in\LCAT(\Tens{(\Limpl XY)}{X},Y)$ for the evaluation morphism
and $\Curlin$ for the linear curryfication map
$\LCAT(\Tens ZX,Y)\to\LCAT(Z,\Limpl XY)$. We assume $\LCAT$ to be
$\ast$-autonomous with dualizing object $\Sbot$ (this object is part of
the structure of a Seely category). We use $\Orth X$ for the object
$\Limpl X\Sbot$ of $\LCAT$ (the dual, or linear negation, of $X$).
It is also assumed that $\LCAT$ is cartesian with final object $\Top$,
product $\With {X_1}{X_2}$ with projections $\Proj1,\Proj2$. By
$\ast$-autonomy $\LCAT$ is cocartesian with initial object $\Zero$,
coproduct $\IPlus$ and injections $\Inj i$.

We also assume to be given a comonad $\Excl\_:\LCAT\to\LCAT$ with
counit $\Der X\in\LCAT(\Excl X,X)$ (\emph{dereliction}) and
comultiplication $\Digg X\in\LCAT(\Excl X,\Excl{\Excl X})$
(\emph{digging}) together with a strong symmetric monoidal structure
(Seely natural isos $\Seelyz:\Sone\to\Excl\Top$ and $\Seelyt$ with
$\Seelyt_{X_1,X_2}:\Tens{\Excl{X_1}}{\Excl{X_2}}\to\Exclp{\With{X_1}{X_2}}$
for the functor $\Excl\_$, from the symmetric monoidal
category $(\LCAT,\IWith)$ to the symmetric monoidal category
$(\LCAT,\ITens)$ satisfying an additional coherence condition
wrt.~$\Digg{}$).
This strong monoidal structure allows to define a lax monoidal
structure $(\Monoidalz,\Monoidalt)$ of $\Excl\_$ from $(\LCAT,\ITens)$
to itself. More precisely $\Monoidalz\in\LCAT(\Sone,\Excl\Sone)$ and
$\Monoidalt_{X_1,X_2}\in
\LCAT(\Tens{\Excl{X_1}}{\Excl{X_2}},\Exclp{\Tens{X_1}{X_2}})$ are
defined using $\Seelyz$, $\Seelyt$, $\Der{}$ and $\Digg{}$ (and are
not isos in most cases). Also, for each object $X\in\Obj\cL$, there is
a canonical structure of commutative $\ITens$-comonoid on $\Excl X$
given by $\Weak X\in\cL(\Excl X,\One)$ and
$\Contr X\in\cL(\Excl X,\Tens{\Excl X}{\Excl X})$. The definition of
these morphisms involves all the structure of $\Excl\_$ explained
above, and in particular the Seely isos.
%
We use $\Int\_$ for the ``De Morgan dual'' of $\Excl\_$:
$\Int X=\Orthp{\Exclp{\Orth X}}$ and similarly for morphisms.

\subsection{Oplax monoidal comonads}
\label{sec:oplax-monoidal}
Let $\cM$ be a symmetric monoidal category (with the same notations as
above for the tensor product) and $(T,\epsilon,\mu):\cM\to\cM$ be a
comonad ($\epsilon$ is the unit and $\mu$ the multiplication). An
\emph{oplax monoidal} structure on $T$ consists of a morphism
$\theta^0\in\cM(T\Sone,\Sone)$ and a natural transformation
$\theta^2_{X_1,X_2}\in\cM(T(\Tens{X_1}{X_2}),\Tens{T(X_1)}{T(X_2)})$
subject to standard symmetric monoidality and compatibility with
$\epsilon$ and $\mu$, this latter reading
$\Tensp{\epsilon_{X_1}}{\epsilon_{X_2}}\Compl\theta_{X_1,X_2}
=\epsilon_{\Tens{X_1}{X_2}}$ and:
\begin{center}
  \begin{tikzcd}
    T(\Tens{X_1}{X_2})
    \arrow[r,"\theta_{X_1,X_2}"]
    \arrow[d,"\mu_{\Tens{X_1}{X_2}}"]
    &[-0.3em]\Tens{TX_1}{TX_2}
    \arrow[r,"\Tens{\mu_{X_1}}{\mu_{X_2}}"]
    &[0.8em]\Tens{T^2X_1}{T^2X_2}\\
    T^2(\Tens{X_1}{X_2})
    \arrow[rr,"T(\theta_{X_1,X_2})"]
    &&T(\Tens{TX_1}{TX_2})
    \arrow[u,"\theta_{TX_1,TX_2}"]
  \end{tikzcd}
\end{center}
Then the Kleisli category $\cM_T$ has a canonical symmetric monoidal
structure, with unit $\Sone$ and tensor product $\Tens{X_1}{X_2}$
defined as in $\cM$ for objects. Given
$f_i\in\cM_T(X_i,Y_i)$,
$\Tensi T{f_1}{f_2}\in\cM_T(\Tens{X_1}{X_2},\Tens{Y_1}{Y_2})$ is defined as
\begin{center}
  \begin{tikzcd}
    T(\Tens{X_1}{X_2})
    \arrow[r,"\theta^2_{X_1,X_2}"]
    &\Tens{TX_1}{TX_2}
    \arrow[r,"\Tens{f_1}{f_2}"]
    &\Tens{Y_1}{Y_2}
  \end{tikzcd}.
\end{center}
Let $\Klfree T:\cM\to\cM_T$ be the canonical functor which acts as the
identity on objects and maps $f\in\cM(X,Y)$ to
$f\Compl\epsilon_X\in\cM_T(X,Y)$.

\subsection{Eilenberg-Moore category and free comodules}
\label{sec:EM-Kl-category}
Let $\LCAT$ be a Seely category.  Since $\Excl\_$ is a comonad we can
define the category $\Em\LCAT$ of $\IExcl$-coalgebras (Eilenberg-Moore
category of $\Excl\_$). An object of this category is a pair
$P=(\Coalgca P,\Coalgstr P)$ where $\Coalgca P\in\Obj\LCAT$ and
$\Coalgstr P\in\LCAT(\Coalgca P,\Excl{\Coalgca P})$ is such that
$\Der{\Coalgca P}\Compl\Coalgstr P=\Id$ and
$\Digg{\Coalgca P}\Compl\Coalgstr P=\Excl{\Coalgstr P}\Compl\Coalgstr
P$.  Then $f\in\Em\LCAT(P,Q)$ if $f\in\LCAT(\Coalgca P,\Coalgca Q)$
and $\Coalgstr Q\Compl f=\Excl f\Compl\Coalgstr P$.  The functor
$\Excl\_$ can be seen as a functor from $\LCAT$ to $\Em\LCAT$ mapping
$X$ to $(\Excl X,\Digg X)$ and $f\in\LCAT(X,Y)$ to $\Excl f$. It is
right adjoint to the forgetful functor $\Em\LCAT\to\LCAT$. Given
$f\in\LCAT(\Coalgca P,X)$, we use $\Prom f\in\Em\LCAT(P,\Excl X)$ for
the morphism associated with $f$ by this adjunction, one has
$\Prom f=\Excl f\Compl\Coalgstr P$. If $g\in\Em\LCAT(Q,P)$, we have
$\Prom f\Compl g=\Promp{f\Compl g}$.
%
Then $\Em\LCAT$ is cartesian with final object
$(\One,\Coalgstr\One=\Monoidalz)$ still denoted as $\One$ and product
$\Tens{P_1}{P_2}=(\Tens{\Coalgca{P_1}}{\Coalgca{P_2}},
\Coalgstr{\Tens{P_1}{P_2}})$ with
$\Coalgstr{\Tens{P_1}{P_2}}:
\begin{tikzcd}
  \Tens{\Coalgca{P_1}}{\Coalgca{P_2}}
  \arrow[r,"\Tens{\Coalgstr{P_1}}{\Coalgstr{P_2}}"]
  &[1em]\Tens{\Excl{\Coalgca{P_1}}}{\Excl{\Coalgca{P_2}}}
  \arrow[r,"\Monoidalt_{\Coalgca{P_1},\Coalgca{P_2}}"]
  &[0.4em]\Exclp{\Tens{\Coalgca{P_1}}{\Coalgca{P_2}}}
\end{tikzcd}$.  
This category is also cocartesian with initial object
$(\Zero,\Coalgstr\Zero)$ still denoted as $\Zero$ and coproduct
$\Plus{P_1}{P_2}
=(\Plus{\Coalgca{P_1}}{\Coalgca{P_2}},\Coalgstr{\Plus{P_1}{P_2}})$
with $\Coalgstr{\Plus{P_1}Q}$ defined as follows. For $i=1,2$ one
defines
$h^i:\Coalgca{P_i}\to\Exclp{\Plus{\Coalgca{P_1}}{\Coalgca{P_2}}}$ as
\begin{tikzcd}
  \Coalgca P_1
  \arrow[r,"\Coalgstr{P_1}"]
  &[-1em]\Excl{\Coalgca{P_1}}
  \arrow[r,"\Excl{\Inj i}"]
  &[-1em]\Exclp{\Plus{\Coalgca{P_1}}{\Coalgca{P_2}}}
\end{tikzcd}
and then $\Coalgstr{\Plus{P_1}{P_2}}$ is the unique
morphism
$\Plus{\Coalgca{P_1}}{\Coalgca{P_2}}\to\Exclp{\Plus{\Coalgca{P_1}}{\Coalgca
    {P_2}}}$ such that $\Coalgstr{\Plus{P_1}{P_2}}\Compl\Inj i=h_i$
for $i=1,2$.
More details can be found in~\cite{Mellies09}. We use
$\Contr P\in\Em\LCAT(P,\Tens PP)$ (\emph{contraction}) for the
diagonal and $\Weak P\in\Em\LCAT(P,\One)$ (\emph{weakening}) for the
unique morphism to the final object.

\subsubsection{The $\LL$ model of free comodules on a given
  coalgebra}\label{sec:free-comodules-model}
Given an object%
\footnote{In this paper we could restrict to the case where $P$ is a
  tensor of ``free coalgebras'' $(\Excl{X_i},\Digg{X_i})$ but it is
  more natural to deal with the general case, which will be quite
  useful in further work, see Section~\ref{sec:conclusion}.}
$P$ of $\Em\LCAT$, we can define a functor
$\Fcomod P:\LCAT\to\LCAT$ which maps an object $X$ to
$\Tens{\Coalgca P}{X}$ and a morphism $f$ to $\Tens{\Coalgca
  P}{f}$. This functor is clearly an oplax monoidal comonad (with
structure maps defined using $\Weak P$, $\Contr P$ and the monoidal
structure of $\LCAT$)\footnote{The definition of this comonad uses
  only the comonoid structure of $\Coalgca P$. The $\Excl\_$-structure
  will be used later.}.  A coalgebra for this comonad is a
\emph{$P$-comodule}.
By Section~\ref{sec:oplax-monoidal} the Kleisli category
$\Kcomod\LCAT P=\Klp\LCAT{\Fcomod P}$ of this comonad (that is, the
category of free $P$-comodules) has a canonical structure of symmetric
monoidal category (SMC).  We set
$\Klfree P=\Klfree{\Fcomod P}:\LCAT\to\Kcomod\LCAT P$.  Girard showed
in~\cite{Girard98c}
%
%
that $\Kcomod\LCAT P$ is a Seely model of $\LL$ with operations on
objects defined in the same way as in $\LCAT$, and using the coalgebra
structure of $P$ for the operations on morphisms. Intuitively, $P$
should be considered as a given context and $\Kcomod\LCAT P$ as a
model in this context. This idea appears at various places in the
literature, see for instance~\cite{CurienFioreMunch16,UustaluVene08}.
Let us summarize this construction.
If $f_i\in\Kcomod\LCAT P(X_i,Y_i)$ for $i=1,2$ then
$\Mtens P{f_1}{f_2}=\Tensi{\Fcomod P}{f_1}{f_2}\in\Kcomod\LCAT
P(\Tens{X_1}{X_2},\Tens{Y_1}{Y_2})$ is given by
\begin{center}
\begin{tikzcd}
  \Coalgca P\ITens X_1\ITens X_2
  \arrow[r,"\Contr{\Coalgca P}\ITens\Id"]
  &[1.4em]\Coalgca P\ITens \Coalgca P\ITens X_1\ITens X_2
  \arrow[d,phantom,"\nvisom"]\\[-1em]
  \Tens{Y_1}{Y_2}
  &[-1.2em]\Coalgca P\ITens X_1\ITens \Coalgca P\ITens X_2
  \arrow[l,swap,"\Tens{f_1}{f_2}"]
\end{tikzcd}  
\end{center}
The object of linear morphisms from $X$ to $Y$ in $\Kcomod\LCAT P$ is
$\Limpl XY$, and the evaluation morphism
$\Mevlin P\in\Kcomod\LCAT P(\Tens{(\Limpl XY)}{X},Y)$ is simply
$\Klfree P(\Evlin)$.
Then it is easy to check that if
$f\in\Kcomod\LCAT P(\Tens ZX,Y)$, that is
$f\in\LCAT(\Coalgca P\ITens Z\ITens X,Y)$, the morphism
$\Curlin(f)\in\Kcomod\LCAT P(Z,\Limpl XY)$ satisfies the required
monoidal closedness equations. With these definitions, the category
$\Kcomod\LCAT P$ is $\ast$-autonomous, with $\Sbot$ as dualizing
object. Specifically, given $f\in\Kcomod\LCAT P(X,Y)$, then $\Morth Pf$
is the following composition of morphisms:
\begin{center}
  \begin{tikzcd}
    \Tens{\Coalgca P}{\Orth Y}
    \arrow[r,"\Tens{\Coalgca P}{\Orth f}"]
    &\Coalgca P\ITens(\Limpl{\Coalgca P}{\Orth X})
    \arrow[r,"\Evlin"]
    &[-1em]\Orth X
  \end{tikzcd}
\end{center}
using implicitly the iso between $\Orthp{\Tens{Z}{X}}$ and
$\Limpl Z{\Orth X}$, and the $\ast$-autonomy of $\LCAT$ allows to prove
that indeed $\Mbiorth Pf=f$.
The category $\Kcomod\LCAT P$ is easily seen to be cartesian with
$\Top$ as final object, $\With{X_1}{X_2}$ as cartesian product (and
projections defined in the obvious way, applying $\Klfree P$ to the
projections of $\LCAT$). Last we define a functor
$\Mexcl P\_:\Kcomod\LCAT P\to\Kcomod\LCAT P$ by $\Mexcl PX=\Excl X$
and, given $f\in\Kcomod\LCAT P(X,Y)$, we define
$\Mexcl Pf\in\Kcomod\LCAT P(\Excl X,\Excl Y)$ as
\begin{tikzcd}
  \Tens{\Coalgca P}{\Excl X} \arrow[r,"\Tens{\Coalgstr P}{\Excl X}"]
  &\Tens{\Excl{\Coalgca P}}{\Excl X} \arrow[r,"\Monoidalt"]
  &[-1em]\Exclp{\Tens{{\Coalgca P}}{X}} \arrow[r,"\Excl f"]
  &[-1em]\Excl Y
\end{tikzcd}
and this functor has a comonad structure $(\Mder P,\Mdigg P)$ defined
by $\Mder P=\Klfree P(\Der{})$ and
$\Mdigg P=\Klfree P(\Digg{})$\footnote{The definition of $\Mexcl Pf$
  requires $P$ to be a $\oc$-coalgebra and not simply a commutative
  $\ITens$-comonoid. Of course if $\oc$ is the free exponential as
  in~\cite{Girard98c} the latter condition implies the former.}.

\begin{remark}
  Any $p\in\Em\LCAT(P,Q)$ induces a functor
  $\Kcomod\LCAT p:\Kcomod\LCAT Q\to\Kcomod\LCAT P$ which acts as the
  identity on objects and maps $f\in\Kcomod\LCAT Q(X,Y)$ to
  $\Kcomod\LCAT p(f)=f\Compl(\Tens pX)\in\Kcomod\LCAT P(X,Y)$. This
  functor is strict monoidal symmetric and preserves all the
  constructions of $\LL$, for instance
  $\Kcomod\LCAT p(\Mdigg Q)=\Mdigg P$ (simply because
  $\Kcomod\LCAT p\Comp\Klfree Q=\Klfree P$) and also
  $\Kcomod\LCAT p(\Mexcl Qf)=\Mexcl P{(\Kcomod\LCAT p(f))}$. We can
  actually consider $\Kcomod\LCAT\_$ as a functor from $\Op{\Em\LCAT}$
  to the category of Seely categories and functors which preserve
  their structure on the nose. This functor could probably more
  suitably be considered as a fibration in the line
  of~\cite{PowerRobinson97}, Section~7.
\end{remark}

\subsection{Strong functors on $\LCAT$}\label{sec:gen-strong-functors}

Given $n\in\Nat$, an \emph{$n$-ary strong functor} on $\LCAT$ is a
pair $\Vcsnot F=(\Strfun{\Vcsnot F},\Strnat{\Vcsnot F})$ where
$\Strfun{\Vcsnot F}:\LCAT^n\to\LCAT$ is a functor and
$\Strnat{\Vcsnot F}_{X,\Vect Y}\in\LCAT(\Tens{\Excl X}{\Strfun{\Vcsnot
    F}(\Vect Y)},\Strfun{\Vcsnot F}(\Tens{\Excl X}{\Vect Y}))$ is a
natural transformation, called the \emph{strength} of $\Vcsnot F$. We
use the notation
$\Tens Z{(\List Y1n)}=(\Tens{Z}{Y_1},\dots,\Tens{Z}{Y_n})$. It is
assumed moreover that the diagrams of
Figure~\ref{fig:strength-monoidality} commute, expressing the
monoidality of this strength as well as its compatibility with the
comultiplication of $\Excl\_$.
\begin{figure*}[t]
  {\footnotesize
    \begin{tikzcd}
      (\Excl{X_1}\ITens\Excl{X_2}) \ITens{\Strfun{\Vcsnot
          F}}(\Vect{Y}) \arrow[r,"\Tens{\Seelyt}{\Strfun{\Vcsnot
          F}(\Vect Y)}"] \arrow[d,swap,"\Tens{\Excl{X_1}}
      {\Strnat{\Vcsnot F}_{X_2,\Vect Y}}"]
      &[2.2em]\Exclp{\With{X_1}{X_2}}\ITens \Strfun{\Vcsnot F}(\Vect
      Y) \arrow[dd,"\Strnat{\Vcsnot F}_{\With{X_1}{X_2},\Vect Y}"]
      \\
      \Excl{X_1}\ITens \Strfun{\Vcsnot F}(\Excl{X_2}\ITens\Vect Y)
      \arrow[d,swap,"\Strnat{\Vcsnot F}_{X_1, \Tens{\Excl{X_2}}{\Vect
          Y}}"] &
      \\
      \Strfun{\Vcsnot F}(\Excl{X_1}\ITens\Excl{X_2} \ITens
      \Strfun{\Vcsnot F}(\Vect Y)) \arrow[r,"\Strfun{\Vcsnot
        F}(\Tens{\Seelyt}{\Vect Y})"] &\Strfun{\Vcsnot
        F}(\Exclp{\With{X_1}{X_2}}\ITens\Vect Y)
    \end{tikzcd}
    \quad
    \begin{tikzcd}
      \Tens{\Sone}{\Strfun{\Vcsnot F}(\Vect Y)}
      \arrow[r,"\Tens{\Seelyz}{\Strfun{\Vcsnot F}(\Vect Y)}"]
      \arrow[d,phantom,"\visom"]
      &[2.2em]\Tens{\Excl\Top}{\Strfun{\Vcsnot F}(\Vect Y)}
      \arrow[d,"\Strnat{\Vcsnot F}_{\Top,\Vect Y}"]
      \\
      \Strfun{\Vcsnot F}(\Tens\Sone{\Vect Y})
      \arrow[r,"\Strfun{\Vcsnot F}(\Tens{\Seelyz}{\Vect Y})"]
      &\Strfun{\Vcsnot F}(\Excl{\Top}\ITens\Vect Y)
    \end{tikzcd}
    \quad
    \begin{tikzcd}
      \Tens{\Excl X}{\Strfun{\Vcsnot F}(\Vect Y)}
      \arrow[r,"\Tens{\Digg X}{\Strfun{\Vcsnot F}(\Vect Y)}"]
      \arrow[d,swap,"\Strnat{\Vcsnot F}_{X,\Vect Y}"]
      &[2.2em]\Tens{\Excll X}{\Strfun{\Vcsnot F}(\Vect Y)}
      \arrow[d,"\Strnat{\Vcsnot F}_{\Excl X,\Vect Y}"]\\
      \Strfun{\Vcsnot F}(\Tens{\Excl X}{\Vect Y})
      \arrow[r,"\Vcsnot{\Vcsnot F}(\Tens{\Digg X}{\Vect Y})"]
      &\Strfun{\Vcsnot F}(\Tens{\Excll X}{\Vect Y})
    \end{tikzcd}
}
\caption{Monoidality and $\Digg{}$ diagrams for strong functors}
\label{fig:strength-monoidality}
\end{figure*}
The main purpose of this definition is that for any object $P$ of
$\Em\cL$ one can lift $\Vcsnot F$ to a functor
$\Mfun{\Vcsnot F}P:\Kcomod\LCAT P^n\to\Kcomod\LCAT P$ as
follows. First one sets
$\Mfun{\Vcsnot F}P(\Vect X)=\Strfun{\Vcsnot F}(\Vect X)$. Then, given
$\Vect f\in\Kcomod\LCAT P^n(\Vect X,\Vect Y)$ we define
$\Mfun{\Vcsnot F}{P}(\Vect f)\in\Kcomod\LCAT P({\Vcsnot F}(\Vect
X),{\Vcsnot F}(\Vect Y))$ as %

{\footnotesize
\begin{center}
  \begin{tikzcd}
    \Tens{\Coalgca P}{\Vcsnot F(\Vect X)}
    \arrow[r,"\Tens{\Coalgstr P}{\Id}"]
    &[1.2em]\Tens{\Excl{\Coalgca P}}{\Vcsnot F(\Vect X)}
    \arrow[r,"\Strnat{\Vcsnot F}"]
    &[-1em]\Vcsnot F(\Tens{\Excl{\Coalgca P}}{\Vect X})
    \arrow[r,"\Strfun{\Vcsnot F}(\Der{\Coalgca P}\ITens\Vect X)"]
    &[2em]\Vcsnot F(\Tens{{\Coalgca P}}{\Vect X})
    \arrow[d,swap,"\Vcsnot F(\Vect f)"]\\
    &&&\Vcsnot F(\Vect Y)
  \end{tikzcd}
\end{center}%
}%
\noindent
The fact that we have defined a functor results from the three
diagrams of Figure~\ref{fig:strength-monoidality} and from the
definition of $\Weak P$ and $\Contr P$ based on the Seely
isomorphisms.

\begin{remark}
  Since the seminal work of Moggi~\cite{Moggi89} strong functors play
  a central role in semantics for representing
  \emph{effects}.
  Our adaptation of this notion to the present $\LL$ setting follows
  the definition of an $\cL$-tensorial strength
  in~\cite{KobayashiS97}.
\end{remark}

\subsubsection{Operations on strong functors}%
\label{sec:strong-fun-LL-operations}%
There is an obvious unary identity strong functor $\Strid$ and for
each object $Y$ of $\LCAT$ there is an $n$-ary $Y$-valued constant
strong functor $\Strcst Y$; in the first case the strength natural
transformation is the identity morphism and in the second case, it is
defined using $\Weak{\Excl X}$. Let $\Vcsnot F$ be an $n$-ary strong
functor and $\List{\Vcsnot G}1n$ be $k$-ary strong functors. Then one
defines a $k$-ary strong functor
$\Vcsnot H=\Vcsnot F\Comp(\List{\Vcsnot G}1n)$: the functorial
component $\Strfun{\Vcsnot H}$ is defined in the obvious compositional
way. The strength is%

{\footnotesize
\begin{center}
  \begin{tikzcd}
    \Tens{\Excl X}{\Strfun{\Vcsnot H}(\Vect Y)}
    \arrow[r,"\Strnat{\Vcsnot F}"]
    &[-1em]\Strfun{\Vcsnot F}((\Tens{\Excl X}
    {\Strfun{\Vcsnot G_i}(\Vect Y)})_{i=1}^n)
    \arrow[r,"\Strfun{\Vcsnot F}((\Strnat{\Vcsnot G_i})_{i=1}^k)"]
    &[1.2em]\Strfun{\Vcsnot F}((
    {\Strfun{\Vcsnot G_i}(\Tens{\Excl X}{\Vect Y})})_{i=1}^n)
  \end{tikzcd}
\end{center}
}%
\noindent%
and is easily seen to satisfy the commutations of
Figure~\ref{fig:strength-monoidality}.  Given an $n$-ary strong
functor, we can define its \emph{De Morgan dual} $\Orth{\Vcsnot F}$
which is also an $n$-ary strong functor. On objects, we set
$\Strfun{\Orth{\Vcsnot F}}(\Vect Y)=\Orth{\Strfun{\Vcsnot
    F}(\Orth{\Vect Y})}$ and similarly for morphisms. The strength of
$\Orth{\Vcsnot F}$ is defined as the Curry transpose of the following
morphism (remember that
$\Limpl{\Excl X}{\Orth{\Vect Y}}=\Orthp{\Tens{\Excl X}{\Vect Y}}$ up
to canonical iso):%

{\footnotesize%
\begin{center}
  \begin{tikzcd}
    \Excl X\ITens{\Orth{\Strfun{\Vcsnot F}(\Orth{\Vect Y})}}
    \ITens\Strfun{\Vcsnot F}(\Limpl{\Excl X}{\Orth{\Vect Y}})
    \arrow[r,phantom,"\isom"]
    &[-1em]\Excl X
    \ITens\Strfun{\Vcsnot F}(\Limpl{\Excl X}{\Orth{\Vect Y}})
    \ITens{\Orth{\Strfun{\Vcsnot F}(\Orth{\Vect Y})}}
    \arrow[d,"\Strnat{\Vcsnot F}\ITens\Id"]\\[-0.2em]
    \Strfun{\Vcsnot F}(\Orth{\Vect Y})
    \ITens{\Orth{\Strfun{\Vcsnot F}(\Orth{\Vect Y})}}
    \arrow[d,swap,"\Evlin\Compl\Sym"]
    &
    \Strfun{\Vcsnot F}(\Excl X
    \ITens(\Limpl{\Excl X}{\Orth{\Vect Y}}))
    \ITens{\Orth{\Strfun{\Vcsnot F}(\Orth{\Vect Y})}}
    \arrow[l,swap,"\Strfun{\Vcsnot F}(\Evlin)\ITens\Id"]\\[-0.2em]
    \Sbot &
  \end{tikzcd}
\end{center}
}%
\noindent
Then it is possible to prove, using the $\ast$-autonomy of $\LCAT$, that
$\Biorth{\Vcsnot F}$ and $\Vcsnot F$ are canonically isomorphic (as
strong functors)\footnote{In the concrete settings considered in this
  paper, these canonical isos are actually identity maps.}.
As a direct consequence of the definition of
$\Orth{\Vcsnot F}$ and of the canonical iso between
$\Biorth{\Vcsnot F}$ and $\Vcsnot F$ we get:

\begin{lemma}\label{lemma:strfun-comp-orth}
  $\Orthp{\Vcsnot F\Comp(\List{\Vcsnot G}1n)}=\Orth{\Vcsnot
    F}\Comp(\List{\Orth{\Vcsnot G}}1n)$ up to canonical iso.
\end{lemma}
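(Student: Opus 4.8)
The plan is to verify the claimed isomorphism separately on the functorial component and on the strength, using only the defining equations of the De~Morgan dual and of composition (Section~\ref{sec:strong-fun-LL-operations}) together with two biduality isos already available: the $\ast$-autonomy iso $\Biorth Z\Isom Z$, natural in an object $Z$ of $\LCAT$, and the iso $\Biorth{\Vcsnot G_i}\Isom\Vcsnot G_i$ of strong functors recalled just before the statement. First I would unfold both functorial components at a tuple $\Vect Y$. By the definition of composition and of the dual, $\Strfun{\Orth{\Vcsnot F}\Comp(\List{\Orth{\Vcsnot G}}1n)}(\Vect Y)=\Orth{\Strfun{\Vcsnot F}\bigl((\Orth{\Strfun{\Orth{\Vcsnot G_i}}(\Vect Y)})_{i=1}^n\bigr)}$, and since $\Strfun{\Orth{\Vcsnot G_i}}(\Vect Y)=\Orth{\Strfun{\Vcsnot G_i}(\Orth{\Vect Y})}$ the $i$-th argument of $\Strfun{\Vcsnot F}$ is $\Biorth{\Strfun{\Vcsnot G_i}(\Orth{\Vect Y})}$. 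Applying $\Biorth Z\Isom Z$ coordinatewise at $Z=\Strfun{\Vcsnot G_i}(\Orth{\Vect Y})$ and using functoriality of $\Strfun{\Vcsnot F}$, this object is canonically isomorphic to $\Orth{\Strfun{\Vcsnot F}\bigl((\Strfun{\Vcsnot G_i}(\Orth{\Vect Y}))_{i=1}^n\bigr)}=\Strfun{\Orthp{\Vcsnot F\Comp(\List{\Vcsnot G}1n)}}(\Vect Y)$. The analogous computation on morphisms, using naturality of $\Biorth Z\Isom Z$, shows that these object-level isos $\phi_{\Vect Y}$ are natural in $\Vect Y$, so the two functorial components agree up to a canonical iso $\phi$.

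It then remains to check that $\phi$ also intertwines the two strengths. The crucial point is that $\phi$ is not an arbitrary iso: in each coordinate it is the biduality iso $\Biorth{\Vcsnot G_i}\Isom\Vcsnot G_i$, which is an iso \emph{of strong functors} and hence commutes with $\Strnat{\Vcsnot G_i}$, transported through $\Strfun{\Vcsnot F}$ and the outer dualization. I would therefore argue that both strength-constructions respect isos of strong functors in their arguments. On the one hand, the composition-strength of Section~\ref{sec:strong-fun-LL-operations} is built from $\Strnat{\Vcsnot F}$ and the $\Strnat{\Orth{\Vcsnot G_i}}$ purely by pre/post-composition and by applying $\Strfun{\Vcsnot F}$, and is thus natural in its strong-functor arguments. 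On the other hand, the dual-strength is the Curry transpose of a morphism assembled from $\Strnat{\Vcsnot F}$, $\Evlin$ and the $\ast$-autonomy coherence isos. Granting that both constructions are functorial with respect to morphisms of strong functors, the two strengths of $\Orth{\Vcsnot F}\Comp(\List{\Orth{\Vcsnot G}}1n)$ and of $\Orthp{\Vcsnot F\Comp(\List{\Vcsnot G}1n)}$ are images of the single family $(\Biorth{\Vcsnot G_i}\Isom\Vcsnot G_i)_{i}$ under the same natural construction, so they are intertwined by $\phi$.

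The main obstacle is establishing this functoriality for the dual-strength, \IE that conjugating $\Vcsnot F$ by an iso of strong functors conjugates the defining Curry transpose by the induced iso. Concretely this is a diagram chase in $\LCAT$ tracking the unique occurrence of $\Evlin$ together with the implicit isos $\Limpl{\Excl X}{\Orth{\Vect Y}}\Isom\Orthp{\Tens{\Excl X}{\Vect Y}}$ and $\Biorth Z\Isom Z$; it closes using the $\ast$-autonomy coherence and the naturality of $\Evlin$ and of $\Curlin$, exactly as in the verification (alluded to before the statement) that $\Biorth{\Vcsnot F}\Isom\Vcsnot F$ holds as an iso of strong functors. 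Once this is in hand the lemma is, as the text says, a direct consequence: the functor parts match by the first computation, and the strengths match because $\phi$ is assembled from strong-functor isos transported through constructions that preserve them. I expect the bookkeeping of the several dualizations to be the only genuinely delicate aspect, and the concrete settings of the paper, where these canonical isos are identities, provide a useful sanity check.
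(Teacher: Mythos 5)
Your proposal is correct and follows essentially the same route as the paper: the paper gives no detailed proof, stating only that the lemma is a direct consequence of the definition of $\Orth{\Vcsnot F}$ together with the canonical iso of strong functors $\Biorth{\Vcsnot F}\Isom\Vcsnot F$, and your unfolding of the functorial components plus coordinatewise application of the biduality isos $\Biorth{\Vcsnot G_i}\Isom\Vcsnot G_i$ is exactly that argument. Your additional care about transporting the strengths along these isos fills in detail the paper leaves implicit, and does so consistently with the paper's assertion that biduality holds as an iso of strong functors.
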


The bifunctor $\mathord\ITens$ can be turned into a strong functor:
one defines the strength as%
\footnote{This definition, as well as the following one, shows that
  our assumption that the strength is available for ``context object''
  of shape $\Excl X$ only cannot be disposed of.}%

{\footnotesize
  \begin{center}
    \begin{tikzcd}
      \Excl X\ITens Y_1\ITens Y_2
      \arrow[r,"\Contr{\Excl X}\ITens\Id"]
      &[2em]\Excl X\ITens \Excl X\ITens Y_1\ITens Y_2
      \arrow[r,phantom,"\isom"]
      &[-1em]\Excl X\ITens Y_1\ITens \Excl X\ITens Y_2
    \end{tikzcd}
  \end{center}
}%
\noindent
By De Morgan duality, this endows $\IPar$ with a strength as well. The
bifunctor $\IPlus$ is also endowed with a strength, simply using the
distributivity of $\ITens$ over $\IPlus$ (which results from the
monoidal closedness of $\LCAT$). By duality again, $\IWith$ inherits a
strength. The functor $\Excl\_$ is equipped with the strength\\
{\footnotesize
  \begin{tikzcd}
    \Tens{\Excl X}{\Excl Y}
    \arrow[r,"\Tens{\Digg X}{\Excl Y}"]
    &[0.6em]\Tens{\Excl{\Excl X}}{\Excl Y}
    \arrow[r,"\Monoidalt"]
    &[-1em]\Exclp{\Tens{{\Excl X}}{Y}}
  \end{tikzcd}
}.

\subsection{Fixed Points of strong functors}
\label{sec:fixpoints-functors}
The following facts are standard in the
literature on fixed points of functors.
\begin{definition}
  Let $\cA$ be a category and $\cF:\cA\to\cA$ be a functor. A
  \emph{coalgebra}%
  \footnote{Not to be confused with the coalgebras of
    Section~\ref{sec:EM-Kl-category} which must satisfy additional
    properties of compatibility with the comonad structure of
    $\Excl\_$.}%
  of $\cF$ is a pair $(A,f)$ where $A$ is an object of $\cA$ and
  $f\in\cA(A,\cF(A))$. Given two coalgebras $(A,f)$ and $(A',f')$ of
  $\cF$, a coalgebra morphism from $(A,f)$ to $(A',f')$ is an
  $h\in\cA(A,A')$ such that $f'\Compl h=\cF(h)\Compl f$.
  The category of coalgebras of the functor $\cF$ will be denoted as
  $\COALGFUN\cA\cF$. The notion of algebra of an endofunctor is
  defined dually (reverse the directions of the arrows $f$ and $f'$)
  and the corresponding category is denoted as $\ALGFUN\cA\cF$.
\end{definition}


By Lambek's Lemma, if $(A,f)$ with $f\in\cA(A,\cF(A))$ is a final
object in $\COALGFUN\cA\cF$ then $f$ is an iso.  We assume that this
iso is always the identity%
\footnote{This assumption is highly debatable from the view point of
  category theory where the notion of equality of objects is not
  really meaningful. It will be dropped in a longer version of this
  paper.}%
as this holds in our concrete models so that this final object
$(\Fungfp\cF,\Id)$ satisfies $\cF(\Fungfp\cF)=\Fungfp\cF$. We focus on
coalgebras rather than algebras for reasons which will become clear
when we deal with fixed points of strong functors.
This universal property of $\Fungfp{\cF}$ gives us a powerful tool for
proving equalities of morphisms.
\begin{lemma}\label{lemma:equations-final-coalgebra}
  Let $A\in\Obj\cA$ and let
  $f_1,f_2\in\cA(A,\Fungfp{\cF})$. If there exists
  $l\in\cA(A,\cF(A))$ such that
  $\cF(f_i)\Compl l=f_i$ for $i=1,2$, then $f_1=f_2$.
\end{lemma}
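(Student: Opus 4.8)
The plan is to read the hypothesis $\cF(f_i)\Compl l=f_i$ as the statement that $f_1$ and $f_2$ are both coalgebra morphisms from the coalgebra $(A,l)$ to the final coalgebra $(\Fungfp\cF,\Id)$, and then to invoke the uniqueness of morphisms into a final object. No induction or diagram chase is needed; the result is an immediate consequence of finality once the hypothesis is rephrased correctly.

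First I would observe that $(A,l)$ is indeed an object of $\COALGFUN\cA\cF$, since by assumption $l\in\cA(A,\cF(A))$. Recall that, under the convention adopted just above, the final coalgebra is $(\Fungfp\cF,\Id)$ with structure map the identity (so that $\cF(\Fungfp\cF)=\Fungfp\cF$). By the definition of coalgebra morphism, an $h\in\cA(A,\Fungfp\cF)$ is a morphism in $\COALGFUN\cA\cF$ from $(A,l)$ to $(\Fungfp\cF,\Id)$ precisely when $\Id\Compl h=\cF(h)\Compl l$, that is, when $h=\cF(h)\Compl l$. I would then note that this is exactly the hypothesis $\cF(f_i)\Compl l=f_i$ for $h=f_i$: each $f_i$ satisfies $f_i=\cF(f_i)\Compl l$, hence each $f_i$ is a coalgebra morphism from $(A,l)$ to the final coalgebra.

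To conclude, since $(\Fungfp\cF,\Id)$ is final in $\COALGFUN\cA\cF$, there is a \emph{unique} morphism from $(A,l)$ to it; as both $f_1$ and $f_2$ are such morphisms, they must coincide, giving $f_1=f_2$. There is no genuine obstacle in this argument: the only point requiring care is the translation of the given equation into the coalgebra-morphism condition, which relies crucially on the convention that the structure map of the final coalgebra $\Fungfp\cF$ is the identity (so that the target-side of the coalgebra square disappears). Without that normalization one would instead obtain the coalgebra-morphism condition up to the Lambek iso, and the conclusion would hold only after transporting along it.
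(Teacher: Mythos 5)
Your proof is correct and is exactly the argument the paper intends: the lemma's proof is omitted there precisely because, as you observe, the hypothesis $\cF(f_i)\Compl l=f_i$ rephrases (under the convention that the final coalgebra is $(\Fungfp\cF,\Id)$ with identity structure map) the statement that each $f_i$ is a coalgebra morphism from $(A,l)$ to the final object of $\COALGFUN\cA\cF$, so uniqueness of morphisms into a final object gives $f_1=f_2$. Your closing caveat about what would happen without that normalization (equality only up to transport along the Lambek iso) is also accurate.
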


\begin{lemma}\label{lemma:functor-gfp-general}
  Let $\cF:\cB\times\cA\to\cA$ be a functor such that, for all
  $B\in\Obj\cB$, the category $\COALGFUN \cA{\cF_B}$ has a final
  object. Then there is a functor $\Fungfp\cF:\cB\to\cA$ such that
  $(\Fungfp\cF(B),\Id)$ is the final object of $\COALGFUN \cA{\cF_B}$
  (so that $\cF(B,\Fungfp\cF(B))=\Fungfp\cF(B)$) for each
  $B\in\Obj\cB$, and, for each $g\in\cB(B,B')$, $\Fungfp\cF(g)$ is
  uniquely characterized by $\cF(g,\Fungfp\cF(g))=\Fungfp\cF(g)$.
\end{lemma}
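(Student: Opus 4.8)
The plan is to construct $\Fungfp\cF$ by equipping each object $\Fungfp\cF(B)$ with a coalgebra structure for the \emph{other} parameter and then invoking finality. On objects I set $\Fungfp\cF(B)$ to be the carrier of the final object of $\COALGFUN\cA{\cF_B}$; by the standing assumption its structure map is the identity, so $(\Fungfp\cF(B),\Id)$ is final and $\cF(B,\Fungfp\cF(B))=\cF_B(\Fungfp\cF(B))=\Fungfp\cF(B)$.

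For the action on a morphism $g\in\cB(B,B')$, the key observation is that $g$ turns $\Fungfp\cF(B)$ into a coalgebra of the endofunctor $\cF_{B'}$. Indeed, the morphism $\cF(g,\Fungfp\cF(B))$ (that is, $\cF(g,\Id_{\Fungfp\cF(B)})$) has source $\cF(B,\Fungfp\cF(B))=\Fungfp\cF(B)$ and target $\cF(B',\Fungfp\cF(B))=\cF_{B'}(\Fungfp\cF(B))$, so $(\Fungfp\cF(B),\cF(g,\Fungfp\cF(B)))$ is an object of $\COALGFUN\cA{\cF_{B'}}$. I then define $\Fungfp\cF(g)\in\cA(\Fungfp\cF(B),\Fungfp\cF(B'))$ to be the unique coalgebra morphism from this coalgebra to the final coalgebra $(\Fungfp\cF(B'),\Id)$ of $\cF_{B'}$.

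Next I would unfold the coalgebra-morphism law. Writing $k=\Fungfp\cF(g)$, the law reads $\Id\Compl k=\cF_{B'}(k)\Compl\cF(g,\Fungfp\cF(B))$, i.e.\ $k=\cF(B',k)\Compl\cF(g,\Fungfp\cF(B))$. Since composition in $\cB\times\cA$ satisfies $(\Id_{B'},k)\Compl(g,\Id_{\Fungfp\cF(B)})=(g,k)$, bifunctoriality of $\cF$ collapses the right-hand side to $\cF(g,k)$. Hence the coalgebra-morphism condition is \emph{exactly} the equation $k=\cF(g,k)$, and finality yields that $\Fungfp\cF(g)$ is the unique morphism with $\cF(g,\Fungfp\cF(g))=\Fungfp\cF(g)$, as required.

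It then remains to check functoriality, which I would establish purely through this uniqueness. For identities, $\cF(\Id_B,\Id)=\Id_{\cF(B,\Fungfp\cF(B))}=\Id_{\Fungfp\cF(B)}$, so $\Id_{\Fungfp\cF(B)}$ solves the characterizing equation for $\Fungfp\cF(\Id_B)$ and therefore equals it. For composites $g\in\cB(B,B')$ and $g'\in\cB(B',B'')$, I would check that $\Fungfp\cF(g')\Compl\Fungfp\cF(g)$ satisfies the equation characterizing $\Fungfp\cF(g'\Compl g)$: using $(g',\Fungfp\cF(g'))\Compl(g,\Fungfp\cF(g))=(g'\Compl g,\Fungfp\cF(g')\Compl\Fungfp\cF(g))$ with bifunctoriality, together with the two characterizing equations $\cF(g,\Fungfp\cF(g))=\Fungfp\cF(g)$ and $\cF(g',\Fungfp\cF(g'))=\Fungfp\cF(g')$, gives $\cF(g'\Compl g,\Fungfp\cF(g')\Compl\Fungfp\cF(g))=\Fungfp\cF(g')\Compl\Fungfp\cF(g)$, and uniqueness concludes. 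The one genuinely delicate point, which I would emphasize, is the reduction of the coalgebra-morphism law to $k=\cF(g,k)$: it relies on reading $\cF(g,\Fungfp\cF(B))$ as $\cF(g,\Id)$, on the product-category composition identity, and crucially on the convention that the final coalgebra's structure map is $\Id$; everything else is routine diagram chasing.
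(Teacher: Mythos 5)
Your proposal is correct and follows essentially the same route as the paper: define $\Fungfp\cF(g)$ as the unique coalgebra morphism from $(\Fungfp\cF(B),\cF(g,\Fungfp\cF(B)))$ to the final $\cF_{B'}$-coalgebra, collapse the coalgebra-morphism law to $\cF(g,\Fungfp\cF(g))=\Fungfp\cF(g)$ by bifunctoriality, and derive functoriality from the uniqueness of solutions to such fixed-point equations. The only cosmetic difference is that where you rederive this uniqueness directly from finality, the paper packages it as a separate statement (Lemma~\ref{lemma:equations-final-coalgebra}) and invokes it with $l=\cF(g'\Compl g,\Fungfp\cF(B))$.
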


We consider now the same $\Fungfp\cF$ operation applied to strong
functors on a model $\LCAT$ of $\LL$. Let $\Vcsnot F$ be an $n+1$-ary
strong functor on $\LCAT$ (so that $\Strfun{\Vcsnot F}$ is a functor
$\LCAT^n\times\LCAT\to\LCAT$). Assume that for each
$\Vect X\in\Obj{\LCAT^n}$ the category
$\COALGFUN{\LCAT}{\Strfun{\Vcsnot F}_{\Vect X}}$ has a final
object. We have defined $\Fungfp{\Strfun{\Vcsnot F}}:\LCAT^n\to\LCAT$
characterized by
$\Strfun{\Vcsnot F}(\Vect X,\Fungfp{\Strfun{\Vcsnot F}}(\Vect
X))=\Fungfp{\Strfun{\Vcsnot F}}(\Vect X)$ and
$\Strfun{\Vcsnot F}(\Vect f,\Fungfp{\Strfun{\Vcsnot F}}(\Vect
f))=\Fungfp{\Strfun{\Vcsnot F}}(\Vect f)$ for all
$\Vect f\in\LCAT^n(\Vect X,\Vect{X'})$
(Lemma~\ref{lemma:functor-gfp-general}). For each $Y,\Vect X\in\LCAT$,
we define
$\Strnat{\Vcsnot{\Fungfp F}}_{Y,\Vect X}\in\LCAT(\Tens{\Excl
  Y}{\Fungfp{\Strfun{\Vcsnot F}}(\Vect X)},\Fungfp{\Strfun{\Vcsnot
    F}}(\Tens{\Excl Y}{\Vect X}))$. We have%

{\footnotesize%
  \begin{center}
  \begin{tikzcd}
    \Tens{\Excl Y}{\Strfun{\Fungfp{\Vcsnot F}}(\Vect X)} =\Excl
    Y\ITens \Strfun{\Vcsnot F}(\Vect X,\Strfun{\Fungfp{\Vcsnot
        F}}(\Vect X)) \arrow[r,"\Strnat{\Vcsnot F}_{Y,(\Vect X,
      \Strfun{\Fungfp{\Vcsnot F}}(\Vect X))}"] &[2em]\Strfun{\Vcsnot
      F}(\Tens{\Excl Y}{\Vect X}, \Tens{\Excl
      Y}{\Strfun{\Fungfp{\Vcsnot F}}(\Vect X)})
  \end{tikzcd}    
  \end{center}
}%
\noindent%
exhibiting a $\Strfun{\Vcsnot F}_{\Tens{\Excl Y}{\Vect X}}$-coalgebra
structure on $\Tens{\Excl Y}{\Strfun{\Fungfp{\Vcsnot F}}(\Vect
  X)}$. Since $\Strfun{\Fungfp{\Vcsnot F}}(\Tens{\Excl Y}{\Vect X})$ is
the final coalgebra of the functor
$\Strfun{\Vcsnot F}_{\Tens{\Excl Y}{\Vect X}}$, we define
$\Strnat{\Fungfp{\Vcsnot F}}_{Y,\Vect X}$ as the unique morphism
$\Tens{\Excl Y}{\Strfun{\Fungfp{\Vcsnot F}}(\Vect
  X)}\to\Strfun{\Fungfp{\Vcsnot F}}(\Tens{\Excl Y}{\Vect X})$ such that
the following diagram commutes%

  \begin{equation}\label{eq:final-coalg-strength-charact}
    {\footnotesize
      \begin{tikzcd}
      \Tens{\Excl Y}{\Strfun{\Fungfp{\Vcsnot F}}(\Vect X)}
      \arrow[r,dotted,"\Strnat{\Vcsnot F}_{Y,(\Vect X,
        \Strfun{\Fungfp{\Vcsnot F}}(\Vect X))}"]
      \arrow[d,swap,"\Strnat{\Fungfp{\Vcsnot F}}_{Y,\Vect X}"]
      &[0.5em]\Strfun{\Vcsnot F}(\Tens{\Excl Y}{\Vect X},
      \Tens{\Excl Y}{\Strfun{\Fungfp{\Vcsnot F}}(\Vect X)})
      \arrow[d,"{\Strfun{\Vcsnot F}(\Tens{\Excl Y}{\Vect X},
      \Strnat{\Fungfp{\Vcsnot F}}_{Y,\Vect X})}"]
      \\
      \Strfun{\Vcsnot F}(\Tens{\Excl Y}{\Vect X},
      \Strfun{\Fungfp{\Vcsnot F}}(\Tens{\Excl Y}{\Vect X}))
      \arrow[r,equals]
      &\Strfun{\Fungfp{\Vcsnot F}}(\Tens{\Excl Y}{\Vect X})
    \end{tikzcd}
  }
  \end{equation}
\begin{lemma}\label{lemma:strfun-gfp-general}
  Let $\Vcsnot F$ be an $n+1$-ary strong functor on $\LCAT$ such that
  for each $\Vect X\in\Obj{\LCAT^n}$, the category
  $\COALGFUN\LCAT{\Strfun{\Vcsnot F}_{\Vect X}}$ has a final object
  $\Fungfp{\Strfun{\Vcsnot F}_{\Vect X}}$. Then there is a unique
  $n$-ary strong functor $\Fungfp{\Vcsnot F}$ such that
  $\Strfun{\Fungfp{\Vcsnot F}}(\Vect X)=\Fungfp{\Strfun{\Vcsnot
      F}_{\Vect X}}$ (and hence
  $\Strfun{\Vcsnot F}(\Vect X,\Strfun{\Fungfp{\Vcsnot F}}(\Vect
  X))=\Strfun{\Fungfp{\Vcsnot F}}(\Vect X)$),
    $\Strfun{\Vcsnot F}(\Vect f,\Strfun{\Fungfp{\Vcsnot F}}(\Vect
    f))=\Strfun{\Fungfp{\Vcsnot F}}(\Vect f)$ for all
    $\Vect f\in\LCAT^n(\Vect X,\Vect{X'})$
   and
    $\Strfun{\Vcsnot F}(\Tens{\Excl Y}{\Vect
      X},\Strnat{\Fungfp{\Vcsnot F}}_{Y,\Vect X})\Compl\Strnat{\Vcsnot
      F}_{Y,(\Vect X,\Strfun{\Fungfp{\Vcsnot F}}(\Vect
      X))}=\Strnat{\Fungfp{\Vcsnot F}}_{Y,\Vect X}$.
\end{lemma}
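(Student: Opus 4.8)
The plan is to read off the functorial data from Lemma~\ref{lemma:functor-gfp-general} and then to verify, one coherence at a time, that the morphism already introduced in~\Eqref{eq:final-coalg-strength-charact} is a genuine strength. Applying Lemma~\ref{lemma:functor-gfp-general} to $\Strfun{\Vcsnot F}:\LCAT^n\times\LCAT\to\LCAT$ (with $\cB=\LCAT^n$ and $\cA=\LCAT$) produces the functor $\Strfun{\Fungfp{\Vcsnot F}}:\LCAT^n\to\LCAT$ together with the object equation $\Strfun{\Vcsnot F}(\Vect X,\Strfun{\Fungfp{\Vcsnot F}}(\Vect X))=\Strfun{\Fungfp{\Vcsnot F}}(\Vect X)$ and the morphism characterization $\Strfun{\Vcsnot F}(\Vect f,\Strfun{\Fungfp{\Vcsnot F}}(\Vect f))=\Strfun{\Fungfp{\Vcsnot F}}(\Vect f)$, which are the first two displayed equations of the statement. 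The arrow $\Strnat{\Fungfp{\Vcsnot F}}_{Y,\Vect X}$ was defined as the \emph{unique} morphism making~\Eqref{eq:final-coalg-strength-charact} commute, and that commutation is precisely the third displayed equation. It thus remains to show that $\Strnat{\Fungfp{\Vcsnot F}}$ is natural in $(Y,\Vect X)$ and satisfies the three diagrams of Figure~\ref{fig:strength-monoidality}, so that $\Fungfp{\Vcsnot F}$ is indeed a strong functor, and finally that it is unique.

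The single engine for all these verifications is Lemma~\ref{lemma:equations-final-coalgebra}: in each case the two morphisms to be equated share a codomain of the form $\Strfun{\Fungfp{\Vcsnot F}}(\Tens{\Excl{Y'}}{\Vect{X'}})$, which by construction is the \emph{final} $\Strfun{\Vcsnot F}_{\Tens{\Excl{Y'}}{\Vect{X'}}}$-coalgebra. Hence, once the common source is equipped with a suitable coalgebra structure $l$, it suffices to check that each of the two composites is a coalgebra morphism, after which they must coincide. For naturality, given $g\in\LCAT(Y,Y')$ and $\Vect h\in\LCAT^n(\Vect X,\Vect{X'})$, I would equip the source $\Tens{\Excl Y}{\Strfun{\Fungfp{\Vcsnot F}}(\Vect X)}$ with the coalgebra structure obtained by composing $\Strnat{\Vcsnot F}_{Y,(\Vect X,\Strfun{\Fungfp{\Vcsnot F}}(\Vect X))}$ with the functorial action $\Strfun{\Vcsnot F}(\Tens{\Excl g}{\Vect h},-)$, and then check, using the naturality of the strength $\Strnat{\Vcsnot F}$ of $\Vcsnot F$ and the morphism clause of Lemma~\ref{lemma:functor-gfp-general}, that both legs of the naturality square are coalgebra morphisms for this structure.

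The three coherence diagrams are handled in exactly the same spirit. Each outer square of Figure~\ref{fig:strength-monoidality} instantiated at $\Fungfp{\Vcsnot F}$ has a final coalgebra as its target, so Lemma~\ref{lemma:equations-final-coalgebra} reduces it to an equality that one discharges by combining the defining property~\Eqref{eq:final-coalg-strength-charact} of $\Strnat{\Fungfp{\Vcsnot F}}$ with the corresponding diagram for $\Vcsnot F$, which holds by hypothesis. For the $\Digg{}$-compatibility square one unfolds both sides through~\Eqref{eq:final-coalg-strength-charact} and feeds in the $\Digg{}$-square of $\Vcsnot F$; the binary monoidality square (involving $\Seelyt$) and the nullary one (involving $\Seelyz$) are treated likewise, using the matching monoidality squares of $\Vcsnot F$. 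I expect this to be the main obstacle: although each reduction is mechanical once $l$ is fixed, correctly identifying the coalgebra structure on the source and tracking the Seely isomorphisms $\Seelyz,\Seelyt$ and the digging morphism through the nested applications of $\Strfun{\Vcsnot F}$ is where the bookkeeping is delicate.

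Uniqueness follows because every component is forced. The object map is pinned down by finality of the coalgebra, turned into an on-the-nose equality by the identity-iso convention adopted just before Lemma~\ref{lemma:equations-final-coalgebra}; the morphism map is the unique arrow characterized in Lemma~\ref{lemma:functor-gfp-general}; and the strength is the unique arrow satisfying~\Eqref{eq:final-coalg-strength-charact}. Any strong functor meeting the three displayed equations must therefore agree with $\Fungfp{\Vcsnot F}$ componentwise, hence equal it.
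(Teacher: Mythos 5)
Your proposal is correct and follows essentially the same route as the paper's proof: the functor part is read off from Lemma~\ref{lemma:functor-gfp-general}, the third equation is recognized as the defining commutation~\Eqref{eq:final-coalg-strength-charact}, and naturality (and the Figure~\ref{fig:strength-monoidality} diagrams) are reduced to Lemma~\ref{lemma:equations-final-coalgebra} using exactly the coalgebra structure $l=\Strfun{\Vcsnot F}(\Tens{\Excl g}{\Vect f},\Id)\Compl\Strnat{\Vcsnot F}_{Y,(\Vect X,\Strfun{\Fungfp{\Vcsnot F}}(\Vect X))}$ that the paper takes. Your uniqueness remark also matches what the paper leaves implicit, namely that each component of $\Fungfp{\Vcsnot F}$ is forced by a finality or unique-characterization property.
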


\begin{lemma}\label{lemma:strfun-lfp-general}
  Let $\Vcsnot F$ be an $n+1$-ary strong functor on $\LCAT$ such that
  for each $\Vect X\in\Obj{\LCAT^n}$, the category
  $\ALGFUN\LCAT{\Strfun{\Vcsnot F}_{\Vect X}}$ has an initial object
  $\Funlfp{\Strfun{\Vcsnot F}_{\Vect X}}$. Then there is a unique
  $n$-ary strong functor $\Funlfp{\Vcsnot F}$ such that
  $\Strfun{\Funlfp{\Vcsnot F}}(\Vect X)=\Funlfp{\Strfun{\Vcsnot
      F}_{\Vect X}}$ (and hence
  $\Strfun{\Vcsnot F}(\Vect X,\Strfun{\Funlfp{\Vcsnot F}}(\Vect
  X))=\Strfun{\Funlfp{\Vcsnot F}}(\Vect X)$),
    $\Strfun{\Vcsnot F}(\Vect f,\Strfun{\Funlfp{\Vcsnot F}}(\Vect
    f))=\Strfun{\Funlfp{\Vcsnot F}}(\Vect f)$ for all
    $\Vect f\in\LCAT^n(\Vect X,\Vect{X'})$
  and
    $\Strfun{\Vcsnot F}(\Tens{\Excl Y}{\Vect
      X},\Strnat{\Funlfp{\Vcsnot F}}_{Y,\Vect X})\Compl\Strnat{\Vcsnot
      F}_{Y,(\Vect X,\Strfun{\Funlfp{\Vcsnot F}}(\Vect
      X))}=\Strnat{\Funlfp{\Vcsnot F}}_{Y,\Vect X}$.
  Moreover $\Orth{(\Funlfp{\Vcsnot F})}=\Fungfp{(\Orth{\Vcsnot F})}$
\end{lemma}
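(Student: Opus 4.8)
The plan is to obtain the whole statement from the greatest-fixed-point case already settled in Lemma~\ref{lemma:strfun-gfp-general} by transporting it along De Morgan duality. The key observation is that, for a fixed tuple $\Vect X$, the linear negation functor $\Orth{\_}\colon\LCAT\to\Op\LCAT$ (a contravariant equivalence, by $\ast$-autonomy) sends an algebra $(A,a)$ of $\Strfun{\Vcsnot F}_{\Vect X}$, with $a\in\LCAT(\Strfun{\Vcsnot F}(\Vect X,A),A)$, to the pair $(\Orth A,\Orth a)$. Since by the very definition of the dual strong functor $\Orth{\Strfun{\Vcsnot F}(\Vect X,A)}=\Strfun{\Orth{\Vcsnot F}}(\Orth{\Vect X},\Orth A)$, the morphism $\Orth a$ lands in $\LCAT(\Orth A,\Strfun{\Orth{\Vcsnot F}}(\Orth{\Vect X},\Orth A))$, so $(\Orth A,\Orth a)$ is a coalgebra of $\Strfun{\Orth{\Vcsnot F}}_{\Orth{\Vect X}}$; algebra morphisms are carried to coalgebra morphisms in the reverse direction. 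First I would record this as an isomorphism of categories $\ALGFUN\LCAT{\Strfun{\Vcsnot F}_{\Vect X}}\cong\Op{\COALGFUN\LCAT{\Strfun{\Orth{\Vcsnot F}}_{\Orth{\Vect X}}}}$, under which initial objects on the left match final objects on the right. In particular $\Orth{\Funlfp{\Strfun{\Vcsnot F}_{\Vect X}}}=\Fungfp{\Strfun{\Orth{\Vcsnot F}}_{\Orth{\Vect X}}}$, using the convention (as for $\nu$) that the Lambek iso is the identity.

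Next I would transfer the hypothesis. Since $\ALGFUN\LCAT{\Strfun{\Vcsnot F}_{\Vect X}}$ has an initial object for every $\Vect X$, the correspondence above (applied with $\Orth{\Vect X}$ in place of $\Vect X$, and using $\Biorth{\Vect X}=\Vect X$) shows that $\COALGFUN\LCAT{\Strfun{\Orth{\Vcsnot F}}_{\Vect X}}$ has a final object for every $\Vect X$. Hence Lemma~\ref{lemma:strfun-gfp-general} applies to the strong functor $\Orth{\Vcsnot F}$ and produces $\Fungfp{(\Orth{\Vcsnot F})}$ satisfying the three listed identities. I then \emph{define} $\Funlfp{\Vcsnot F}:=\Orth{(\Fungfp{(\Orth{\Vcsnot F})})}$ and verify the required properties. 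On objects, unfolding the dual-strong-functor definition gives $\Strfun{\Funlfp{\Vcsnot F}}(\Vect X)=\Orth{\Strfun{\Fungfp{(\Orth{\Vcsnot F})}}(\Orth{\Vect X})}=\Orth{\Fungfp{\Strfun{\Orth{\Vcsnot F}}_{\Orth{\Vect X}}}}$, which by the first paragraph equals $\Biorth{\Funlfp{\Strfun{\Vcsnot F}_{\Vect X}}}=\Funlfp{\Strfun{\Vcsnot F}_{\Vect X}}$, as required.

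The identity on morphisms and the strength equation are then obtained by applying $\Orth{\_}$ to the corresponding identities for $\Fungfp{(\Orth{\Vcsnot F})}$ and rewriting along the dual-functor definitions together with the canonical iso $\Biorth{\Vcsnot F}\cong\Vcsnot F$ proved earlier; for the strength this means dualizing the characterization~\eqref{eq:final-coalg-strength-charact} and taking the Curry transpose exactly as in the definition of the strength of $\Orth{\Vcsnot F}$. Uniqueness is the precise dual of the $\nu$ argument: any candidate $G$ with $\Strfun{\Vcsnot F}(\Vect X,\Strfun G(\Vect X))=\Strfun G(\Vect X)$ equips $\Strfun G(\Vect X)$ with an algebra structure (the identity), and initiality forces $\Strfun G(\Vect X)=\Funlfp{\Strfun{\Vcsnot F}_{\Vect X}}$, while uniqueness on morphisms and of the strength follows from the initial-algebra dual of Lemma~\ref{lemma:equations-final-coalgebra}. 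Finally the ``Moreover'' is immediate: $\Orth{(\Funlfp{\Vcsnot F})}=\Biorth{(\Fungfp{(\Orth{\Vcsnot F})})}=\Fungfp{(\Orth{\Vcsnot F})}$, the last step being the canonical biorthogonality iso (an identity in the concrete models).

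The main obstacle I anticipate is the transport of the \emph{strength}: one must check that the strength of $\Orth{(\Fungfp{(\Orth{\Vcsnot F})})}$ produced by the dual construction really coincides with the one pinned down by the initial-algebra (dual of~\eqref{eq:final-coalg-strength-charact}) universal property. This requires unwinding the Curry-transpose definition and threading the canonical isos $\Orthp{\Tens ZX}\cong\Limpl Z{\Orth X}$ and $\Biorth{\Vcsnot F}\cong\Vcsnot F$ consistently; all of them are identities in the concrete models of this paper, but in the general $\ast$-autonomous setting they must be tracked carefully.
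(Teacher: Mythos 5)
Your proposal is correct and takes essentially the same route as the paper: the paper's entire proof is the one-liner ``Apply Lemma~\ref{lemma:strfun-gfp-general} to the strong functor $\Orth{\Vcsnot F}$'', and your argument simply makes explicit the duality transport this leaves implicit (algebras of $\Strfun{\Vcsnot F}_{\Vect X}$ correspond to coalgebras of $\Strfun{\Orth{\Vcsnot F}}_{\Orth{\Vect X}}$ with initial objects matching final ones, then $\Funlfp{\Vcsnot F}:=\Orth{(\Fungfp{(\Orth{\Vcsnot F})})}$). Your closing caveat about tracking the canonical isos $\Biorth{\Vcsnot F}\cong\Vcsnot F$ and $\Orthp{\Tens ZX}\cong\Limpl Z{\Orth X}$ is precisely the point the paper defers to its footnotes, where these isos are taken to be identities in the concrete models.
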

\begin{proof} Apply Lemma~\ref{lemma:strfun-gfp-general} to the strong
functor $\Orth{\Vcsnot F}$.  \end{proof}

\subsection{A categorical axiomatization of models of $\MULL$}
\label{sec:cat-models}
Our general definition of Seely categorical model of $\MULL$ is
based on the notions and results above.  We refer in particular to
Section~\ref{sec:gen-strong-functors} for the basic definitions of
operations on strong functors in our $\LL$ categorical setting.

\begin{definition}\label{def:categorical-muLL-models}
  A \emph{categorical model} or \emph{Seely model} of $\MULL$ is a
  pair $(\cL,\Vect\cL)$ where
  \begin{Enumerate}
  \item\label{it:def-muLL-model-1} $\cL$ is a Seely
    category\label{enum:seel-mull-1}
  \item\label{it:def-muLL-model-2} $\Vect\cL=(\cL_n)_{n\in\Nat}$ where
    $\cL_n$ is a class of strong functors $\cL^n\to\cL$, and
    $\cL_0=\Obj\cL$\label{enum:seel-mull-2}
  \item\label{it:def-muLL-model-3} if $\Vcstnot X\in\cL_n$ and
    $\Vcstnot X_i\in\cL_k$ (for $i=1,\dots,n$) then
    $\Vcstnot X\Comp\Vect{\Vcstnot X}\in\cL_k$ and all $k$ projection
    strong functors $\cL^k\to\cL$ belong to
    $\cL_k$\label{enum:seel-mull-3}
  \item\label{it:def-muLL-model-4} the strong functors $\ITens$ and
    $\IWith$ belong to $\cL_2$, the strong functor $\Excl\_$ belongs
    to $\cL_1$ and, if $\Vcstnot X\in\cL_n$, then
    $\Orth{\Vcstnot X}\in\cL_n$
    \label{enum:seel-mull-4}
  \item\label{it:def-muLL-model-5} and last, for all
    $\Vcstnot X\in\cL_1$ the category
    $\COALGFUN{\cL}{\Strfun{\Vsnot X}}$ (see
    Section~\ref{sec:fixpoints-functors}) has a final object.  So for
    any $\Vsnot X\in\cL_{k+1}$ there is a strong functor
    $\Fungfp{\Vsnot X}:\cL^k\to\cL$ (see
    Lemma~\ref{lemma:strfun-gfp-general}). It is required that
    $\Fungfp{\Vsnot X}\in\cL_k$.
    \label{enum:seel-mull-5}
  \end{Enumerate}
\end{definition}
\begin{remark}
  By Conditions~\ref{it:def-muLL-model-2}
  and~\ref{it:def-muLL-model-3} (applied with $n=0$), all constant
  strong functors are in $\cL_n$, for all $n$. Therefore given
  $\Vcstnot X\in\cL_{k+1}$ and $\Vect X\in\Obj \cL^k$, the strong
  functor $\Vcsnot X(\_,\Vect X)$ is in $\cL_1$ by
  Condition~\ref{it:def-muLL-model-3}. This explains why we can apply
  Lemma~\ref{lemma:strfun-gfp-general} in
  Condition~\ref{it:def-muLL-model-5}.
\end{remark}
%
Our goal is now to outline the interpretation of $\MULL$ formulas and
proofs in such a model. This requires first to describe the syntax of
formulas and proofs.

\begin{remark}
  One can certainly also define a notion of categorical model of
  $\MULL$ in a linear-non-linear adjunction setting as presented
  in~\cite{Mellies09}. This is postponed to further work.
\end{remark}

\subsubsection{Syntax of $\MULL$}\label{sec:MULL-syntax}
We assume to be given an infinite set of propositional variables
$\LLvars$ (ranged over by Greek letters $\zeta,\xi\dots$). We
introduce a language of propositional $\LL$ formulas with least and
greatest fixed points.
\begin{multline*}
  A,B,\dots \Bnfeq \One \Bnfor \Fbot \Bnfor \Tens AB \Bnfor \Par AB\\
  \Bnfor \Zero \Bnfor \Top \Bnfor \Plus AB \Bnfor \With AB \Bnfor
  \Excl A \Bnfor \Int A \Bnfor\zeta \Bnfor \Lfpll\zeta A \Bnfor
  \Gfpll\zeta A\,.
\end{multline*}
The notion of closed formula is defined as usual, the two last
constructions being the only binders.

\begin{remark}
  In contrast with second-order $\LL$ or dependent type systems
  where open formulas play a crucial role,
  in the case of fixed points, all formulas
  appearing in sequents and other syntactical devices allowing to give
  types to programs will be closed. In our setting, open types/formulas appear
  only locally, for allowing the expression of (least and greatest)
  fixed points.
\end{remark}

We can define two basic operations on formulas.
\begin{Itemize}
\item \emph{Substitution}: $\Subst AB\zeta$, taking care of not
  binding free variables (uses $\alpha$-conversion).
\item \emph{Negation} or \emph{dualization}: defined by induction on
  formulas $\Orth\One=\Fbot$, $\Orth\Fbot=\One$,
  $\Orthp{\Par AB}=\Tens{\Orth A}{\Orth B}$,
  $\Orthp{\Tens AB}=\Par{\Orth A}{\Orth B}$, $\Orth\Zero=\Top$,
  $\Orth\Top=\Zero$, $\Orthp{\With AB}=\Plus{\Orth A}{\Orth B}$,
  $\Orthp{\Plus AB}=\With{\Orth A}{\Orth B}$,
  $\Orthp{\Excl A}=\Int{\Orth A}$, $\Orthp{\Int A}=\Excl{\Orth A}$,
  $\Orth\zeta=\zeta$, $\Orthp{\Lfpll\zeta A}=\Gfpll\zeta{\Orth A}$ and
  $\Orthp{\Gfpll\zeta A}=\Lfpll\zeta{\Orth A}$. Obviously
  $\Biorth A=A$ for any formula $A$.
\end{Itemize}

\begin{remark}
  The only subtle point of this definition is negation of
  propositional variables: $\Orth\zeta=\zeta$. This entails
  $\Orthp{\Subst BA\zeta}=\Subst{\Orth B}{\Orth A}\zeta$ by an easy
  induction on $B$. If we consider $B$ as a compound logical
  connective with placeholders labeled by variables then $\Orth B$ is
  its De~Morgan dual. This definition of $\Orth\zeta$ is also a
  natural way of preventing the introduction of fixed points
  wrt.~variables with negative occurrences.
  As an illustration, if we define as usual $\Limpl AB$ as
  $\Par{\Orth A}{B}$ then we can define
  $E=\Lfpll\zeta{(\With\One{(\Limpl{\Excl\zeta}\zeta}))}$ which
  \emph{looks like} the definition of a model of the pure
  $\lambda$-calculus as a recursive type. But this is only an illusion
  since we actually have
  $E=\Lfpll\zeta{(\With\One{(\Par{\Int\zeta}\zeta}))}$ so that
  $\Limpl{\Excl E}{E}$ \emph{is not a retract} of $E$. And indeed if it were
  possible to define a type $D$ such that $\Limpl{\Excl D}{D}$ is
  isomorphic to (or is a retract of) $D$ then we would be able to type
  all pure $\lambda$-terms in our system and this would contradict the fact
  that $\MULL$ enjoys strong normalization and has a denotational
  semantics based on totality as shown below.
\end{remark}

Our logical system extends the usual unilateral sequent calculus of
classical propositional $\LL$~\cite{Girard87}, see
also~\cite{Mellies09} Section~3.1 and~3.13. In this setting we deal
with sequents $\Seq{\List A1n}$ where the $A_i$'s are formulas. It is
important to notice that the order of formulas in this list is not
relevant, which means that we keep the exchange rule implicit as it is
usual in sequent calculus. To the standard rules%
\footnote{Notice that the \emph{promotion rule} of $\LL$ has a
  condition on contexts similar to that of the rule \Ngfp{} below: to
  deduce $\Seq{\Delta,\Excl A}$ from $\Seq{\Delta,A}$ it is
  required that all formulas in the context $\Delta$ are of shape
  $\Int B$, that is $\Delta=\Int\Gamma$.} %
of~\cite{Mellies09} Fig.~1, we add the two next introduction rules for
fixed point formulas which are essentially borrowed to~\cite{Baelde12}
(see Section~\ref{sec:MULL-comments})

{\footnotesize
\[
\begin{prooftree}
  \hypo{\Seq{\Gamma,\Subst{F}{\Lfpll\zeta F}{\zeta}}}
  \infer1[\Nlfp]{\Seq{\Gamma,\Lfpll\zeta F}}
\end{prooftree}
\quad
\begin{prooftree}
  \hypo{\Seq{\Delta,A}}
  \hypo{\Seq{\Int\Gamma,\Orth A,\Subst{F}{A}{\zeta}}} 
  \infer2[\Ngfp]{\Seq{\Delta,\Int\Gamma,\Gfpll\zeta F}}
\end{prooftree}\,.
\]%
}
By taking, in the last rule, $\Delta=\Orth A$ and proving the left
premise by an axiom, we obtain the following derived rule
\[\begin{prooftree}
  \hypo{\Seq{\Int\Gamma,\Orth A,\Subst{F}{A}{\zeta}}}
  \infer1[\Ngfpbis]{\Seq{\Int\Gamma,\Orth A,\Gfpll\zeta F}}
\end{prooftree}\,.\]
  
\noindent%
The corresponding cut-elimination rule is described in
Section~\ref{sec:MULL-cut-elim}. For the other connectives (which are
the standard connectives of $\LL$), the cut-elimination rules are the
usual ones as described in~\cite{Girard87,Mellies09}.

\subsubsection{Comments}\label{sec:MULL-comments}
Let us summarize and comment the differences between our system and
Baelde's $\MUMALL$.
  \begin{Itemize}
  \item Baelde's logical system is a \emph{predicate calculus} whereas
    our system is a \emph{propositional calculus}. Indeed, Baelde is
    mainly interested in applying $\MUMALL$ to program verification
    where the predicate calculus is essential for expressing
    properties of programs. We have a Curry-Howard perspective where
    formulas are seen at types and proofs as programs and where a
    propositional logical system is sufficient.
  \item Our system has exponentials whereas Balede's system has not
    because they can be encoded in $\MUMALL$ to some extent. However
    the exponentials encoded in that way do not satisfy all required
    isos (in particular the ``Seely morphisms'' are not isos with
    Baelde's exponentials) and this is a serious issue if we want to
    encode some form of $\lambda$-calculus in the system and consider
    it as a programming language.
  \item Our \Ngfp{} rule differs from Baelde's by the fact that we
    admit a context in the right premise. Notice that all formulas of
    this context must bear a $\Int\_$ modality: this restriction is
    absolutely crucial for allowing to express the cut-elimination
    rule in Section~\ref{sec:MULL-cut-elim} which uses an operation of
    substitution of proofs in formulas and this operation uses
    structural rules on the context. The semantic counterpart of this
    operation is described in
    Section~\ref{sec:strong-fun-LL-operations} where it appears
    clearly that it uses the fact that $P$ is an object of
    $\Em\LCAT$. Such a version of \Ngfp{} with a context would be
    problematic in Baelde's system by lack of built-in exponentials.
  \end{Itemize}

\subsubsection{Syntactic functoriality of formulas}%
\label{sec:synt-functoriality}%
The reduction rule for the \Nlfp{}/\Ngfp{} cut requires the
possibility of substituting a proof for a propositional variable in a
formula. More precisely, let $(\zeta,\List\xi 1k)$ be a list of
pairwise distinct propositional variables containing all the free
variables of a formula $F$ and let $\Vect C=(\List C1k)$ be a sequence
of closed formulas. Let $\pi$ be a proof of
$\Seq{\Int\Gamma,\Orth A,B}$, then one defines%
\footnote{Again the fact that the formulas of the context bear a
  $\Int\_$ is absolutely necessary to make this definition possible.} %
a proof $\Substbis F{\pi/\zeta,\Vect C/\Vect\xi}$ of
\[
  \Seq{\Int\Gamma,\Orthp{\Substbis F{A/\zeta,\Vect
        C/\Vect\xi}},\Substbis F{B/\zeta,\Vect C/\Vect\xi}}
\]
by induction on $F$, adapting the corresponding definition
in~\cite{Baelde12}.  We illustrate this definition by two inductive
steps.  Observe, in these examples, how the exchange rule is used
implicitly.

Assume first that $F=\Lfpll\xi G$ (so that $(\zeta,\xi,\List\xi 1k)$
is a list of pairwise distinct variables containing all free variables
of $G$). Let $G'=\Subst G{\Vect C}{\Vect\xi}$ whose only possible free
variables are $\zeta$ and $\xi$.  The proof
$\Substbis F{\pi/\zeta,\Vect C/\Vect\xi}$ is defined by
\[{\footnotesize
    \begin{prooftree}
      \hypo{}
      \ellipsis{$\Substbis {G}{\pi/\zeta,
          \Subst{(\Lfpll\xi {G'})}{B}{\zeta}/\xi,\Vect C/\Vect\xi}$}
      {\Seq{\Int\Gamma,\Orthp{\Substbis {G'}{A/\zeta,
              \Subst{(\Lfpll\xi {G'})}{B}{\zeta}/\xi}},
          \Substbis {G'}{B/\zeta,\Subst{(\Lfpll\xi {G'})}{B}{\zeta}/\xi}}}
      \infer1[\Nlfp]{\Seq{\Int\Gamma,\Orthp{\Substbis {G'}{A/\zeta,
              \Subst{(\Lfpll\xi {G'})}{B}{\zeta}/\xi}},
          \Subst{(\Lfpll\xi {G'})}{B}{\zeta}}}
      \infer1[\Ngfpbis]{\Seq{\Int\Gamma,
          \Orthp{\Subst{(\Lfpll\xi {G'})}{A}{\zeta}},
          \Subst{(\Lfpll\xi {G'})}{B}{\zeta}}}
    \end{prooftree}
}
\]
Notice that this case uses the additional parameters $\Vect C$ in the
definition of this substitution with $k+1$ parameters in the inductive
hypothesis. To see that the last inference in this deduction is an
instance of \Ngfpbis{}, set $H=\Subst{\Orth{G'}}{\Orth A}\zeta$ and
notice that %
  $\Orthp{\Substbis {G'}{A/\zeta, \Subst{(\Lfpll\xi
        {G'})}{B}{\zeta}/\xi}}= \Substbis{H}{ \Orth{\Subst{(\Lfpll\xi
        {G'})}{B}{\zeta}}/\xi}$ 
and $\Orthp{\Subst{(\Lfpll\xi {G'})}{A}{\zeta}}=\Gfpll\xi{H}$.
Another example is $F=\Tens{G_1}{G_2}$:
$\Substbis F{\pi/\zeta,\Vect C/\Vect\xi}$ is defined as
\begin{center}{\footnotesize
 \begin{prooftree}
   \hypo{}
   \ellipsis{$\Substbis{G_1}{\pi/\zeta,\Vect C/\Vect\xi}$}
   {\Seq{\Int\Gamma,\Orthp{{\Subst{G_1'}{A}{\zeta}}},{\Subst{G_1'}{B}{\zeta}}}}
   \hypo{}
   \ellipsis{$\Substbis{G_2}{\pi/\zeta,\Vect C/\Vect\xi}$}
   {\Seq{\Int\Gamma,\Orthp{{\Subst{G_2'}{A}{\zeta}}},{\Subst{G_2'}{B}{\zeta}}}}
   \infer2[\Ntens]
   {\Seq{\Int\Gamma,\Int\Gamma,\Orthp{{\Subst{G_1'}{A}{\zeta}}},
       \Orthp{{\Subst{G_2'}{A}{\zeta}}},
       \Tens{\Subst{G_1'}{B}{\zeta}}{\Subst{G_2'}{B}{\zeta}}}}
   \infer[double]1[\Ncontr]{\Seq{\Int\Gamma,\Orthp{{\Subst{G_1'}{A}{\zeta}}},
       \Orthp{{\Subst{G_2'}{A}{\zeta}}},
       \Tens{\Subst{G_1'}{B}{\zeta}}{\Subst{G_2'}{B}{\zeta}}}}
   \infer1[\Npar]{\Seq{\Int\Gamma,
       \Orthp{{\Subst{G_1'}{A}{\zeta}}}\IPar\Orthp{{\Subst{G_2'}{A}{\zeta}}},
       \Tens{\Subst{G_1'}{B}{\zeta}}{\Subst{G_2'}{B}{\zeta}}}}
 \end{prooftree}
}
\end{center}
Observe that we use in an essential way the fact that all formulas of
the context are of shape $\Int H$ (even if $F$ is exponential-free)
when we apply contraction rules on this context.

\subsubsection{Cut elimination}%
\label{sec:MULL-cut-elim}%
The only reduction that we will mention here is \Nlfp{}/\Ngfp{}.
Let $\theta$ be
\[{\footnotesize
 \begin{prooftree}
   \hypo{}
   \ellipsis{$\pi$}{\Seq{\Lambda,\Subst F{\Lfpll\zeta F}\zeta}}
   \infer1[\Nlfp]{\Seq{\Lambda,\Lfpll\zeta F}}
   \hypo{}
   \ellipsis{$\lambda$}{\Seq{\Delta,\Orth A}}
   \hypo{}
   \ellipsis{$\rho$}{\Seq{\Int\Gamma,A,\Orthp{\Subst FA\zeta}}}
   \infer2[\Ngfp]{\Seq{\Delta,\Int\Gamma,\Orthp{\Lfpll\zeta F}}}
   \infer2[\Ncut]{\Seq{\Lambda,\Delta,\Int\Gamma}}
 \end{prooftree}
}
\]
and let $\rho'$ be the proof
\[{\footnotesize
  \begin{prooftree}
    \hypo{}
    \ellipsis{$\rho$}{\Seq{\Int\Gamma,A,\Orthp{\Subst FA\zeta}}}
    \infer1[\Ngfpbis]{\Seq{\Int\Gamma,A,\Orthp{\Lfpll\zeta F}}}
  \end{prooftree}}
\]
Then $\theta$ reduces to the proof shown in Figure~\ref{fig:fp-reduction}.
\begin{figure*}
  \centering
  {\footnotesize %
    \begin{prooftree}
      \hypo{} \ellipsis{$\Subst F{\rho'}\zeta$}
      {\Seq{\Int\Gamma,\Subst FA\zeta},\Orthp{\Subst F{\Lfpll\zeta
            F}\zeta}} \hypo{} \ellipsis{$\pi$}{\Seq{\Lambda,\Subst
          F{\Lfpll\zeta F}\zeta}}
      \infer2[\Ncut]{\Seq{\Lambda,\Int\Gamma},\Subst FA\zeta} \hypo{}
      \ellipsis{$\lambda$}{\Seq{\Delta,\Orth A}} \hypo{}
      \ellipsis{$\rho$}{\Seq{\Int\Gamma,A,\Orthp{\Subst FA\zeta}}}
      \infer2[\Ncut]{\Seq{\Delta,\Int\Gamma,\Orthp{\Subst FA\zeta}}}
      \infer2[\Ncut]{\Seq{\Lambda,\Delta,\Int\Gamma,\Int\Gamma}}
      \infer[double]1[\Ncontr]{\Seq{\Lambda,\Delta,\Int\Gamma}}
    \end{prooftree} %
  }
  \caption{Reduction \Nlfp{}/\Ngfp}
  \label{fig:fp-reduction}
\end{figure*}
This reduction rule uses the functoriality of formulas as well as the
$\wn$-contexts in the \Ngfp{} rule.

\begin{remark}
  In~\cite{Baelde12} it is shown that $\MUMALL$ enjoys
  cut-elimination. We will show in a further paper how this method
  based on reducibility can be adapted to our $\MULL$. Notice that a
  cut-free proof has not the sub-formula property in general because
  of rule \Ngfp{}. Though, the normalization theorem makes sure that a
  proof of a sequent \emph{which does not contain any $\nu$-formula}
  reduces to a cut-free proof enjoying the sub-formula property.
\end{remark}

\subsubsection{Interpreting formulas and proofs (outline)}
%
We assume to be given a $\MULL$ Seely model $(\cL,\Vect\cL)$, see
Section~\ref{sec:cat-models}. With any formula $A$ and any
repetition-free sequence $\Vect\zeta=(\List\zeta 1k)$ of type
variables containing all the free variables of $A$, we associate
$\Tsem A_{\Vect\zeta}\in\cL_k$ in the obvious way, for instance
$\Tsem{\Tens AB}_{\Vect\zeta}=\mathord\ITens\Comp(\Tsem
A_{\Vect\zeta},\Tsem B_{\Vect\zeta})\in\cL_k$ by
conditions~(\ref{enum:seel-mull-4}) and~(\ref{enum:seel-mull-3}) in
Definition~\ref{def:categorical-muLL-models} and
$\Tsem{\Gfpll\zeta A}_{\Vect\zeta}=\Fungfp{(\Tsem
  A_{\Vect\zeta,\zeta})}$ using
condition~(\ref{enum:seel-mull-5}). Then
$\Tsem{\Orth A}_{\Vect\zeta}=\Orth{\Tsem{A}_{\Vect\zeta}}$ up to a
natural isomorphism. In this outline, we keep symmetric monoidality
isomorphisms of $\cL$ and of $\Excl\_$ implicit (see for
instance~\cite{Ehrhard18} how \emph{monoidal trees} allow to take them
into account).
With any $\Gamma=(\List A1n)$ we associate an object $\Tsem\Gamma$ of
$\cL$ and with any proof $\pi$ of $\Seq\Gamma$ we associate a morphism
$\Psem\pi\in\cL(\One,\Tsem\Gamma)$ using the categorical constructs of
$\cL$ in a straightforward way, see~\cite{Mellies09}.
Then one proves that if $\pi$ and $\pi'$ are proofs of $\Seq\Gamma$
and $\pi$ reduces to $\pi'$ by the cut-elimination rules, then
$\Psem\pi=\Psem{\pi'}$. This is done by an inspection of the various
cut-elimination rules. 
%
%
In the case of \Nlfp{}/\Ngfp{} cut-elimination, we need the following
lemma.
\begin{lemma}
  Let $\Gamma=(\List D1n)$ be a sequence of closed formulas and let
  $P=\Excl{\Psem{D_1}}\ITens\cdots\ITens\Excl{\Tsem{D_n}}$, considered
  as an object of $\Em\LCAT$. Let $F$ be a formula and
  $\zeta,\List\xi 1k$ be a repetition-free list of variables
  containing all the free variables of $F$ so that
  $\Vcsnot F=\Tsem F_{\zeta,\Vect\xi}$ is a strong functor
  $\LCAT^{k+1}\to\LCAT$ which belongs to $\LCAT_k$.  As shown in
  Section~\ref{sec:gen-strong-functors} this strong functor lifts to a
  functor $\Mfun{\Vcsnot F}P:\Kcomod\LCAT P^{k+1}\to \Kcomod\LCAT P$.
  Let $\pi$ be a proof of $\Seq{\Int\Gamma,\Orth A,B}$, so that
  $\Psem\pi\in\Kcomod\LCAT P(\Tsem A,\Tsem B)$.  Let
  $\Vect C=(\List C1k)$ be a list of closed formulas. Then
  \begin{multline*}
    \Psem{\Substbis F{\pi/\zeta,\Vect C/\Vect \xi}}
    =\Mfun{\Vcsnot F}P(\Psem\pi,\Tsem{C_1},\dots,\Tsem{C_k})\\
    \in\Kcomod\LCAT P(
      \Strfun{\Vcsnot F}(\Tsem A,\Tsem{C_1},\dots,\Tsem{C_k}),
      \Strfun{\Vcsnot F}(\Tsem B,\Tsem{C_1},\dots,\Tsem{C_k}))\,.
  \end{multline*}
\end{lemma}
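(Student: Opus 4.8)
The plan is to argue by induction on the formula $F$, in lockstep with the inductive definition of $\Substbis F{\pi/\zeta,\Vect C/\Vect\xi}$ given in Section~\ref{sec:synt-functoriality} and with the inductive construction of $\Vcsnot F=\Tsem F_{\zeta,\Vect\xi}$ out of the elementary strong functors of Section~\ref{sec:strong-fun-LL-operations}. The organising principle is that the lifting $\Mfun{\cdot}P$ of Section~\ref{sec:gen-strong-functors} commutes with each operation used to build strong functors: projections lift to projections, constant functors to constant functors, a composite $\Vcsnot F\Comp\Vect{\Vcsnot G}$ to $\Mfun{\Vcsnot F}P\Comp\Vect{\Mfun{\Vcsnot G}P}$, the bifunctors $\ITens,\IPar,\IPlus,\IWith$ to the corresponding operations of the Seely category $\Kcomod\LCAT P$, the functor $\Excl\_$ to $\Mexcl P\_$, and the fixed point $\Funlfp{\Vcsnot G}$ (resp.\ $\Fungfp{\Vcsnot G}$) to the fixed point computed in $\Kcomod\LCAT P$. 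Granting these commutations, every inductive clause in the definition of $\Substbis F{\cdots}$ is interpreted by exactly the lifted operation applied to the data furnished by the induction hypothesis, which is the asserted equality.

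The variable and unit cases are immediate: for $F=\zeta$ one has $\Substbis\zeta{\pi/\zeta,\Vect C/\Vect\xi}=\pi$ and $\Vcsnot F$ is the first projection, so both sides are $\Psem\pi$; for $F=\xi_i$ the substitution is an axiom and both sides are $\Id_{\Tsem{C_i}}$; for $\One,\Fbot,\Zero,\Top$ the functor is constant, its strength being built from $\Weak{\Excl X}$, which matches the weakenings inserted on $\Int\Gamma$. For a binary connective, say $F=\Tens{G_1}{G_2}$, the derivation displayed in Section~\ref{sec:synt-functoriality} glues the two sub-proofs produced by the induction hypothesis on $G_1,G_2$ by a $\Ntens$, contractions on $\Int\Gamma$, and a $\Npar$; on the semantic side $\Vcsnot F=\mathord\ITens\Comp(\Vcsnot G_1,\Vcsnot G_2)$ gives $\Mfun{\Vcsnot F}P(\Psem\pi,\Tsem{\Vect C})=\Mtens P{\Mfun{\Vcsnot G_1}P(\Psem\pi,\Tsem{\Vect C})}{\Mfun{\Vcsnot G_2}P(\Psem\pi,\Tsem{\Vect C})}$, and the contraction $\Contr{\Coalgca P}$ occurring in the definition of $\Mtens P{\cdot}{\cdot}$ is precisely the interpretation of the syntactic $\Ncontr$ on $\Int\Gamma$. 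The connectives $\IPar,\IPlus,\IWith$ follow dually or through the distributivity used to equip them with a strength, and the exponential cases use the definition of $\Mexcl Pf$ together with the $\Int\_$-discipline on the context; in each clause one feeds in the induction hypothesis on the immediate subformulas.

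The main obstacle is the fixed-point case, say $F=\Lfpll\xi G$ (the case $\Gfpll\xi G$ reduces to it by De~Morgan duality via Lemma~\ref{lemma:strfun-lfp-general}). Here $\Vcsnot F=\Funlfp{\Vcsnot G}$ and, by the construction of the lift, the target morphism $\Mfun{\Funlfp{\Vcsnot G}}P(\Psem\pi,\Tsem{\Vect C})$ is assembled from the functorial action $\Strfun{\Funlfp{\Vcsnot G}}(\Psem\pi,\Tsem{\Vect C})$ and the strength $\Strnat{\Funlfp{\Vcsnot G}}$, the latter being pinned down by the universal property through \Eqref{eq:final-coalg-strength-charact}. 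The plan is to read the displayed derivation as the composite of the interpretation of $\Substbis G{\pi/\zeta,\Subst{(\Lfpll\xi{G'})}{B}{\zeta}/\xi,\Vect C/\Vect\xi}$ furnished by the induction hypothesis --- crucially applied with the extra parameter $\Subst{(\Lfpll\xi{G'})}{B}{\zeta}$, which is exactly the auxiliary argument occurring in the coalgebra of \Eqref{eq:final-coalg-strength-charact} --- with the interpretations of the $\Nlfp$ and $\Ngfpbis$ rules, i.e.\ the fold/unfold isomorphisms provided by Lemma~\ref{lemma:strfun-lfp-general}. One then checks that this composite satisfies the equation characterising $\Strfun{\Funlfp{\Vcsnot G}}$ on morphisms in that same lemma, transported to the level of $\Kcomod\LCAT P$, and concludes by the uniqueness afforded by the initial-algebra property (the dual of Lemma~\ref{lemma:equations-final-coalgebra}).

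The delicate points, and where I expect the real work to lie, are twofold. First, the whole computation must be carried out at the level of $\Kcomod\LCAT P$-morphisms rather than bare $\LCAT$-morphisms: one has to verify that the structural rules on $\Int\Gamma$ correspond throughout to $\Weak P$ and $\Contr P$ and that $\Strnat{\Funlfp{\Vcsnot G}}$ interacts correctly with the coalgebra structure of $P$, so that the universal property of Lemma~\ref{lemma:strfun-lfp-general} can legitimately be invoked in $\Kcomod\LCAT P$; equivalently, one may first prove the clean commutation $\Mfun{\Funlfp{\Vcsnot G}}P=\Funlfp{(\Mfun{\Vcsnot G}P)}$ and then read off the result. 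Second, the coherence isomorphisms kept implicit in the interpretation --- the monoidal isos and the natural iso $\Tsem{\Orth A}\cong\Orth{\Tsem A}$ --- must be threaded through the induction; we suppress them here and note that in the concrete models of the following sections they are identities, so they create no genuine difficulty there.
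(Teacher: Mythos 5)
Your proposal is correct and follows the same route as the paper, which disposes of this lemma in one line ("the proof of the lemma is a simple verification", using the fact that $\Kcomod\LCAT P$ has the same objects as $\LCAT$): an induction on $F$ in lockstep with the inductive definition of $\Substbis F{\pi/\zeta,\Vect C/\Vect\xi}$, matching each syntactic clause (contractions on $\Int\Gamma$ against $\Contr P$, the \Nlfp{}/\Ngfpbis{} steps against the fold/unfold maps and the uniqueness clauses of Lemma~\ref{lemma:strfun-lfp-general}) with the corresponding lifted operation of Section~\ref{sec:gen-strong-functors}. Your elaboration — in particular flagging that the universal-property argument must be run at the level of $\Kcomod\LCAT P$-morphisms and that the implicit coherence isos are identities in the concrete models — is exactly the content of the verification the paper leaves to the reader.
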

\noindent%
The proof of the lemma is a simple verification. Notice that we use
the fact that the objects of $\Kcomod\LCAT P$ are the same as those of
$\LCAT$.

\section{Sets and relations}\label{sec:sets-rel}

The category $\REL$ has sets as objects, and given sets $E$ and $F$,
$\REL(E,F)=\Part{E\times F}$. Identity is the diagonal relation and
composition is the usual composition of relations, denoted by simple
juxtaposition. If $t\in\REL(E,F)$ and $u\subseteq E$ then
$\Matappa tu=\Eset{b\in F\St\exists a\in u\ (a,b)\in t}$.

\subsection{$\REL$ as a model of $\LL$.}\label{sec:REL-LL-model}
This category is a well-known model of $\LL$ in which
$\One=\Botlin=\Eset{\Onelem}$,
$\Tens EF=(\Limpl EF)=\Par EF=E\times F$ so that $\Orth E=E$. As to
the additives, $\Zero=\Top=\emptyset$ and
$\Bwith_{i\in I}E_i=\Bplus_{i\in I}E_i=\Union_{i\in I}\Eset i\times
E_i$. The exponentials are given by $\Excl E=\Int E=\Mfin E$ (finite
multisets of elements of $E$ which are functions $m:E\to\Nat$ such
that $m(a)\not=0$ for finitely many $a$'s; addition of multisets is
defined in the obvious pointwise way, and $\Mset{\List a1k}$ is the
multiset which maps $a$ to the number of $i$'s such that $a_i=a$).
For the additives and multiplicatives, the operations on morphisms are
defined in the obvious way. Let us be more specific about the
exponentials. Given $s\in\REL(E,F)$, $\Excl s\in\REL(\Excl E,\Excl F)$
is
\(\Excl s=\{(\Mset{\List a1n},\Mset{\List b1n})\St n\in\Nat \text{ and
}\forall i\ (a_i,b_i)\in s\}\), $\Der E\in\REL(\Excl E,E)$ is given by
$\Der E=\Eset{(\Mset a,a)\St a\in E}$ and
$\Digg E\in\REL(\Excl E,\Excll E)$ is given by
$\Digg E=\{(m_1+\cdots+m_n,\Mset{\List m1n})\St\forall i\ m_i\in\Mfin
E\}$. Last $\Seelyz\in\REL(\One,\Excl\Top)$ is
$\Seelyz=\Eset{(\Onelem,\Mset{})}$ and
$\Seelyt_{E,F}\in\REL(\Tens{\Excl E}{\Excl F},\Excl{(\With EF)})$ is
given by
\begin{multline*}
  \Seelyt_{E,F}=\{((\Mset{\List a1k},\Mset{\List
    b1l}),\\
\Mset{(1,a_1),\dots,(1,a_k),(2,b_1),\dots,(2,b_l)})
  \St\\
\List a1k\in E\text{ and }\List b1l\in F\}\,.
\end{multline*}
Weakening $\Weak E\in\REL(\Excl E,\One)$ and
$\Contr E\in\REL(\Excl E,\Tens{\Excl E}{\Excl E})$ are given by
$\Weak E=\Eset{(\Mset{},\Onelem)}$ and
\(\Contr E=\{(m_1+m_2,(m_1,m_2))\St m_i\in\Mfin E\text{ for
  }i=1,2\}\).

\subsection{Locally continuous functors on $\REL$}
\label{sec:hom-continuous}


The following considerations on continuity of functors are standard,
see~\cite{Wand79}.  A functor $\Vsnot F:\REL^n\to\REL$ is locally
continuous if, for all $\Vect E,\Vect F\in\REL^n$ and all directed set
$D\subseteq\REL^n(\Vect E,\Vect F)$, one has
$\Vsnot F(\Union D)=\Union\Eset{\Vsnot F(\Vect s)\St\Vect s\in D}$ (we
use $\Union D$ for the component-wise union). This implies in
particular that if $\Vect s\subseteq\Vect t$, one has
$\Vsnot F(\Vect s)\subseteq\Vsnot F(\Vect t)$ (taking
$D=\Eset{\Vect s,\Vect t}$).
To simplify notations assume that $n=1$ (but what follows holds for
all values of $n$).

\begin{lemma}\label{lemma:rel-embedding-retraction}
  Let $E$ and $F$ be sets and let $s\in\REL(E,F)$ and
  $t\in\REL(F,E)$. Assume that $t\Compl s=\Id_E$ and that
  $s\Compl t\subseteq\Id_F$. Then $s$ is (the graph of) an injective
  function and $t=\{(b,a)\in F\times E\St (a,b)\in s\}$.
\end{lemma}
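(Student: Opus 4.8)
The plan is to argue entirely at the level of elements, unfolding the two hypotheses according to the definition of relational composition. Writing $s\subseteq E\times F$ and $t\subseteq F\times E$, the equation $t\Compl s=\Id_E$ says that a pair $(a,a')$ lies in $t\Compl s$ --~i.e.\ there is $b\in F$ with $(a,b)\in s$ and $(b,a')\in t$~-- exactly when $a=a'$; while $s\Compl t\subseteq\Id_F$ says that whenever $(b,a)\in t$ and $(a,b')\in s$ for some $a\in E$, one has $b=b'$. Everything will follow by combining these two readings.

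First I would show that $s$ is a total, single-valued relation, hence the graph of a function $f:E\to F$. Totality is immediate from $t\Compl s=\Id_E$: for each $a\in E$ the pair $(a,a)$ lies in $t\Compl s$, so there is some $b\in F$ with $(a,b)\in s$ and moreover $(b,a)\in t$. For single-valuedness, keep this $b$ and let $(a,b_1)\in s$ be arbitrary; then from $(b,a)\in t$ and $(a,b_1)\in s$ we get $(b,b_1)\in s\Compl t$, so $s\Compl t\subseteq\Id_F$ forces $b_1=b$. Thus the only element related to $a$ by $s$ is this $b$, so $s$ is the graph of a function $f$ with $f(a)=b$, and by construction $(f(a),a)\in t$ for every $a\in E$. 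Injectivity of $f$ then follows again from $t\Compl s=\Id_E$: if $f(a_1)=f(a_2)$, call it $b$, then $(a_1,b)\in s$ and $(b,a_2)\in t$ (the latter because $(f(a_2),a_2)\in t$), so $(a_1,a_2)\in t\Compl s=\Id_E$ and therefore $a_1=a_2$.

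Finally I would identify $t$ with the converse of $s$, that is, prove $t=\Eset{(b,a)\in F\times E\St (a,b)\in s}=\Eset{(f(a),a)\St a\in E}$. The inclusion $\supseteq$ is exactly the fact $(f(a),a)\in t$ established above. For $\subseteq$, take any $(b,a)\in t$; since $(a,f(a))\in s$ we obtain $(b,f(a))\in s\Compl t$, whence $b=f(a)$ by $s\Compl t\subseteq\Id_F$, so $(a,b)\in s$. Together the two inclusions give the claimed description of $t$, completing the proof.

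The argument is short and contains no genuine obstacle; the only thing to watch carefully is the direction of relational composition (here $t\Compl s$ means ``first $s$, then $t$''), since a single slip in the order would exchange the roles of the two hypotheses and derail the bookkeeping.
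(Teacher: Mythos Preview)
Your proof is correct and follows essentially the same approach as the paper's: both establish totality of $s$ from $t\Compl s=\Id_E$, single-valuedness from $s\Compl t\subseteq\Id_F$, then identify $t$ as the converse of $s$ and deduce injectivity from $t\Compl s=\Id_E$. The only cosmetic difference is the order in which injectivity and the description of $t$ are derived.
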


\begin{lemma}
  Let $\Vsnot F:\REL\to\REL$ be a locally continuous functor.  Assume
  that $E\subseteq F$ and let
  $\Relii_{E,F}=\Eset{(a,a)\St a\in E}\in\REL(E,F)$ and
  $\Relip_{E,F}=\Eset{(a,a)\St a\in E}\in\REL(F,E)$. Then
  $\Vsnot F(\Relii_{E,F})\in\REL(\Vsnot F(E),\Vsnot F(F))$ is an
  injective function.
\end{lemma}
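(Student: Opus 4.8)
The plan is to recognize $\Relii_{E,F}$ and $\Relip_{E,F}$ as an embedding-retraction pair in $\REL$, to push this pair through the functor $\Vsnot F$ using functoriality together with the monotonicity that local continuity provides, and then to invoke Lemma~\ref{lemma:rel-embedding-retraction} to conclude that $\Vsnot F(\Relii_{E,F})$ is an injective function. Concretely, I would set $s=\Vsnot F(\Relii_{E,F})\in\REL(\Vsnot F(E),\Vsnot F(F))$ and $t=\Vsnot F(\Relip_{E,F})\in\REL(\Vsnot F(F),\Vsnot F(E))$, and check that the hypotheses $t\Compl s=\Id_{\Vsnot F(E)}$ and $s\Compl t\subseteq\Id_{\Vsnot F(F)}$ of that lemma are met.

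First I would verify the two relational identities at the level of $\REL$. Unfolding the definition of composition of relations, $\Relip_{E,F}\Compl\Relii_{E,F}=\Eset{(a,a)\St a\in E}=\Id_E$, since both relations are carried by the diagonal of $E$; likewise $\Relii_{E,F}\Compl\Relip_{E,F}=\Eset{(a,a)\St a\in E}$, which is the diagonal of $E$ regarded as a subset of $F\times F$, hence is contained in $\Id_F$ because $E\subseteq F$. Applying $\Vsnot F$ and using functoriality yields $t\Compl s=\Vsnot F(\Relip_{E,F}\Compl\Relii_{E,F})=\Vsnot F(\Id_E)=\Id_{\Vsnot F(E)}$. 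For the second inequality the key point is that local continuity of $\Vsnot F$ forces monotonicity (as already observed just after the definition, taking the directed set $D=\Eset{\Relii_{E,F}\Compl\Relip_{E,F},\Id_F}$), so that $s\Compl t=\Vsnot F(\Relii_{E,F}\Compl\Relip_{E,F})\subseteq\Vsnot F(\Id_F)=\Id_{\Vsnot F(F)}$.

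With both hypotheses in place, Lemma~\ref{lemma:rel-embedding-retraction} applies directly to $s$ and $t$ and gives that $s=\Vsnot F(\Relii_{E,F})$ is the graph of an injective function (and that $t$ is its converse relation). There is no real obstacle here beyond bookkeeping: the only points requiring care are getting the direction of relational composition right so that the hypotheses of Lemma~\ref{lemma:rel-embedding-retraction} are matched in the correct order, and being explicit that the monotonicity of $\Vsnot F$ used in the second step is a consequence of local continuity rather than a separate assumption. For general arity $n$ the same argument applies coordinate-wise, using the product inclusion and projection relations and the preservation of identities and composition by $\Vsnot F$ in each component.
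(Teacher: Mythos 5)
Your proposal is correct and follows essentially the same route as the paper's proof: verify $\Relip_{E,F}\Compl\Relii_{E,F}=\Id_E$ and $\Relii_{E,F}\Compl\Relip_{E,F}\subseteq\Id_F$, push these through $\Vsnot F$ using functoriality and the monotonicity that local continuity entails, and conclude via Lemma~\ref{lemma:rel-embedding-retraction}. Your explicit unfolding of the diagonal relations and of the monotonicity argument merely makes precise what the paper leaves implicit.
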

\begin{proof}
  We have $\Relip_{E,F}\Compl\Relii_{E,F}=\Id_E$ and
  $\Relii_{E,F}\Compl\Relip_{E,F}\subseteq\Id_F$ and hence
  $\Vsnot F(\Relip_{E,F})\Compl\Vsnot F(\Relii_{E,F})=\Id$ by
  functoriality and
  $\Vsnot F(\Relii_{E,F})\Compl\Vsnot F(\Relip_{E,F})\subseteq\Id$ by
  local continuity. The announced property results from
  Lemma~\ref{lemma:rel-embedding-retraction}.
\end{proof}

Let $\RELI$ be the category whose objects are sets and morphisms are
set inclusions (so that $\RELI(E,F)$ has $\Relii_{E,F}$ as unique
element if $E\subseteq F$ and is empty otherwise). Then $\Relii$ can
be thought of as the ``inclusion functor'' $\RELI\to\REL$, acting as
the identity on objects. Obviously, $\RELI$ is
cocomplete\footnote{Notice that it is not complete, for instance is
  has no final object.}.


\begin{proposition}\label{prop:hom-continuous-dir-cocontinuous}
  If $\Vsnot F:\REL\to\REL$ is locally continuous then
  $\Vsnot F\Compl\Relii:\RELI\to\REL$ is directed-cocontinuous (that
  is, preserves the colimits of directed sets of sets).
\end{proposition}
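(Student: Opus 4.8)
The plan is to exhibit the union as the colimit directly. A directed family $\cD$ of sets has colimit $E=\Union_{D\in\cD}D$ in $\RELI$, with the inclusions $\Relii_{D,E}$ as colimit cocone; since $\Relii$ acts as the identity on objects, proving directed-cocontinuity of $\Vsnot F\Compl\Relii$ amounts to showing that $\Vsnot F(E)$, together with the morphisms $\iota_D:=\Vsnot F(\Relii_{D,E})$, is a colimit in $\REL$ of the directed diagram $(\Vsnot F(D))_{D\in\cD}$ whose transition maps are the $\Vsnot F(\Relii_{D,D'})$ for $D\subseteq D'$. Functoriality makes $(\Vsnot F(E),(\iota_D)_D)$ a cocone, and by the lemma preceding the proposition each $\iota_D$ is an injective function; by Lemma~\ref{lemma:rel-embedding-retraction} the companion morphism $\Vsnot F(\Relip_{D,E})$ is precisely its transpose, which I write $\iota_D^{-1}$, so that $\iota_D^{-1}\Compl\iota_D=\Id$ and $\iota_D\Compl\iota_D^{-1}\subseteq\Id$.

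The decisive step is a single application of local continuity. Since $\Relii_{D,E}\Compl\Relip_{D,E}=\Eset{(a,a)\St a\in D}$, the family $(\Relii_{D,E}\Compl\Relip_{D,E})_{D\in\cD}$ is directed in $\REL(E,E)$ and its union is $\Id_E$. Applying $\Vsnot F$, using functoriality and then local continuity, I obtain
\[
  \Id_{\Vsnot F(E)}=\Vsnot F(\Id_E)=\Union_{D\in\cD}\iota_D\Compl\iota_D^{-1}\,.
\]
In particular every point of $\Vsnot F(E)$ lies in the image of some $\iota_D$, which is the set-level counterpart of the colimit being the ``union''.

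It remains to establish the universal property. Given a cocone $(g_D\in\REL(\Vsnot F(D),G))_{D\in\cD}$, I take as mediating morphism $h=\Union_{D\in\cD}g_D\Compl\iota_D^{-1}\in\REL(\Vsnot F(E),G)$. Uniqueness follows at once from the displayed identity: if $h'\Compl\iota_D=g_D$ for all $D$, then, using that relational composition distributes over arbitrary unions,
\[
  h'=h'\Compl\Id_{\Vsnot F(E)}=\Union_{D\in\cD}(h'\Compl\iota_D)\Compl\iota_D^{-1}
  =\Union_{D\in\cD}g_D\Compl\iota_D^{-1}=h\,.
\]
For existence I must verify the triangle $h\Compl\iota_{D_0}=g_{D_0}$ for each $D_0\in\cD$.

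I expect this last verification to be the only genuine obstacle, and it is exactly where directedness of $\cD$ is used. The inclusion $g_{D_0}\subseteq h\Compl\iota_{D_0}$ is immediate by taking $D=D_0$ in the union defining $h$. For the reverse inclusion, a pair in $h\Compl\iota_{D_0}$ arises from some $D\in\cD$ and points with $\iota_{D_0}(x)=\iota_D(y)$ and $(y,b)\in g_D$; choosing $D''\in\cD$ above both $D_0$ and $D$ and using the injectivity of $\iota_{D''}$ forces the images of $x$ and $y$ in $\Vsnot F(D'')$ to coincide, after which the cocone equations for $D_0\subseteq D''$ and $D\subseteq D''$ yield $(x,b)\in g_{D_0}$. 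Thus the amalgamation of compatible data along the directed system, powered by the injectivity supplied by the preceding lemma, is the crux; everything else is routine bookkeeping with functoriality and with the distributivity of composition over unions in $\REL$.
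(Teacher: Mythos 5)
Your proof is correct and follows essentially the same route as the paper's: you construct the same mediating morphism $\Union_{D\in\cD}g_D\Compl\Vsnot F(\Relip_{D,E})$ and give the same uniqueness argument, applying local continuity to the directed family $(\Relii_{D,E}\Compl\Relip_{D,E})_{D\in\cD}$ whose union is $\Id_E$. The only difference is presentational: you check the triangle equalities by element-chasing with the injectivity of $\Vsnot F(\Relii_{D'',E})$, whereas the paper argues purely relationally via a common upper bound $G$ and the inclusion $\Vsnot F(\Relii_{F,G}\Compl\Relip_{F,G})\subseteq\Id$, both steps exploiting directedness in exactly the same way.
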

The proof can be found in \cite{Wand79}.
We know that a locally continuous functor $\Vsnot F$ maps inclusions to
injections, we shall say that $\Vsnot F$ is \emph{strict} if it maps
inclusions to inclusions, that is, if $E\subseteq F$ then
$\Vsnot F(E)\subseteq\Vsnot F(F)$ and
$\Vsnot F(\Relii_{E,F})=\Relii_{\Vsnot F(E),\Vsnot F(F)}$ (which
implies $\Vsnot F(\Relip_{E,F})=\Relip_{\Vsnot F(E),\Vsnot F(F)}$). As
a direct consequence of
Proposition~\ref{prop:hom-continuous-dir-cocontinuous}, we get:

\begin{lemma}\label{lemma:str-hom-cont-scott}
  If $\Vsnot F$ is strict locally continuous then, for any directed
  set of sets $\cD$, one has
  $\Vsnot F(\Union\cD)=\Union_{E\in\cD}\Vsnot F(E)$.
\end{lemma}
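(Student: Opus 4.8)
The plan is to express the identity morphism on $\Union\cD$ as a directed union of partial identities indexed by $\cD$, and then to transport this decomposition through $\Vsnot F$ using local continuity, functoriality and strictness. This yields the equality directly, without having to describe general colimits in $\REL$; alternatively, as the statement suggests, one can read it off from Proposition~\ref{prop:hom-continuous-dir-cocontinuous} by identifying the relevant colimit.

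Concretely, for each $E\in\cD$ I would set $p_E=\Relii_{E,\Union\cD}\Compl\Relip_{E,\Union\cD}\in\REL(\Union\cD,\Union\cD)$, which is the partial identity $\Eset{(x,x)\St x\in E}$. Since $\cD$ is directed and $E\subseteq E'$ entails $p_E\subseteq p_{E'}$, the family $(p_E)_{E\in\cD}$ is directed in $\REL(\Union\cD,\Union\cD)$, and $\Union_{E\in\cD}p_E=\Eset{(x,x)\St x\in\Union\cD}=\Id_{\Union\cD}$.

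Applying $\Vsnot F$, functoriality gives $\Vsnot F(\Id_{\Union\cD})=\Id_{\Vsnot F(\Union\cD)}$, while local continuity yields $\Vsnot F(\Union_{E\in\cD}p_E)=\Union_{E\in\cD}\Vsnot F(p_E)$. For each $E$, functoriality and strictness give $\Vsnot F(p_E)=\Relii_{\Vsnot F(E),\Vsnot F(\Union\cD)}\Compl\Relip_{\Vsnot F(E),\Vsnot F(\Union\cD)}$, using that strictness provides $\Vsnot F(\Relii_{E,\Union\cD})=\Relii_{\Vsnot F(E),\Vsnot F(\Union\cD)}$ and hence $\Vsnot F(\Relip_{E,\Union\cD})=\Relip_{\Vsnot F(E),\Vsnot F(\Union\cD)}$ (here $E\subseteq\Union\cD$, so $\Vsnot F(E)\subseteq\Vsnot F(\Union\cD)$). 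Thus $\Vsnot F(p_E)$ is the partial identity $\Eset{(x,x)\St x\in\Vsnot F(E)}$ on $\Vsnot F(\Union\cD)$, and combining the two computations yields $\Id_{\Vsnot F(\Union\cD)}=\Eset{(x,x)\St x\in\Union_{E\in\cD}\Vsnot F(E)}$. Comparing the diagonals of these two identity relations gives exactly $\Vsnot F(\Union\cD)=\Union_{E\in\cD}\Vsnot F(E)$.

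The computation is essentially routine; the only points requiring care are the directedness of $(p_E)_{E\in\cD}$ and the fact that strictness applies simultaneously to the inclusions $\Relii$ and to the retractions $\Relip$. If one prefers to argue via Proposition~\ref{prop:hom-continuous-dir-cocontinuous}, the main obstacle instead becomes identifying the colimit in $\REL$ of the directed diagram $(\Vsnot F(E))_{E\in\cD}$: since strictness makes all transition and cocone maps inclusions, this colimit is the union $\Union_{E\in\cD}\Vsnot F(E)$ equipped with the inclusion cocone, and comparing it with the colimit $\Vsnot F(\Union\cD)$ forces the underlying sets to coincide.
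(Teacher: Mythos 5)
Your proof is correct, but it takes a genuinely different route from the paper's. The paper obtains this lemma as a direct consequence of Proposition~\ref{prop:hom-continuous-dir-cocontinuous}: strictness turns the colimiting cocone $(\Vsnot F(\Relii_{E,\Union\cD}))_{E\in\cD}$ into a cocone of inclusions, and one then identifies the colimit of the resulting directed system of inclusions in $\REL$ with the union $\Union_{E\in\cD}\Vsnot F(E)$ — exactly the route you sketch in your last paragraph, including the two points it leaves implicit (that the union with inclusion cocone is indeed the colimit in $\REL$, and that the comparison iso between the two colimits, being compatible with inclusion cocones and a bijection, must be the identity). Your primary argument bypasses all of this: writing $\Id_{\Union\cD}=\Union_{E\in\cD}p_E$ with $p_E=\Relii_{E,\Union\cD}\Compl\Relip_{E,\Union\cD}$ and pushing this through $\Vsnot F$ uses only functoriality, local continuity applied to the directed family $(p_E)_{E\in\cD}$, and strictness (including the paper's observation that strictness also fixes $\Vsnot F(\Relip_{E,F})=\Relip_{\Vsnot F(E),\Vsnot F(F)}$, which itself rests on Lemma~\ref{lemma:rel-embedding-retraction}); the conclusion then falls out by comparing the diagonals of two equal identity relations. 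What each approach buys: the paper's route factors the statement through a reusable categorical fact (directed cocontinuity of $\Vsnot F\Compl\Relii$), whereas yours is more elementary and entirely self-contained, never needing to discuss colimits in $\REL$ or the passage from uniqueness-up-to-iso to on-the-nose equality of sets. Both are sound; your direct computation is arguably the cleaner proof of this particular equality.
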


\subsection{Variable sets and basic constructions on
  them}\label{sec:variable-sets}

\begin{definition}
  An $n$-ary \emph{variable set} is a strong functor $\Vsnot V:\REL^n\to\REL$
  such that $\Strfun{\Vsnot V}$ is locally continuous and strict.
\end{definition}
\label{sec:basic-variable-sets}
By the general considerations of
Section~\ref{sec:gen-strong-functors}, there is a constant strong
functor $\REL^n\to\REL$ with value $E$ for each set $E$. There are
projection strong functors $\REL^n\to\REL$, $\times$ (that is
$\ITens$) and $+$ (that is $\IPlus$) define strong functors
$\REL^2\to\REL$, $\Mfin\_$ (that is $\Excl\_$) defines a strong
functor $\REL\to\REL$. Strong functors on $\REL$ are stable under
composition, and if $\Vsnot V$ is a strong functor $\REL^n\to\REL$
then there is a ``dual'' strong functor $\Orth{\Vsnot V}$ (which is
actually identical to $\Vsnot V$ in this very simple model). To check
that these strong functors $\Vsnot V$ are variable sets we have only
to check that the underlying functors $\Strfun{\Vsnot V}$ are strict
locally continuous.

We deal with $\Excl\_$ and composition, the other
cases are similar.
We already defined the functor%
\footnote{We use the same notation for the strong functor and its
  underlying functor, this slight abuse of notations should not lead
  to confusions.} %
$\Excl\_$ in Section~\ref{sec:REL-LL-model}.
If $s\subseteq t\in\REL(E,F)$, it follows from the definition
that $\Excl s\subseteq\Excl t$.  Let $D\subseteq\REL(E,F)$ be
directed, we prove
$\Excl{(\Union D)}\subseteq\Union_{s\in D}\Excl s$: an element of
$\Excl{(\Union D)}$ is a pair
$(\Mset{\List a1k},\Mset{\List b1k})$
with $(a_i,b_i)\in\Union D$ for $i=1,\dots,k$. Since $D$ is directed,
there is an $s\in D$ such that $(a_i,b_i)\in s$ for $i=1,\dots,k$ and
the inclusion follows. Strictness is obvious.

\label{sec:comp-var-sets}

Let $\Vsnot V_i:\REL^n\to\REL$ be variable sets for $i=1,\dots,k$ and
let $\Vsnot W:\REL^k\to\REL$ be a variable set. Then the functor
$\Strfun{\Vsnot W}\Comp\Vect{\Strfun{\Vsnot V}}:\REL^n\to\REL$ is
clearly strict locally continuous (since these conditions
are preservation properties) from which it follows that the strong
functor $\Vsnot U=\Vsnot W\Comp\Vect{\Vsnot V}$ is a variable
type.

\subsubsection{Fixed point of a variable set}\label{sec:VS-fixpoints}
Let $\Vsnot F:\REL\to\REL$ be a strict locally continuous functor. Since
$\emptyset\subseteq\Vsnot F(\emptyset)$ we have
$\Vsnot F^n(\emptyset)\subseteq\Vsnot F^{n+1}(\emptyset)$ for all
$n\in\Nat$, by induction on $n$ and hence
$\Vsnot F(\Union_{n=0}^\infty\Vsnot F^n(\emptyset))=\Union\Vsnot
F^n(\emptyset)$ by Lemma~\ref{lemma:str-hom-cont-scott} since
$\Eset{\Vsnot F^n(\emptyset)\St n\in\Nat}$ is directed. Let
$\Funfp{\Vsnot F}=\Union_{n=0}^\infty\Vsnot F^n(\emptyset)$, so that
$(\Funfp{\Vsnot F},\Id_{\Funfp{\Vsnot F}})$ is an
$\Vsnot F$-coalgebra.

\begin{lemma}\label{lemma:rel-fixpoint-final}
  The coalgebra $(\Funfp{\Vsnot F},\Id)$ is final in $\COALGFUN\REL{\Vsnot F}$.
\end{lemma}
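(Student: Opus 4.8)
The plan is to read off what a coalgebra morphism into $(\Funfp{\Vsnot F},\Id)$ is: given any coalgebra $(A,f)$ with $f\in\REL(A,\Vsnot F(A))$, a relation $h\in\REL(A,\Funfp{\Vsnot F})$ is a coalgebra morphism precisely when $\Id\Compl h=\Vsnot F(h)\Compl f$, that is, when $h$ is a fixed point of the operator $\Phi(k)=\Vsnot F(k)\Compl f$ on the complete lattice $\REL(A,\Funfp{\Vsnot F})$ ordered by inclusion (here I use $\Vsnot F(\Funfp{\Vsnot F})=\Funfp{\Vsnot F}$, so $\Vsnot F(h)$ does land in $\Funfp{\Vsnot F}$). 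Thus finality amounts to showing that $\Phi$ has exactly one fixed point.

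Existence would be immediate: $\Phi$ is monotone, and since $\Vsnot F$ is locally continuous and relational composition preserves unions in each argument, $\Phi$ is Scott-continuous; by the Kleene fixed-point theorem it has a least fixed point $\Union_{n\in\Nat}\Phi^n(\emptyset)$, which is then a coalgebra morphism.

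For uniqueness I would stratify the target along the chain $E_n=\Vsnot F^n(\emptyset)$, whose union is $\Funfp{\Vsnot F}$, writing $\Relii_n,\Relip_n$ for the inclusion and retraction between $E_n$ and $\Funfp{\Vsnot F}$, so that $\Relip_n\Compl\Relii_n=\Id_{E_n}$. Strictness, together with $\Vsnot F(E_n)=E_{n+1}$, yields $\Vsnot F(\Relii_n)=\Relii_{n+1}$ and $\Vsnot F(\Relip_n)=\Relip_{n+1}$. Given any fixed point $h$ of $\Phi$, I set $g_n=\Relip_n\Compl h\in\REL(A,E_n)$, noting $g_0=\emptyset$ since $E_0=\emptyset$. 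The crux is the identity $\Relip_{n+1}\Compl\Vsnot F(h)=\Vsnot F(g_n)$: by functoriality and the two strictness equations one has $\Relii_{n+1}\Compl\Vsnot F(g_n)=\Vsnot F(\Relii_n\Compl\Relip_n\Compl h)=\Relii_{n+1}\Compl\Relip_{n+1}\Compl\Vsnot F(h)$, and precomposing with $\Relip_{n+1}$ gives the claim. Applying $\Relip_{n+1}$ to $h=\Vsnot F(h)\Compl f$ then produces the recursion $g_{n+1}=\Vsnot F(g_n)\Compl f$. Hence every $g_n$ is forced, independently of the chosen $h$, and since $\Union_n\Relii_n\Compl\Relip_n=\Id_{\Funfp{\Vsnot F}}$ we recover $h=\Union_n\Relii_n\Compl g_n$; so $h$ is unique.

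The step I expect to be delicate is precisely the key identity $\Relip_{n+1}\Compl\Vsnot F(h)=\Vsnot F(g_n)$, i.e.\ that the part of $\Vsnot F(h)$ landing in $E_{n+1}$ depends only on the part of $h$ landing in $E_n$. It is here that \emph{strictness}, and not merely local continuity, is essential, through $\Vsnot F(\Relii_n\Compl\Relip_n)=\Relii_{n+1}\Compl\Relip_{n+1}$: local continuity alone would map inclusions to injections rather than to honest inclusions, and the partial-identity bookkeeping underlying this computation would no longer go through.
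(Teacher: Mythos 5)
Your proof is correct and takes essentially the same route as the paper's: existence is the identical Kleene iteration ($e_0=\emptyset$, $e_{n+1}=\Vsnot F(e_n)\Compl f$, then take the union), and your uniqueness argument coincides with the paper's once one notes that the paper's relations $i_n$ (defined by $i_0=\emptyset$, $i_{n+1}=\Vsnot F(i_n)$, with $\Union_{n}i_n=\Id$) are exactly your partial identities $\Relii_n\Compl\Relip_n$ on $E_n=\Vsnot F^n(\emptyset)$, so that your forced recursion $g_{n+1}=\Vsnot F(g_n)\Compl f$ for $g_n=\Relip_n\Compl h$ matches the paper's induction $i_n\Compl e'=i_n\Compl e$ term by term. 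Your closing remark is also apt: strictness is indeed what the stratification rests on, and the paper uses it implicitly when asserting that $\Union_n i_n=\Id$ holds ``by definition of $\Funfp{\Vsnot F}$''.
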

Notice that $(\Funfp{\Vsnot F},\Id)$ is also an initial object in
$\ALGFUN\REL{\Vsnot F}$. When we insist on considering
$\Funfp{\Vsnot F}$ as a final coalgebra, we denote it as
$\Fungfp{\Vsnot F}$.

\begin{lemma}\label{lemma:hom-conts-stable-fixpoint}
  Let $\Vsnot F:\REL^{n+1}\to\REL$ be a strict locally continuous
  functor. The functor $\Fungfp{\Vsnot F}:\REL^n\to\REL$ is strict
  locally continuous.
\end{lemma}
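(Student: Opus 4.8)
The plan is to compute $\Fungfp{\Vsnot F}$ explicitly as a colimit of finite approximants and then transport strictness and local continuity through that colimit. Define functors $G_m:\REL^n\to\REL$ by $G_0(\Vect X)=\emptyset$ (the constant empty functor) and $G_{m+1}(\Vect X)=\Vsnot F(\Vect X,G_m(\Vect X))$, i.e.\ $G_{m+1}$ is the composite of $\Vsnot F$ with the functor $\Vect X\mapsto(\Vect X,G_m(\Vect X))$. Each $G_m$ is strict and locally continuous: $G_0$ is a constant functor (hence one of the basic variable sets of Section~\ref{sec:basic-variable-sets}), and the inductive step follows because strictness and local continuity are preservation properties stable under composition with the projection functors and with $\Vsnot F$ (Section~\ref{sec:comp-var-sets}). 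By the construction of the fixed point in Section~\ref{sec:VS-fixpoints}, on objects we have $\Fungfp{\Vsnot F}(\Vect X)=\Funfp{\Vsnot F_{\Vect X}}=\Union_{m\in\Nat}G_m(\Vect X)$.

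The key step is to establish the analogous formula on morphisms, namely $\Fungfp{\Vsnot F}(\Vect s)=\Union_{m\in\Nat}G_m(\Vect s)$ for every $\Vect s\in\REL^n(\Vect X,\Vect{X'})$. First I would build inclusion natural transformations $\eta_m:G_m\Rightarrow G_{m+1}$: the component $\eta_0$ at $\Vect X$ is the empty relation $\emptyset\subseteq\Vsnot F(\Vect X,\emptyset)$, and $\eta_{m+1}=\Vsnot F(\Id,\eta_m)$ is again componentwise an inclusion precisely because $\Vsnot F$ is strict (the pair $(\Id_{\Vect X},\eta_{m,\Vect X})$ is an inclusion in $\REL^{n+1}$). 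Reading the naturality square of $\eta_m$ at $\Vect s$ inside $\Fungfp{\Vsnot F}(\Vect X)\times\Fungfp{\Vsnot F}(\Vect{X'})$ via the inclusions $G_m(\Vect X)\subseteq\Fungfp{\Vsnot F}(\Vect X)$ shows that $G_m(\Vect s)\subseteq G_{m+1}(\Vect s)$, so $\{G_m(\Vect s)\}_m$ is a directed family and its union $h\in\REL(\Fungfp{\Vsnot F}(\Vect X),\Fungfp{\Vsnot F}(\Vect{X'}))$ is well defined. Local continuity of $\Vsnot F$ then gives $\Vsnot F(\Vect s,h)=\Union_m\Vsnot F(\Vect s,G_m(\Vect s))=\Union_m G_{m+1}(\Vect s)=h$. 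Since by Lemma~\ref{lemma:functor-gfp-general} the morphism $\Fungfp{\Vsnot F}(\Vect s)$ is uniquely characterized by $\Vsnot F(\Vect s,\Fungfp{\Vsnot F}(\Vect s))=\Fungfp{\Vsnot F}(\Vect s)$, we conclude $\Fungfp{\Vsnot F}(\Vect s)=h$.

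With this formula in hand both required properties follow routinely. For strictness, if $\Vect X\subseteq\Vect{X'}$ and $\Vect s=\Relii_{\Vect X,\Vect{X'}}$, then $G_m(\Vect s)=\Relii_{G_m(\Vect X),G_m(\Vect{X'})}$ by strictness of $G_m$, and the union of these diagonal inclusions over $m$ is exactly $\Relii_{\Fungfp{\Vsnot F}(\Vect X),\Fungfp{\Vsnot F}(\Vect{X'})}$. For local continuity, given a directed $D\subseteq\REL^n(\Vect X,\Vect{X'})$, local continuity of each $G_m$ gives $G_m(\Union D)=\Union_{\Vect s\in D}G_m(\Vect s)$, and interchanging the two directed unions yields $\Fungfp{\Vsnot F}(\Union D)=\Union_m\Union_{\Vect s\in D}G_m(\Vect s)=\Union_{\Vect s\in D}\Fungfp{\Vsnot F}(\Vect s)$.

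The main obstacle is the middle step: the object-level description as a union of approximants is immediate, but the action of $\Fungfp{\Vsnot F}$ on morphisms is given only abstractly by the universal property, so the real work is to verify that the naive candidate $\Union_m G_m(\Vect s)$ is a fixed point of $\Vsnot F(\Vect s,-)$ and then to invoke the uniqueness clause of Lemma~\ref{lemma:functor-gfp-general}. Establishing that the chain $\{G_m(\Vect s)\}_m$ is increasing — which rests on strictness of $\Vsnot F$ through the inclusion transformations $\eta_m$ — is the delicate point; local continuity then does the rest.
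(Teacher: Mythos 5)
Your proof is correct, and it reaches the conclusion by a recognizably different organization than the paper's. The approximants are the same (your $G_m(\Vect s)$ is exactly the sequence $s_0=\emptyset$, $s_{m+1}=\Vsnot F(\Vect s,s_m)$ that the paper uses), but you spend your one appeal to finality on establishing the master formula $\Fungfp{\Vsnot F}(\Vect s)=\Union_{m}G_m(\Vect s)$, by checking that the candidate union satisfies the fixed-point equation and invoking the uniqueness clause of Lemma~\ref{lemma:functor-gfp-general}; after that, strictness and local continuity are pure set-theoretic computations with unions. The paper instead takes the formula on morphisms as read off from the construction of the unique coalgebra morphism in the proof of Lemma~\ref{lemma:rel-fixpoint-final}, uses it only to get monotonicity of $\Fungfp{\Vsnot F}$ on hom-sets, and then proves local continuity and strictness by two separate applications of Lemma~\ref{lemma:equations-final-coalgebra}: in each case it exhibits an $l$ such that both the desired morphism and the candidate satisfy the same equation $\cF(f_i)\Compl l=f_i$, so they coincide by finality. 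Your route buys a reusable explicit description of the functor's action on morphisms, from which any similar preservation property would follow routinely; the paper's route buys an argument that never needs the approximant chain on morphisms to be increasing or natural, staying at the level of the abstract characterization --- which is the pattern that transfers to settings like $\NUTS$ (in the proof of Theorem~\ref{th:VNUTS-model}) where no such concrete union formula is available. One small simplification for your version: the delicate step you flag, $G_m(\Vect s)\subseteq G_{m+1}(\Vect s)$, does not need the inclusion natural transformations $\eta_m$ or strictness at all, since local continuity already gives monotonicity of $\Vsnot F$ on hom-sets, whence $G_{m+1}(\Vect s)=\Vsnot F(\Vect s,G_m(\Vect s))\subseteq\Vsnot F(\Vect s,G_{m+1}(\Vect s))=G_{m+2}(\Vect s)$ by direct induction --- exactly how the paper obtains the analogous monotonicity of the chain $e_n$ in Lemma~\ref{lemma:rel-fixpoint-final}; strictness remains genuinely needed only where you use it later, for the $G_m$ themselves and for the union-of-diagonals step.
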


Let $\Vsnot V:\REL^{n+1}\to\REL$ be a variable set, by
Lemma~\ref{lemma:strfun-gfp-general}, there is a unique strong functor
$\Fungfp{\Vsnot V}:\REL^n\to\REL$ which is characterized by:
$\Strfun{\Fungfp{\Vsnot V}}(\Vect E)=\Fungfp{\Strfun{\Vsnot V}_{\Vect
    E}}$, for each $\Vect s\in\REL^n(\Vect E,\Vect F)$,
$\Strfun{\Fungfp{\Vsnot V}}(\Vect s)=\Strfun{\Vsnot V}(\Vect
s,\Fungfp{\Strfun{\Vsnot V}}(\Vect s))$ and last
$\Strfun{\Vsnot V}(\Excl E\ITens\Vect F,\Strnat{\Vsnot V}_{E,\Vect
  F})=\Strnat{\Vsnot V}_{E,\Vect F}$.

\begin{lemma}\label{lemma:REL-variable-set-fixpoint}
  The functor $\Fungfp{\Vsnot V}$ is a variable set.
\end{lemma}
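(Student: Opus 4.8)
The plan is to observe that, by Lemma~\ref{lemma:strfun-gfp-general}, $\Fungfp{\Vsnot V}$ is \emph{already} a strong functor $\REL^n\to\REL$; consequently the only thing left to establish, in order to conclude that it is a variable set, is that its underlying functor $\Strfun{\Fungfp{\Vsnot V}}$ is strict and locally continuous. Thus the whole content of the lemma reduces to transporting the functor-level stability result Lemma~\ref{lemma:hom-conts-stable-fixpoint} to the strong-functor level.

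First I would check that the hypotheses of Lemma~\ref{lemma:strfun-gfp-general} are met in this concrete setting. For each $\Vect E\in\Obj{\REL^n}$ the functor $\Strfun{\Vsnot V}_{\Vect E}:\REL\to\REL$ is strict and locally continuous, being the restriction of the strict locally continuous functor $\Strfun{\Vsnot V}$ to its last argument; hence by Lemma~\ref{lemma:rel-fixpoint-final} the category $\COALGFUN\REL{\Strfun{\Vsnot V}_{\Vect E}}$ has a final object, namely $(\Funfp{\Strfun{\Vsnot V}_{\Vect E}},\Id)$ with $\Funfp{\Strfun{\Vsnot V}_{\Vect E}}=\Union_{k\in\Nat}\Strfun{\Vsnot V}_{\Vect E}^{k}(\emptyset)$. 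This both licenses the application of Lemma~\ref{lemma:strfun-gfp-general} and pins down the object part of $\Strfun{\Fungfp{\Vsnot V}}$ as this Scott-style fixed point.

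Next I would identify the underlying functor $\Strfun{\Fungfp{\Vsnot V}}$ with the functor $\Fungfp{\Strfun{\Vsnot V}}$ obtained by applying Lemma~\ref{lemma:functor-gfp-general} directly to the strict locally continuous functor $\Strfun{\Vsnot V}$. On objects both send $\Vect E$ to $\Fungfp{\Strfun{\Vsnot V}_{\Vect E}}$. On a morphism $\Vect s\in\REL^n(\Vect E,\Vect F)$, the characterization in Lemma~\ref{lemma:strfun-gfp-general} gives $\Strfun{\Fungfp{\Vsnot V}}(\Vect s)=\Strfun{\Vsnot V}(\Vect s,\Strfun{\Fungfp{\Vsnot V}}(\Vect s))$, which is precisely the equation that, by the uniqueness clause of Lemma~\ref{lemma:functor-gfp-general}, characterizes $\Fungfp{\Strfun{\Vsnot V}}(\Vect s)$. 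Hence $\Strfun{\Fungfp{\Vsnot V}}=\Fungfp{\Strfun{\Vsnot V}}$ as functors $\REL^n\to\REL$.

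Finally, Lemma~\ref{lemma:hom-conts-stable-fixpoint} applied to the strict locally continuous functor $\Strfun{\Vsnot V}:\REL^{n+1}\to\REL$ yields that $\Fungfp{\Strfun{\Vsnot V}}$ is strict and locally continuous; by the identification just made, so is $\Strfun{\Fungfp{\Vsnot V}}$, and therefore $\Fungfp{\Vsnot V}$ is a variable set. I do not expect any genuine obstacle: the strength of $\Fungfp{\Vsnot V}$ and its coherence diagrams are already supplied by Lemma~\ref{lemma:strfun-gfp-general}, so the only substantive point—stability of strict local continuity under the $\nu$ operation—has been isolated in Lemma~\ref{lemma:hom-conts-stable-fixpoint}, and the present lemma amounts to the bookkeeping that glues the strong-functor packaging onto that functor-level fact.
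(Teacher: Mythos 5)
Your proof is correct and follows essentially the same route as the paper: identify $\Strfun{\Fungfp{\Vsnot V}}$ with $\Fungfp{\Strfun{\Vsnot V}}$ via the characterizing equations of Lemma~\ref{lemma:strfun-gfp-general} (uniqueness from Lemma~\ref{lemma:functor-gfp-general}), then invoke Lemma~\ref{lemma:hom-conts-stable-fixpoint} for strict local continuity, the strength and its coherence being already supplied by Lemma~\ref{lemma:strfun-gfp-general}. The only difference is that you spell out the verification (existence of final coalgebras via Lemma~\ref{lemma:rel-fixpoint-final}, and the uniqueness argument for the identification of the underlying functors) that the paper leaves implicit in the two lines preceding the lemma.
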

\begin{proof}
  By the conditions above satisfied by $\Fungfp{\Vsnot V}$ we have
  that $\Strfun{\Fungfp{\Vsnot V}}=\Fungfp{\Strfun{\Vsnot V}}$ and
  hence $\Strfun{\Fungfp{\Vsnot V}}$ is strict locally continuous by
  Lemma~\ref{lemma:hom-conts-stable-fixpoint}.
\end{proof}

\subsubsection{A model of $\MULL$ based on variable sets}
\label{sec:strong-VS-Seely-model}
Let $\VREL n$ be the class of all $n$-ary variable sets, so that
$\VREL 0=\Obj\REL$. The fact that $(\REL,(\VREL n)_{n\in\Nat})$ is a
Seely model of $\MULL$ in the sense of
Definition~\ref{def:categorical-muLL-models} results mainly from the
fact that we take \emph{all} variable sets in the $\VREL n$'s so that
there is essentially nothing to check. More explicitly:
(\ref{enum:seel-mull-1}) holds by Section~\ref{sec:REL-LL-model},
(\ref{enum:seel-mull-2}) holds by construction,
(\ref{enum:seel-mull-3}) holds by the fact that variable sets compose
as explained in Section~\ref{sec:comp-var-sets} (notice that this
condition refers to the general composition of strong functors defined
at the beginning of Section~\ref{sec:strong-fun-LL-operations}),
(\ref{enum:seel-mull-4}) holds by
Section~\ref{sec:basic-variable-sets} and by the fact that the
De~Morgan dual of a strong functor is strong, see Section~\ref
{sec:strong-fun-LL-operations} and
(\ref{enum:seel-mull-5}) holds by Lemma~\ref{lemma:REL-variable-set-fixpoint}.

\section{Non-uniform totality spaces}\label{sec:NUTS}%
We enrich the model of Section~\ref{sec:sets-rel} with a notion of
totality, we use notations from that section for operations on sets and
relations.

\subsection{Basic definitions.}
Let $E$ be a set and let $\cT\subseteq\Part E$. We define
$
  \Orth\cT=\{u'\subseteq E\St\forall u\in\cT\ u\cap u'\not=\emptyset\}
$.
If $\cS\subseteq\cT\subseteq\Part E$ then
$\Orth\cT\subseteq\Orth\cS$. We also have $\cT\subseteq\Biorth\cT$ and
therefore $\Triorth\cT=\Orth\cT$.  The biorthogonal closure has a nice and
simple characterization.
\begin{lemma}\label{lemma:NUTS-biorth-uppper-closed}
  Let $\cT\subseteq\Part E$, then
  $\Biorth\cT=\Upcl\cT=\{v\subseteq E\St\exists u\in\cT\ u\subseteq
    v\}$.
\end{lemma}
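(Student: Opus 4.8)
The plan is to prove the two inclusions separately, the first being essentially formal and the second using a complement trick. For the inclusion $\Upcl\cT\subseteq\Biorth\cT$, I would first observe that $\Biorth\cT$ is upward closed: if $v\in\Biorth\cT$ and $v\subseteq w$, then for every $u'\in\Orth\cT$ we have $w\cap u'\supseteq v\cap u'\neq\emptyset$, whence $w\in\Biorth\cT$. Since the text has already noted $\cT\subseteq\Biorth\cT$, and $\Upcl\cT$ is by definition the smallest upward-closed family containing $\cT$, this immediately yields $\Upcl\cT\subseteq\Biorth\cT$.

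For the converse inclusion $\Biorth\cT\subseteq\Upcl\cT$, the key idea is to pass to complements. Let $v\in\Biorth\cT$ and suppose, toward a contradiction, that $v\notin\Upcl\cT$, that is, no $u\in\cT$ satisfies $u\subseteq v$. Then every $u\in\cT$ contains at least one point lying outside $v$, which is exactly the statement that $u\cap(E\setminus v)\neq\emptyset$ for all $u\in\cT$; hence $E\setminus v\in\Orth\cT$. But $v\in\Biorth\cT$ means that $v$ meets every member of $\Orth\cT$, so in particular $v\cap(E\setminus v)\neq\emptyset$, which is absurd. Therefore some $u\in\cT$ satisfies $u\subseteq v$, i.e.~$v\in\Upcl\cT$.

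I expect no real obstacle here: the whole argument is the elementary duality between ``meeting every member of $\cT$'' and ``containing some member of $\cT$'', mediated by the correspondence $u\subseteq v\iff u\cap(E\setminus v)=\emptyset$. The only point requiring (trivial) care is the translation of the negated statement ``for all $u\in\cT$, $u\not\subseteq v$'' into ``$E\setminus v\in\Orth\cT$'', which is just the contrapositive of that correspondence applied uniformly over $u\in\cT$.
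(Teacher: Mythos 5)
Your proof is correct and follows essentially the same route as the paper: the hard inclusion $\Biorth\cT\subseteq\Upcl\cT$ is obtained in both cases by the complement trick (the paper argues directly that $E\setminus u\notin\Orth\cT$ and extracts $v\in\cT$ with $v\subseteq u$, while you phrase the same step as a contradiction), and the easy inclusion is the upward-closedness of biorthogonals together with $\cT\subseteq\Biorth\cT$, which the paper dismisses as obvious and you spell out.
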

\begin{proof}
  The $\supseteq$ direction is obvious, the converse is not difficult
  either: let $u\in\Biorth\cT$.  Then $E\setminus u\notin\Orth\cT$, so
  there is $v\in\cT$ such that $v\cap(E\setminus u)=\emptyset$, that
  is $v\subseteq u$. Hence $u\in\Upcl\cT$.
\end{proof}

A \emph{non-uniform totality space} (NUTS) is a pair
$X=(\Web X,\Total X)$ where $\Web X$ is a set and
$\Total X\subseteq\Part{\Web X}$ satisfies
$\Total X=\Biorth{\Total X}$, that is $\Total X=\Upcl{\Total X}$. Of
course we set $\Orth X=(\Web X,\Orth{\Total X})$.

\begin{example}
  Let $X=(\Nat,\Total X)$ where $\Total X$ is the set of all infinite
  subsets of $\Nat$. It is a NUTS because a superset of an infinite
  set is infinite. Then $\Web{\Orth X}=\Nat$ and $\Total{\Orth X}$ is
  the set of all cofinite subsets of $\Nat$ (the subsets $u$ of $\Nat$
  such that $\Nat\setminus u$ is finite).
  If, with the same web $\Nat$, we take
  $\Total X=\Eset{u\subseteq\Nat\St u\not=\emptyset}$ (again
  $\Total X=\Upcl{\Total X}$ obviously), then
  $\Total{\Orth X}=\Eset\Nat$.
\end{example}

We define four basic NUTS:
$\Zero=(\emptyset,\emptyset)$,
$\Top=(\emptyset,\Eset\emptyset)$ and
$\One=\Botlin=(\Eset\Oneelem,\Eset{\Eset\Oneelem})$.
%
Given NUTS $X_1$ and $X_2$ we define a NUTS $\Tens{X_1}{X_2}$ by
$\Web{\Tens{X_1}{X_2}}=\Web{X_1}\times\Web{X_2}$ and
\(
  \Total{\Tens{X_1}{X_2}}
  =\Upcl{\Eset{\Tens{u_1}{u_2}\St u_i\in\Total{X_i}\text{ for }i=1,2}}
  \) where $\Tens{u_1}{u_2}=u_1\times u_2$.
And then we define $\Limpl XY=\Orthp{\Tens X{\Orth Y}}$.

\begin{lemma}\label{lemma:nuts-limpl-charact}
  $t\in\Total{\Limpl XY}\Equiv\forall u\in\Total X\ \Matappa tu\in\Total Y$.
\end{lemma}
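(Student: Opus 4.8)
The plan is to unfold the definition $\Limpl XY=\Orthp{\Tens X{\Orth Y}}$ and reduce the membership $t\in\Total{\Limpl XY}$ to a pointwise orthogonality condition that can be read off directly. Note first that $\Web{\Limpl XY}=\Web X\times\Web{\Orth Y}=\Web X\times\Web Y$, so that $t$ ranges over relations $t\subseteq\Web X\times\Web Y$ and the relational image $\Matappa tv$ is meaningful.

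First I would expand $\Total{\Limpl XY}=\Orth{\Total{\Tens X{\Orth Y}}}$. By the definition of the tensor, $\Total{\Tens X{\Orth Y}}=\Upcl\cS$ where $\cS=\Eset{v\times w\St v\in\Total X,\ w\in\Orth{\Total Y}}$ (recall $\Total{\Orth Y}=\Orth{\Total Y}$). Since $\Upcl\cS=\Biorth\cS$ by Lemma~\ref{lemma:NUTS-biorth-uppper-closed} and $\Triorth\cS=\Orth\cS$, this gives $\Total{\Limpl XY}=\Orth{\Upcl\cS}=\Orth\cS$.

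Then I would peel off the orthogonal. By definition, $t\in\Orth\cS$ holds iff $t\cap(v\times w)\neq\emptyset$ for every $v\in\Total X$ and every $w\in\Orth{\Total Y}$. The key elementary observation is that $t\cap(v\times w)\neq\emptyset$ says there is a pair $(a,b)\in t$ with $a\in v$ and $b\in w$, which is exactly $\Matappa tv\cap w\neq\emptyset$. Hence $t\in\Total{\Limpl XY}$ iff for all $v\in\Total X$ and all $w\in\Orth{\Total Y}$ one has $\Matappa tv\cap w\neq\emptyset$. Fixing $v\in\Total X$, the condition ``$\Matappa tv\cap w\neq\emptyset$ for all $w\in\Orth{\Total Y}$'' is precisely $\Matappa tv\in\Orth{(\Orth{\Total Y})}=\Biorth{\Total Y}$, and $\Biorth{\Total Y}=\Total Y$ since $Y$ is a NUTS. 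This yields the claimed equivalence.

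There is essentially no serious obstacle here: the statement is a direct computation once the tensor and the orthogonal are unfolded. The only points requiring a moment's care are the simplification $\Orth{\Upcl\cS}=\Orth\cS$ (for which I rely on $\Upcl\cS=\Biorth\cS$ from Lemma~\ref{lemma:NUTS-biorth-uppper-closed} together with $\Triorth\cS=\Orth\cS$) and the bookkeeping turning $t\cap(v\times w)\neq\emptyset$ into a statement about the relational image $\Matappa tv$. Both are routine, and the final appeal to $\Biorth{\Total Y}=\Total Y$ is exactly where the hypothesis that $Y$ is a NUTS is used.
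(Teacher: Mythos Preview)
Your proof is correct and follows essentially the same approach as the paper: both unfold $\Total{\Limpl XY}=\Orth{\Total{\Tens X{\Orth Y}}}$, reduce to the condition $t\cap(v\times w)\neq\emptyset$ for all $v\in\Total X$ and $w\in\Orth{\Total Y}$, rewrite this as $(\Matappa tv)\cap w\neq\emptyset$, and conclude via $\Biorth{\Total Y}=\Total Y$. Your reduction $\Orth{\Upcl\cS}=\Orth\cS$ (through Lemma~\ref{lemma:NUTS-biorth-uppper-closed} and $\Triorth\cS=\Orth\cS$) makes explicit a step the paper leaves implicit in the converse direction, namely that checking orthogonality against the generators $v\times w$ suffices, but this is the only cosmetic difference.
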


We define the category $\NUTS$ whose objects are the NUTS and
$\NUTS(X,Y)=\Total{\Limpl XY}$, composition being defined as the usual
composition in $\REL$ (relational composition) and identities as the
diagonal relations. Lemma~\ref{lemma:nuts-limpl-charact} shows that we
have indeed defined a category.

\subsubsection{Multiplicative structure}

\begin{lemma}\label{lemma:NUTS-iso-charact}
  Let $X$ and $Y$ be NUTS and $t\in\NUTS(X,Y)$. Then $t$ is an iso in
  $\NUTS$ iff $t$ is (the graph of) a bijection $\Web X\to\Web Y$ such
  that $\forall u\subseteq\Web X\ u\in\Total X\Equiv t(u)\in\Total Y$.
\end{lemma}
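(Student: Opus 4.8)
The plan is to prove the two implications separately, using Lemma~\ref{lemma:nuts-limpl-charact} to recognize $\NUTS$-morphisms and Lemma~\ref{lemma:rel-embedding-retraction} to recognize bijections among relations, so that essentially no combinatorial bookkeeping on relations is left to do by hand.

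For the ``only if'' direction, suppose $t$ is an iso in $\NUTS$ with inverse $t'\in\NUTS(Y,X)$. Since composition and identities in $\NUTS$ are those of $\REL$, the equations $t't=\Id_{\Web X}$ and $tt'=\Id_{\Web Y}$ hold as relations. I would feed these into Lemma~\ref{lemma:rel-embedding-retraction}: from $t't=\Id_{\Web X}$ together with $tt'\subseteq\Id_{\Web Y}$ the lemma already yields that $t$ is the graph of a (total) injective function and that $t'$ is its relational converse; applying the same lemma with the roles of $t$ and $t'$ interchanged shows $t'$ is a total injective function as well, so $t$ is onto and hence a bijection $\Web X\to\Web Y$ with $t'=t^{-1}$. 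For the totality equivalence, the implication $u\in\Total X\Rightarrow \Matappa tu\in\Total Y$ is exactly the content of $t\in\NUTS(X,Y)$ read through Lemma~\ref{lemma:nuts-limpl-charact} (here $\Matappa tu$ is the direct image $t(u)$). The converse uses $t'\in\NUTS(Y,X)$: if $\Matappa tu\in\Total Y$ then $\Matappa{t'}{(\Matappa tu)}\in\Total X$ by that lemma, and since $t$ is a bijection with inverse $t'$ we have $\Matappa{t'}{(\Matappa tu)}=u$, whence $u\in\Total X$.

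For the ``if'' direction, assume $t$ is the graph of a bijection $\Web X\to\Web Y$ with $u\in\Total X\Equiv \Matappa tu\in\Total Y$, and let $t^{-1}$ be the graph of the inverse bijection. I would first check $t^{-1}\in\NUTS(Y,X)$ via Lemma~\ref{lemma:nuts-limpl-charact}: for $v\in\Total Y$, writing $u=\Matappa{t^{-1}}v$ we have $\Matappa tu=v\in\Total Y$, so $u\in\Total X$ by the assumed equivalence; hence $t^{-1}$ sends total sets to total sets and is a morphism. Since $t$ is a bijection, $t^{-1}t=\Id_{\Web X}$ and $tt^{-1}=\Id_{\Web Y}$ hold as relations, so $t^{-1}$ is a two-sided inverse of $t$ in $\NUTS$ and $t$ is an iso.

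The only points needing care are the relational composition order and the identifications of direct images under a bijection, namely that $\Matappa tu$ is $t(u)$ and that $\Matappa{t^{-1}}{(\Matappa tu)}=u$. These are routine, and the genuine content is already packaged in the two cited lemmas: Lemma~\ref{lemma:rel-embedding-retraction} converts the categorical iso condition into the bijection statement, and Lemma~\ref{lemma:nuts-limpl-charact} converts membership in $\NUTS(X,Y)$ into the preservation-of-totality statement. I therefore do not expect any serious obstacle here; the mild subtlety is simply making sure the totality equivalence is obtained in both directions, one from $t$ being a morphism and the other from its inverse being a morphism.
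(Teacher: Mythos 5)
Your proof is correct and follows essentially the same route as the paper's: iso in $\NUTS$ gives a bijection because composition is relational, and the totality equivalence comes from $t$ and $t^{-1}$ both being morphisms, while the converse direction just checks that $t^{-1}$ is a morphism. The only difference is that you fill in details the paper leaves implicit, deriving the fact that isos in $\REL$ are bijections from Lemma~\ref{lemma:rel-embedding-retraction} rather than citing it as known, and writing out the converse that the paper calls obvious.
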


\begin{lemma}\label{lemma:nuts-orth-morph}
  Let $t\subseteq\Web X\times\Web Y$. One has $t\in\NUTS(X,Y)$ iff
  $\Orth t=\Eset{(b,a)\St (a,b)\in t}\in\NUTS(\Orth Y,\Orth X)$.
\end{lemma}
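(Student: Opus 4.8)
The plan is to reduce the biconditional to a single implication and then exploit an elementary symmetry between relational composition and intersection. First I would observe that both duality operations are involutive: for relations $\Orthp{\Orth t}=t$ directly from the definition $\Orth t=\Eset{(b,a)\St(a,b)\in t}$, and for NUTS $\Orth{\Orth X}=X$ since $\Total X=\Biorth{\Total X}$ by the defining condition. Consequently it suffices to prove only the forward implication, call it (*), that $t\in\NUTS(X,Y)\Implies\Orth t\in\NUTS(\Orth Y,\Orth X)$ for all $X,Y,t$: applying (*) to the relation $\Orth t$ and the spaces $\Orth Y,\Orth X$ gives $\Orthp{\Orth t}\in\NUTS(\Orth{\Orth X},\Orth{\Orth Y})$, which by involutivity is exactly $t\in\NUTS(X,Y)$, yielding the converse.

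For (*) I would use Lemma~\ref{lemma:nuts-limpl-charact}, which says $t\in\NUTS(X,Y)$ iff $\Matappa tu\in\Total Y$ for every $u\in\Total X$, together with $\Total{\Orth Y}=\Orth{\Total Y}$ and $\Total{\Orth X}=\Orth{\Total X}$. Thus, assuming $t\in\NUTS(X,Y)$, applying the same characterization to the spaces $\Orth Y,\Orth X$ reduces the goal to showing that for every $v\in\Orth{\Total Y}$ the set $\Matappa{\Orth t}v$ lies in $\Orth{\Total X}$, i.e.\ that $\Matappa{\Orth t}v$ meets every $u\in\Total X$.

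The heart of the argument is the symmetry
\[
  (\Matappa{\Orth t}v)\cap u\not=\emptyset
  \Equiv
  \exists (a,b)\in t\ (a\in u\text{ and }b\in v)
  \Equiv
  (\Matappa tu)\cap v\not=\emptyset,
\]
which holds because $\Matappa{\Orth t}v=\Eset{a\St\exists b\in v\ (a,b)\in t}$. Given $u\in\Total X$, the hypothesis gives $\Matappa tu\in\Total Y$, and since $v\in\Orth{\Total Y}$ we get $(\Matappa tu)\cap v\not=\emptyset$; by the displayed equivalence this is precisely $(\Matappa{\Orth t}v)\cap u\not=\emptyset$. As $u\in\Total X$ was arbitrary, $\Matappa{\Orth t}v\in\Orth{\Total X}$, proving (*).

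I expect no genuine obstacle here: once Lemma~\ref{lemma:nuts-limpl-charact} and the identity $\Total{\Orth X}=\Orth{\Total X}$ are in hand, everything is unwinding definitions plus the one symmetric observation above. The only point requiring care is the bookkeeping of which dual appears where -- the hypothesis lives in $\Total Y$, the test set $v$ in $\Orth{\Total Y}$, and the conclusion in $\Orth{\Total X}$ -- and the reduction to the single implication (*) keeps this bookkeeping minimal.
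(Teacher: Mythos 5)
Your proof is correct, and its mathematical core coincides with the paper's. The paper disposes of this lemma in one sentence: it is an immediate consequence of Lemma~\ref{lemma:nuts-limpl-charact} and of the identities $\Limplp XY=\Orthp{\Tens X{\Orth Y}}$ and $\Limplp{\Orth Y}{\Orth X}=\Orthp{\Tens{\Orth Y}{X}}$, so that the two hom-sets are orthogonals of totality structures living on transposed tensors, and $t$ meets $u\times v'$ exactly when $\Orth t$ meets $v'\times u$; this symmetric unfolding yields both directions of the equivalence at once. Your displayed equivalence $(\Matappa{\Orth t}{v})\cap u\not=\emptyset\Equiv(\Matappa tu)\cap v\not=\emptyset$ is precisely that symmetry, and your appeal to Lemma~\ref{lemma:nuts-limpl-charact} together with $\Total{\Orth X}=\Orth{\Total X}$ is the same use of the characterization. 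The only genuine structural difference is your preliminary reduction: you note that both dualities are involutive ($\Orthp{\Orth t}=t$, and $\Orth{\Orth X}=X$ since $\Total X=\Biorth{\Total X}$) and then prove only the forward implication. That trick halves the bookkeeping, at the price of the involutivity observation, whereas the paper's packaging gets the biconditional in one stroke; both are sound, and your version has the merit of spelling out the element-level argument the paper declares obvious --- indeed it is the same computation that the appendix proofs of Lemmas~\ref{lemma:limpl-tens-charact} and~\ref{lemma:nuts-excl-map} carry out when they invoke the present lemma.
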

\begin{proof}
  This is an obvious consequence of
  Lemma~\ref{lemma:nuts-limpl-charact} and of the fact that
  $\Limplp XY=\Orth{\Tensp X{\Orth Y}}$ and
  $\Limplp{\Orth Y}{\Orth X}=\Orthp{\Tens{\Orth Y}{X}}$.
\end{proof}

\begin{lemma}\label{lemma:limpl-tens-charact}
  Let $t\subseteq\Web{\Limpl{\Tens{X_1}{X_2}}{Y}}$. One has
  $t\in\NUTS(\Tens{X_1}{X_2},Y)$ iff for all $u_1\in\Total{X_1}$ and
  $u_2\in\Total{X_2}$ one has
  $\Matappa t{\Tensp{u_1}{u_2}}\in\Total Y$.
\end{lemma}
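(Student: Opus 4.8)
The plan is to reduce the statement to the characterization already established in Lemma~\ref{lemma:nuts-limpl-charact}. Applying that lemma with $X=\Tens{X_1}{X_2}$, the condition $t\in\NUTS(\Tens{X_1}{X_2},Y)$ is equivalent to: for every $u\in\Total{\Tens{X_1}{X_2}}$ one has $\Matappa tu\in\Total Y$. So the task is to show that this quantification over \emph{all} totality witnesses of $\Tens{X_1}{X_2}$ is equivalent to the quantification over just the \emph{rectangles} $\Tensp{u_1}{u_2}$ with $u_i\in\Total{X_i}$. The key ingredient is the explicit description $\Total{\Tens{X_1}{X_2}}=\Upcl{\Eset{\Tensp{u_1}{u_2}\St u_i\in\Total{X_i}}}$, which says precisely that the elements of $\Total{\Tens{X_1}{X_2}}$ are exactly the supersets of such rectangles.

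For the left-to-right direction I would simply instantiate. Each rectangle $\Tensp{u_1}{u_2}$ with $u_i\in\Total{X_i}$ lies in its own upward closure and hence belongs to $\Total{\Tens{X_1}{X_2}}$; thus the assumption $t\in\NUTS(\Tens{X_1}{X_2},Y)$, read through Lemma~\ref{lemma:nuts-limpl-charact}, yields $\Matappa t{\Tensp{u_1}{u_2}}\in\Total Y$ immediately.

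For the right-to-left direction, suppose $\Matappa t{\Tensp{u_1}{u_2}}\in\Total Y$ for all $u_1\in\Total{X_1}$ and $u_2\in\Total{X_2}$, and take an arbitrary $u\in\Total{\Tens{X_1}{X_2}}$. By the rectangle description there exist $u_1\in\Total{X_1}$ and $u_2\in\Total{X_2}$ with $\Tensp{u_1}{u_2}\subseteq u$. Relational application is monotone, so $\Matappa t{\Tensp{u_1}{u_2}}\subseteq\Matappa tu$; since the left-hand side is in $\Total Y$ by hypothesis, and $\Total Y$ is upward closed (because $Y$ is a NUTS, so $\Total Y=\Biorth{\Total Y}=\Upcl{\Total Y}$ by Lemma~\ref{lemma:NUTS-biorth-uppper-closed}), we conclude $\Matappa tu\in\Total Y$. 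Applying Lemma~\ref{lemma:nuts-limpl-charact} once more gives $t\in\NUTS(\Tens{X_1}{X_2},Y)$.

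There is no serious obstacle here: the argument is essentially the observation that testing a morphism against the tensor totality can be restricted to the generating rectangles. The only two facts that carry the proof are the monotonicity of $\Matappa t{\Dummy}$ under inclusion and the upward closure of $\Total Y$; the mildly delicate point to state carefully is that membership in $\Total{\Tens{X_1}{X_2}}$ grants an \emph{included} rectangle rather than an equal one, which is exactly what the monotonicity and upward-closure steps are designed to absorb.
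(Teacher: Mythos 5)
Your proof is correct, but it takes a genuinely different route from the paper's. The paper dispatches necessity as obvious (as you do), and for sufficiency it never touches the upward-closure description of $\Total{\Tens{X_1}{X_2}}$: instead it transposes $t$ and proves $\Orth t\in\Total{\Limpl{\Orth Y}{\Orthp{\Tens{X_1}{X_2}}}}$ (which suffices by Lemma~\ref{lemma:nuts-orth-morph}), using the identity $\Total{\Orthp{\Tens{X_1}{X_2}}}=\Orth{\Eset{\Tens{u_1}{u_2}\St u_i\in\Total{X_i}}}$ — an instance of $\Triorth{\cS}=\Orth{\cS}$ — so that the only thing to check is that $\Matappa{\Orth t}{v'}$ meets every rectangle, which is a direct rewriting of the hypothesis $\Matappa t{\Tensp{u_1}{u_2}}\in\Total Y$ tested against $v'\in\Total{\Orth Y}$. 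You instead stay on the primal side: Lemma~\ref{lemma:nuts-limpl-charact} reduces membership in $\NUTS(\Tens{X_1}{X_2},Y)$ to testing against all of $\Total{\Tens{X_1}{X_2}}$, and you shrink that test to the generating rectangles using exactly two facts, monotonicity of $u\mapsto\Matappa tu$ and upward closure of $\Total Y$ (Lemma~\ref{lemma:NUTS-biorth-uppper-closed}). Both arguments are sound and of comparable length. Yours is the more elementary, and it makes visible that the lemma rests on the special feature of this model that bidual closure is plain upward closure; the paper's orthogonality-style argument is the one that would transfer to settings where the bidual closure has no such simple description (e.g.\ totality combined with coherence), since it never needs to know what the elements of $\Total{\Tens{X_1}{X_2}}$ look like, only what its orthogonal is.
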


\begin{lemma}\label{lemma:Assoc-tens-limpl}
  The bijection $\Assoc_{\Web{X_1},\Web{X_2},\Web Y}$ is an isomorphism from
  $\Limpl{\Tensp{X_1}{X_2}}{Y}$ to $\Limpl{X_1}{\Limplp{X_2}{Y}}$.
\end{lemma}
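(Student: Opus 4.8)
The plan is to invoke the isomorphism criterion of Lemma~\ref{lemma:NUTS-iso-charact}. First I would unfold the two webs involved. Since $\Limpl XY=\Orthp{\Tens X{\Orth Y}}$ and $\Web{\Orth Z}=\Web Z$, the web of $\Limpl{\Tensp{X_1}{X_2}}{Y}$ is $(\Web{X_1}\times\Web{X_2})\times\Web Y$ while that of $\Limpl{X_1}{\Limplp{X_2}{Y}}$ is $\Web{X_1}\times(\Web{X_2}\times\Web Y)$, and $\Assoc_{\Web{X_1},\Web{X_2},\Web Y}$ is exactly the canonical bijection $((a_1,a_2),b)\mapsto(a_1,(a_2,b))$ between them. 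By Lemma~\ref{lemma:NUTS-iso-charact} it then suffices to show that for every $w\subseteq(\Web{X_1}\times\Web{X_2})\times\Web Y$ we have $w\in\Total{\Limpl{\Tensp{X_1}{X_2}}{Y}}$ iff $\Assoc(w)\in\Total{\Limpl{X_1}{\Limplp{X_2}{Y}}}$, where $\Assoc(w)=\Eset{(a_1,(a_2,b))\St((a_1,a_2),b)\in w}$ is the image of $w$ under this bijection.

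The heart of the argument is a routine computation of relational application. Fixing $u_1\in\Total{X_1}$ and $u_2\in\Total{X_2}$ and unfolding the definition of $\Matappa{}{}$ twice, I would check that $\Matappa{(\Matappa{\Assoc(w)}{u_1})}{u_2}=\Eset{b\St\exists a_1\in u_1,\ \exists a_2\in u_2,\ ((a_1,a_2),b)\in w}$; since $\Tensp{u_1}{u_2}=u_1\times u_2$, this is literally the same set as $\Matappa w{\Tensp{u_1}{u_2}}$. In other words, applying $\Assoc(w)$ successively to $u_1$ and then $u_2$ computes exactly the single application of $w$ to the product $u_1\times u_2$.

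It then remains to combine this identity with the available totality characterizations. By Lemma~\ref{lemma:limpl-tens-charact}, $w\in\Total{\Limpl{\Tensp{X_1}{X_2}}{Y}}$ holds iff $\Matappa w{\Tensp{u_1}{u_2}}\in\Total Y$ for all $u_1\in\Total{X_1}$ and $u_2\in\Total{X_2}$. Applying Lemma~\ref{lemma:nuts-limpl-charact} first to the outer $\Limpl{X_1}{(-)}$ and then to the inner $\Limpl{X_2}{Y}$, membership $\Assoc(w)\in\Total{\Limpl{X_1}{\Limplp{X_2}{Y}}}$ holds iff for all $u_1\in\Total{X_1}$ one has $\Matappa{\Assoc(w)}{u_1}\in\Total{\Limplp{X_2}{Y}}$, i.e.\ iff for all $u_1\in\Total{X_1}$ and $u_2\in\Total{X_2}$ one has $\Matappa{(\Matappa{\Assoc(w)}{u_1})}{u_2}\in\Total Y$. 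By the computation above these two conditions coincide verbatim, yielding the desired equivalence and hence that $\Assoc$ is an isomorphism. No step presents a genuine difficulty; the only care needed is the bookkeeping of the nested web identifications together with the implicit use of $\Web{\Orth Z}=\Web Z$ when reading off the two webs.
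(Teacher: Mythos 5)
Your proof is correct and follows essentially the same route as the paper's: both hinge on the identity $\Matappa{(\Matappa{\Assoc(w)}{u_1})}{u_2}=\Matappa w{\Tensp{u_1}{u_2}}$ and combine Lemma~\ref{lemma:limpl-tens-charact} with Lemma~\ref{lemma:nuts-limpl-charact} to transfer totality across the bijection. The only difference is presentational: you phrase it as a single equivalence justified by the iso criterion of Lemma~\ref{lemma:NUTS-iso-charact}, whereas the paper proves the two directions separately and leaves that criterion implicit.
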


We turn now $\ITens$ into a functor, its action on morphisms
being defined as in $\REL$. Let $t_i\in\NUTS(X_i,Y_i)$ for $i=1,2$, we
have $\Tens{t_1}{t_2}\in\NUTS(\Tens{X_1}{X_2},\Tens{Y_1}{Y_2})$ by
Lemma~\ref{lemma:limpl-tens-charact} and by the equation
\(
  \Matappa{\Tensp{t_1}{t_2}}{\Tensp{u_1}{u_2}}
  =\Tens{\Matappap{t_1}{u_1}}{\Matappap{t_2}{u_2}}
\).
This functor is monoidal, with unit $\One$ and symmetric monoidality
isomorphisms $\Leftu$, $\Rightu$, $\Sym$ and $\Assoc$ defined as in
$\REL$. The only non-trivial thing to check is that $\Assoc$ is indeed
a morphism, namely
\(
  \Assoc_{\Web{X_1},\Web{X_2},\Web{X_3}}
  \in\NUTS(\Tens{\Tensp{X_1}{X_2}}{X_3},\Tens{X_1}{\Tensp{X_2}{X_3}})
\).
This results from Lemma~\ref{lemma:Assoc-tens-limpl} and from the
observation that
\(
  \Orthp{\Tens{\Tensp{X_1}{X_2}}{X_3}}
  =\Limplp{\Tensp{X_1}{X_2}}{\Orth{X_3}}\) and\\
  \(\Orthp{\Tens{X_1}{\Tensp{X_2}{X_3}}}
  =\Limplp{X_1}{\Limplp{X_2}{\Orth{X_3}}}
\).

The SMC category $\NUTS$ is closed, with $\Limpl XY$ as internal hom
object from $X$ to $Y$, and evaluation morphism
\( \Evlin=\Eset{(((a,b),a),b\St a\in\Web X\text{ and }b\in\Web Y} \)
which indeed belongs to $\NUTS(\Tens{\Limplp XY}{X},Y)$ by
Lemma~\ref{lemma:limpl-tens-charact} since, for all
$t\in\Total{\Limpl XY}$ and $u\in\Total X$ we have
\( \Matapp\Evlin{\Tensp tu}=\Matapp tu\in\Total Y \).  This category
$\NUTS$ is also $\ast$-autonomous with dualizing object $\Fbot=\One$.

\subsubsection{Additive structure}

Let $(X_i)_{i\in I}$ be an at most countable family of objects of
$\NUTS$. We define $X=\Bwith_{i\in I}X_i$ by:
$\Web X=\Union_{i\in I}\Eset i\times\Web{X_i}$ and
\( \Total X=\{u\subseteq\Web X\St\forall i\in I\ \Matappa{\Proj
  i}{u}\in\Total{X_i}\} \)
where $\Proj i=\Eset{((i,a),a)\St a\in \Web{X_i}}$.  It is clear that
$\Total X=\Upcl{\Total X}$ and hence $X$ is an object of $\NUTS$. By
definition of $X$ and by Lemma~\ref{lemma:nuts-limpl-charact} we have
$\forall i\in I\ \Proj i\in\NUTS(X,X_i)$. Given
$\Vect t=(t_i)_{i\in I}$ with $\forall i\in I\ t_i\in\NUTS(Y,X_i)$, we
have $\Tuple{\Vect t}\in\NUTS(Y,X)$ as easily checked (using
Lemma~\ref{lemma:nuts-limpl-charact} again). It follows that
$(\Bwith_{i\in I}X_i,(\Proj i)_{i\in I})$ is the cartesian product of
the $X_i$'s in $\NUTS$.  This shows that the category $\NUTS$ has all
countable products and hence is cartesian.  Since it is
$\ast$-autonomous, the category $\NUTS$ is also cocartesian, coproduct
being given by $\Bplus_{i\in I}X_i=\Orthp{\Bwith_{i\in
    I}\Orth{X_i}}$. It follows that
$X=\Bplus_{i\in I}X_i=\Orthp{\Bwith_{i\in I}\Orth{X_i}}$
satisfies 
$\Web X=\Union_{i\in I}\Eset i\times\Web{X_i}$
and %

{\footnotesize
\[\Total X=
   \{v\subseteq\Union_{i\in I}\Eset i\times\Web{X_i}
   \St\exists i\in I\,\exists u\in\Total{X_i}\
   \Eset i\times u\subseteq v\}
 \]}%
as easily checked. Notice that the final object is
$\Top=(\emptyset,\Eset\emptyset)$ and that
$\Zero=\Orth\Top=(\emptyset,\emptyset)$.

\subsubsection{Exponential}\label{sec:NUTS-exponential}
We extend the exponential of $\REL$ with totality. If
$u\in\Part{\Web X}$ we set $\Promhc u=\Mfin u\in\Part{\Web{\Excl X}}$.
Then we set $\Web{\Excl X}=\Mfin{\Web X}$ and
\(
  \Total{\Excl X}=\Biorth{\Eset{\Promhc u\St u\in\Total X}}
  =\Upcl{\Eset{\Promhc u\St u\in\Total X}}
\).

\begin{lemma}\label{lemma:nuts-excl-map}
  Let $t\subseteq\Mfin{\Web X}\times\Web Y$. One has
  $t\in\NUTS(\Excl X,Y)$ iff for all $u\in\Total X$ one has
  $\Matappa t{\Promhc u}\in\Total Y$.
\end{lemma}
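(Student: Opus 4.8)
The plan is to reduce the statement to the already-established characterization of morphisms into an internal hom. Applying Lemma~\ref{lemma:nuts-limpl-charact} with $\Excl X$ in place of $X$, the condition $t\in\NUTS(\Excl X,Y)=\Total{\Limpl{\Excl X}Y}$ is equivalent to: for all $w\in\Total{\Excl X}$ one has $\Matappa tw\in\Total Y$. So the whole lemma amounts to showing that quantifying over \emph{all} $w\in\Total{\Excl X}$ is equivalent to quantifying only over the generating sets $\Promhc u$ with $u\in\Total X$. I would make this the single proof obligation.

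The forward implication is immediate: by the very definition $\Total{\Excl X}=\Upcl{\Eset{\Promhc u\St u\in\Total X}}$, each $\Promhc u$ (for $u\in\Total X$) is itself an element of $\Total{\Excl X}$, so specializing the hypothesis $\forall w\in\Total{\Excl X}\ \Matappa tw\in\Total Y$ to $w=\Promhc u$ gives exactly $\Matappa t{\Promhc u}\in\Total Y$.

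For the converse, I would exploit that $\Total{\Excl X}$ is an \emph{upward closure}. Assume $\Matappa t{\Promhc u}\in\Total Y$ for every $u\in\Total X$, and let $w\in\Total{\Excl X}$ be arbitrary. Since $\Total{\Excl X}=\Upcl{\Eset{\Promhc u\St u\in\Total X}}$ (this is where Lemma~\ref{lemma:NUTS-biorth-uppper-closed}, collapsing the biorthogonal to the upward closure, is used, already folded into the definition of $\Excl X$), there is some $u\in\Total X$ with $\Promhc u\subseteq w$. Relational application is monotone in its set argument, so $\Matappa t{\Promhc u}\subseteq\Matappa tw$. By hypothesis $\Matappa t{\Promhc u}\in\Total Y$, and since $Y$ is a NUTS we have $\Total Y=\Upcl{\Total Y}$; a superset of a total set is therefore total, whence $\Matappa tw\in\Total Y$. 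This closes the equivalence.

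There is no genuine obstacle here: the argument is a routine orthogonality computation. The only point deserving attention, and the conceptual heart of the proof, is that the duality-based definition of $\Total{\Excl X}$ collapses to the plain upward closure of the family $\Eset{\Promhc u\St u\in\Total X}$; it is precisely this simplification (Lemma~\ref{lemma:NUTS-biorth-uppper-closed}) that lets the backward direction go through by a one-line monotonicity-plus-upward-closure argument rather than by manipulating biorthogonals directly.
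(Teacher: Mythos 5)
Your proof is correct, but it takes a genuinely different route from the paper's. The paper argues by duality: by Lemma~\ref{lemma:nuts-orth-morph} it suffices to show $\Orth t\in\NUTS(\Orth Y,\Orthp{\Excl X})$, and since $\Total{\Orthp{\Excl X}}=\Orth{\Total{\Excl X}}=\Orth{\Eset{\Promhc u\St u\in\Total X}}$ (the biorthogonal disappears under one more orthogonal, $\Triorth\cT=\Orth\cT$), one only has to check that $\Matappa{\Orth t}{v'}$ meets each generator $\Promhc u$, which follows from the hypothesis $\Matappa t{\Promhc u}\in\Total Y$ together with $v'\in\Orth{\Total Y}$. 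You instead stay on the primal side: you invoke Lemma~\ref{lemma:nuts-limpl-charact} for the NUTS $\Excl X$, and then handle an arbitrary $w\in\Total{\Excl X}$ via the upward-closure description $\Total{\Excl X}=\Upcl{\Eset{\Promhc u\St u\in\Total X}}$ coming from Lemma~\ref{lemma:NUTS-biorth-uppper-closed}, the monotonicity of $w\mapsto\Matappa tw$, and the upward-closedness of $\Total Y$. Both arguments hinge on the same underlying collapse of the biorthogonal; yours makes it explicit as an upward closure and is arguably more elementary and transparent (no dualization, no manipulation of witnesses of non-empty intersections), while the paper's stays within the orthogonality calculus it uses for the neighbouring lemmas (e.g.~Lemmas~\ref{lemma:nuts-limpl-charact} and~\ref{lemma:limpl-tens-charact}) and never needs to name an arbitrary element of $\Total{\Excl X}$ at all.
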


\begin{lemma}\label{lemma:nuts-excl-map-bil}
  Let
  $t\subseteq\Mfin{\Web{X_1}}\times\cdots\times\Mfin{\Web{X_k}}\times\Web
  Y$. One has $t\in\NUTS(\Excl{X_1}\ITens\cdots\ITens\Excl{X_k},Y)$
  iff for all $\Vect u\in\prod_{i=1}^k\Total{X_i}$ one has
  $\Matappa t{(\Promhc{u_1}\ITens\cdots\ITens\Promhc{u_2})}\in\Total Y$.
\end{lemma}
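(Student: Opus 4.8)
The plan is to reduce the statement to a single structural fact: the totality of a tensor of exponentials is generated, under upward closure, by the ``diagonal'' sets $\Promhc{u_1}\ITens\cdots\ITens\Promhc{u_k}$. Concretely, I would first prove by induction on $k$ that
\[
  \Total{\Excl{X_1}\ITens\cdots\ITens\Excl{X_k}}
  =\Upcl{\Eset{\Promhc{u_1}\ITens\cdots\ITens\Promhc{u_k}
    \St u_i\in\Total{X_i}\text{ for }i=1,\dots,k}}\,.
\]
The base case $k=1$ is exactly the definition of $\Total{\Excl{X_1}}$ recalled in Section~\ref{sec:NUTS-exponential}. For the inductive step I would apply the definition of $\Total{\Tens{Z_1}{Z_2}}$ as the upward closure of the products $\Tens{v_1}{v_2}$ with $v_i\in\Total{Z_i}$, taking $Z_1=\Excl{X_1}$ and $Z_2=\Excl{X_2}\ITens\cdots\ITens\Excl{X_k}$: since each $v_i$ dominates a generator of $\Total{Z_i}$ (by the induction hypothesis for $Z_2$ and by the exponential definition for $Z_1$) and the cartesian product is monotone in each argument, every $\Tens{v_1}{v_2}$ dominates a product of generators, and conversely every such product is itself of the form $\Tens{v_1}{v_2}$; hence the two upward closures coincide, after identifying the bracketings of the webs via the associativity bijection of Lemma~\ref{lemma:Assoc-tens-limpl}.

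With this description in hand, I would use that $\NUTS(\Excl{X_1}\ITens\cdots\ITens\Excl{X_k},Y)=\Total{\Limpl{(\Excl{X_1}\ITens\cdots\ITens\Excl{X_k})}{Y}}$ together with Lemma~\ref{lemma:nuts-limpl-charact}: the relation $t$ is a morphism iff $\Matappa t w\in\Total Y$ for every $w\in\Total{\Excl{X_1}\ITens\cdots\ITens\Excl{X_k}}$. The forward implication is then immediate, since each $\Promhc{u_1}\ITens\cdots\ITens\Promhc{u_k}$ belongs to the source totality by the identity above. For the converse, given an arbitrary $w$ in the source totality, the generating description yields $u_i\in\Total{X_i}$ with $\Promhc{u_1}\ITens\cdots\ITens\Promhc{u_k}\subseteq w$; monotonicity of relational application gives $\Matappa t{(\Promhc{u_1}\ITens\cdots\ITens\Promhc{u_k})}\subseteq\Matappa t w$, and since the left-hand side lies in $\Total Y$ by hypothesis and $\Total Y=\Upcl{\Total Y}$ is upward closed (as $Y$ is a NUTS), we conclude $\Matappa t w\in\Total Y$.

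I expect the only delicate point to be the bookkeeping around associativity in the inductive step: strictly speaking $\Excl{X_1}\ITens\cdots\ITens\Excl{X_k}$ is a nested tensor whose web is a bracketed cartesian product, and one must check that the generating sets match up under the relevant associativity bijections, exactly as in passing from the binary Lemma~\ref{lemma:limpl-tens-charact} to its $k$-ary form; this is routine but must be carried out consistently with the flat-product convention already used in the statement. An alternative route, avoiding the auxiliary identity, would be a direct induction on $k$ that peels off $\Excl{X_1}$ using the curryfication iso of Lemma~\ref{lemma:Assoc-tens-limpl} and Lemma~\ref{lemma:nuts-excl-map}, then applies the induction hypothesis to the partial application $\Matappa{\widehat t}{\Promhc{u_1}}$; the same monotonicity-and-upward-closure argument is exactly what makes that induction go through, so I regard that step as the genuine crux in either formulation.
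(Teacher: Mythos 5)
Your proof is correct, but it takes a genuinely different route from the paper's. The paper (which treats only $k=2$ explicitly) curries: it sets $t'=\Curlin(t)$, applies $t'$ to $\Promhc{u_1}$, invokes Lemma~\ref{lemma:nuts-excl-map} to see that $\Matappa{t'}{\Promhc{u_1}}\in\Total{\Limpl{\Excl{X_2}}{Y}}$, then invokes that same lemma once more to conclude $t'\in\NUTS(\Excl{X_1},\Limpl{\Excl{X_2}}{Y})$ and uncurries --- exactly the ``alternative route'' you sketch in your closing paragraph. You instead prove a structural identity, namely that $\Total{\Excl{X_1}\ITens\cdots\ITens\Excl{X_k}}$ is the upward closure of the diagonal generators $\Promhc{u_1}\ITens\cdots\ITens\Promhc{u_k}$, and then conclude via Lemma~\ref{lemma:nuts-limpl-charact} using monotonicity of relational application together with $\Total Y=\Upcl{\Total Y}$. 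Your argument is sound: both inclusions in the identity follow from the definitions of $\Total{\Tens{Z_1}{Z_2}}$ and $\Total{\Excl X}$ as upward closures of their generators, and the final step is just that up-closure is a closure operator. What your route buys is uniformity in $k$ (a clean induction rather than an implicit iteration of the binary case), an explicit reusable description of the source totality, and complete avoidance of the orthogonality/currying machinery; what the paper's route buys is economy --- it needs no new structural fact, only the already-established Lemma~\ref{lemma:nuts-excl-map} and the monoidal closed structure of $\NUTS$. The associativity bookkeeping you flag is indeed the only point requiring care, and the paper glosses over it in exactly the same way (its statement already uses the flat product $\Mfin{\Web{X_1}}\times\cdots\times\Mfin{\Web{X_k}}\times\Web Y$), so you are not worse off there.
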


\begin{lemma}
  If $t\in\NUTS(X,Y)$ then $\Excl t\in\NUTS(\Excl X,\Excl Y)$.
\end{lemma}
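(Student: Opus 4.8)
The plan is to reduce the claim to the characterization of morphisms out of an exponential provided by Lemma~\ref{lemma:nuts-excl-map}. Since $\Excl t\subseteq\Mfin{\Web X}\times\Web{\Excl Y}$, that lemma (applied with $Y$ replaced by $\Excl Y$) tells us that $\Excl t\in\NUTS(\Excl X,\Excl Y)$ holds precisely when $\Matappa{\Excl t}{\Promhc u}\in\Total{\Excl Y}$ for every $u\in\Total X$. So the whole proof amounts to computing the set $\Matappa{\Excl t}{\Promhc u}$ and checking that it is total.

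The key step is the set-theoretic identity
\[
  \Matappa{\Excl t}{\Promhc u}=\Promhc{(\Matappa t u)}\,.
\]
To establish it I would simply unfold the definition of $\Excl t$ recalled in Section~\ref{sec:REL-LL-model}, namely $\Excl t=\{(\Mset{\List a1n},\Mset{\List b1n})\St n\in\Nat,\ \forall i\ (a_i,b_i)\in t\}$, together with $\Promhc u=\Mfin u$. A multiset $\Mset{\List b1n}$ lies in $\Matappa{\Excl t}{\Promhc u}$ iff there are $\List a1n\in u$ with $(a_i,b_i)\in t$ for all $i$, which is exactly the condition that each $b_i$ belongs to $\Matappa t u$, i.e.\ that $\Mset{\List b1n}\in\Mfin{(\Matappa t u)}=\Promhc{(\Matappa t u)}$. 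This is a routine multiset bookkeeping argument and I expect it to be the only genuinely computational part.

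Finally I would invoke totality of $t$. Since $t\in\NUTS(X,Y)$, Lemma~\ref{lemma:nuts-limpl-charact} gives $\Matappa t u\in\Total Y$ for every $u\in\Total X$. Hence $\Promhc{(\Matappa t u)}$ is one of the generators $\Promhc v$ with $v\in\Total Y$, so it lies in $\Eset{\Promhc v\St v\in\Total Y}\subseteq\Total{\Excl Y}$ by the very definition $\Total{\Excl Y}=\Upcl{\Eset{\Promhc v\St v\in\Total Y}}$. Combining this with the displayed identity yields $\Matappa{\Excl t}{\Promhc u}\in\Total{\Excl Y}$ for all $u\in\Total X$, and Lemma~\ref{lemma:nuts-excl-map} then delivers $\Excl t\in\NUTS(\Excl X,\Excl Y)$. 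The only point requiring slight care is the direction of application in $\Matappa{\Excl t}{\Promhc u}$, but no difficulty arises beyond the multiset identity above.
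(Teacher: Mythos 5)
Your proof is correct and follows exactly the paper's argument: the paper's one-line proof invokes Lemma~\ref{lemma:nuts-excl-map} together with the identity $\Matappa{\Excl t}{\Promhc u}=\Promhc{\Matappap tu}$, which are precisely the two ingredients you use (your appeal to Lemma~\ref{lemma:nuts-limpl-charact} for $\Matappa tu\in\Total Y$ is just the unfolding of $t\in\NUTS(X,Y)$ that the paper leaves implicit). You have merely spelled out the multiset bookkeeping that the paper takes for granted.
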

\begin{proof}
  By Lemma~\ref{lemma:nuts-excl-map} and the fact that
  $\Matappa{\Excl t}{\Promhc u}=\Promhc{\Matappap tu}$.
\end{proof}

To prove that $\NUTS$ is a categorical model of $\LL$, it suffices to
show that the various relational morphisms defining the strong
symmetric monoidal monadic structure of $\Excl\_$ in $\REL$ (see
Section~\ref{sec:REL-LL-model}) are actually morphisms in $\NUTS$. This
is essentially straightforward and based on
Lemma~\ref{lemma:nuts-excl-map}.

\begin{lemma}\label{lemma:NUTS-excl}
  Equipped with $\Der{}$, $\Digg{}$, $\Seelyz$ and $\Seelyt{}$ defined
  as in $\REL$, $\Excl\_$ is a symmetric monoidal comonad which turns
  $\NUTS$ into a Seely model of $\LL$.
\end{lemma}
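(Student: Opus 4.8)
The plan is to leverage the fact, established in Section~\ref{sec:REL-LL-model}, that $\REL$ is already a Seely model of $\LL$. Every object of $\NUTS$ carries the same web as an underlying set, $\NUTS(X,Y)$ is a subset of $\REL(\Web X,\Web Y)$, and composition and identities in $\NUTS$ are computed exactly as in $\REL$; hence any equation between composites of relations that holds in $\REL$ holds verbatim in $\NUTS$ as soon as every relation it mentions is a genuine $\NUTS$-morphism. The functoriality of $\Excl\_$ on $\NUTS$ is already granted by the preceding lemma, so the comonad laws, the naturality squares for $\Der{}$ and $\Digg{}$, the strong symmetric monoidality equations for $(\Seelyz,\Seelyt)$ (including naturality), and the coherence with $\Digg{}$ — all being identities of relations valid in $\REL$ — need not be rechecked. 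The only thing to verify is therefore that each structural relation ($\Der{}$, $\Digg{}$, $\Seelyz$, $\Seelyt$, and the derived $\Weak{}$, $\Contr{}$) preserves totality, i.e.\ is a morphism of $\NUTS$, and that $\Seelyz,\Seelyt$ are moreover isos in $\NUTS$.

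For the unary maps the totality checks all reduce, via Lemma~\ref{lemma:nuts-excl-map}, to computing the image of a generator $\Promhc u=\Mfin u$ with $u\in\Total X$. I would compute: $\Matappa{\Der X}{\Promhc u}=u\in\Total X$ (a singleton $\Mset a$ lies in $\Mfin u$ exactly when $a\in u$); $\Matappa{\Digg X}{\Promhc u}=\Prommhc u$, total in $\Excll X$ since $\Promhc u\in\Total{\Excl X}$; $\Matappa{\Weak X}{\Promhc u}=\Eset\Onelem$ because $\Emptymset\in\Mfin u$; and $\Matappa{\Contr X}{\Promhc u}=\Tensp{\Promhc u}{\Promhc u}$, total in $\Tens{\Excl X}{\Excl X}$. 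For $\Seelyz$ one checks directly that $\Matappa{\Seelyz}{\Eset\Onelem}=\Eset\Emptymset=\Promhc\emptyset\in\Total{\Excl\Top}$, since $\emptyset$ is the unique total element of $\Top$.

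For $\Seelyt$ I would use the bilinear criterion of Lemma~\ref{lemma:nuts-excl-map-bil}: for $u_i\in\Total{X_i}$ the image $\Matappa{\Seelyt}{\Tensp{\Promhc{u_1}}{\Promhc{u_2}}}$ is exactly $\Promhc w$ for $w=\Eset 1\times u_1\cup\Eset 2\times u_2$, and $w\in\Total{\With{X_1}{X_2}}$ because $\Matappa{\Proj i}{w}=u_i\in\Total{X_i}$; hence the image is total in $\Excl{(\With{X_1}{X_2})}$. It then remains to see that $\Seelyz$ and $\Seelyt$ are isos in $\NUTS$; since they are already bijections on webs (being isos in $\REL$), by Lemma~\ref{lemma:NUTS-iso-charact} I only have to check that the inverse direction preserves totality, i.e.\ that every total element of the codomain is carried back to a total element of the domain. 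Using Lemma~\ref{lemma:NUTS-biorth-uppper-closed} this reduces to the generators $\Promhc w$ with $w\in\Total{\With{X_1}{X_2}}$, for which splitting a multiset over $\Eset 1\times\Web{X_1}\cup\Eset 2\times\Web{X_2}$ returns a product of generators of the domain. I expect this two-sided totality bookkeeping for $\Seelyt$ — reconciling the upward closures on the two sides of the bijection — to be the only slightly delicate point; everything else is the routine transfer of the $\REL$-level identities described above.
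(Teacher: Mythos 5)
Your proposal is correct and follows essentially the same route as the paper's proof: transfer all equational coherences from $\REL$ via the faithful forgetful functor, then verify totality of $\Der{}$, $\Digg{}$ on generators $\Promhc u$ via Lemma~\ref{lemma:nuts-excl-map}, and handle $\Seelyt$ in both directions via Lemma~\ref{lemma:nuts-excl-map-bil} and the computation $\Matappa{\Funinv{(\Seelyt)}}{\Promhc w}=\Tens{\Promhc{u_1}}{\Promhc{u_2}}$ with $u_i=\Matappa{\Proj i}w$. The only (harmless) deviations are that you additionally check $\Weak{}$, $\Contr{}$ and $\Seelyz$ explicitly, and that you phrase the iso property through Lemma~\ref{lemma:NUTS-iso-charact} plus upward closure, where the paper simply applies Lemma~\ref{lemma:nuts-excl-map} to $\Funinv{(\Seelyt)}$ directly.
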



\subsection{Variable non-uniform totality spaces (VNUTS)}
\label{sec:VNUTS-definition}
Let $E$ be a set, we use $\Tot E$ for the set of all \emph{totality
  candidates} on $E$, that is, of all subsets $\cT$ of $\Part E$ such
that $\cT=\Biorth\cT$%
.
In other words $\cT\in\Tot E$ means that
$\cT=\Upcl\cT$ by Lemma~\ref{lemma:NUTS-biorth-uppper-closed}. Ordered by
$\subseteq$, this set $\Tot E$ is a complete lattice.

We need now to define a notion of strong functors
$\cX:\NUTS^n\to\NUTS$ for defining a model in the sense of
Definition~\ref{def:categorical-muLL-models}. One crucial feature of
such a functor will be that $\Web{\Strfun\cX(\Vect X)}$ depends only
on $\Web{\Vect X}$.

\begin{definition}
Let $n\in\Nat$, an $n$-ary VNUTS is a pair
$\Vsnot X=(\Web{\Vsnot X},\Total{\Vsnot X})$ where
$\Web{\Vsnot X}:\REL^n\to\REL$ is a variable set
$\Web{\Vsnot X}=(\Strfun{\Web{\Vsnot X}},\Strnat{\Web{\Vsnot X}})$
(see Section~\ref{sec:strong-VS-Seely-model}) and $\Total{\Vsnot X}$
is an operation which with each $n$-tuple $\Vect X$ of objects of
$\NUTS{}$ associates an element $\Total{\Vsnot X}(\Vect X)$ of
$\Tot{\Strfun{\Web{\Vsnot X}}(\Web{\Vect X})}$ in such a way that
\begin{Enumerate}
\item for any $\Vect t\in\NUTS^n(\Vect X,\Vect Y)$, the element
  $\Strfun{\Web{\Vsnot X}}(\Vect t)$ of\\
  $\REL(\Strfun{\Web{\Vsnot X}}({\Web{\Vect X}}),{\Web{\Vsnot
      X}}(\Web{\Vect Y}))$ belongs to
  $\NUTS(\Strfun{\Vsnot X}(\Vect X),\Strfun{\Vsnot X}(\Vect Y))$\\
  (where $\Strfun{\Vsnot X}(\Vect X)$ denotes the NUTS
  $(\Strfun{\Web{\Vsnot X}}(\Web{\Vect X}),\Total{\Vsnot X}(\Vect X))$
  \label{enum:vnuts-cond-tot}
\item and for any $\Vect Y\in\Obj{\NUTS^n}$ and any $X\in\Obj\NUTS$
  one has
  $\Strnat{\Web{\Vsnot X}}_{\Web X,\Web{\Vect Y}}\in\NUTS(\Tens{\Excl
    X}{\Strfun{\Vsnot X}}(\Vect Y),\Strfun{\Vsnot X}(\Tens{\Excl
    X}{\Vect Y}))$. In other words, for an $u\in\Total X$ and
  $v\in\Tot{\Vsnot X}(\Vect Y)$, one has
  $\Matappa{\Strnat{\Web{\Vsnot X}}_{\Web X,\Web{\Vect
        Y}}}{\Tensp{\Promhc u}{w}}\in\Tot{\Vsnot X}(\Tens{\Excl
    X}{\Vect Y})$\label{enum:vnuts-cond-strength}.
\end{Enumerate}
\end{definition}

\begin{lemma}\label{lemma:VNUTS-strong-functor}
  Any VNUTS $\Vsnot X:\NUTS^n\to\NUTS$ induces a strong functor
  $\cX:\NUTS^n\to\NUTS$ which satisfies
  \begin{Itemize}
  \item
    $\Web{\Strfun\cX(\Vect X)}=\Strfun{\Web{\Vsnot
        X}}(\Web{\Vect X})$,
  \item
    $\Total{\Strfun\cX(\Vect X)}=\Total{\Vsnot X}(\Vect X)$,
  \item
    $\Strfun\cX(\Vect t)=\Strfun{\Web{\Vsnot X}}(\Vect
    t)\in\NUTS(\Strfun{\Vsnot X}(\Vect X),\Strfun{\Vsnot X}(\Vect Y))$
    if $\Vect t\in\NUTS^n(\Vect X,\Vect Y)$,
  \item and
    $\Strnat\cX_{X,\Vect Y}=\Strnat{\Web{\Vsnot X}}_{\Web
      X,\Web{\Vect Y}}$
  \end{Itemize}
  and the correspondence $\Vsnot X\mapsto\cX$ is injective.
\end{lemma}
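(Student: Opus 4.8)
The plan is to take the four displayed equations as the \emph{definition} of $\cX=(\Strfun\cX,\Strnat\cX)$ and then verify that this data meets the axioms of an $n$-ary strong functor from Section~\ref{sec:gen-strong-functors}, exploiting throughout that the forgetful assignment $X\mapsto\Web X$, $t\mapsto t$ is a faithful functor $\NUTS\to\REL$ that preserves every structural morphism on the nose.

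First I would check that $\Strfun\cX$ is a well-defined functor $\NUTS^n\to\NUTS$. On objects $\Strfun\cX(\Vect X)=(\Strfun{\Web{\Vsnot X}}(\Web{\Vect X}),\Total{\Vsnot X}(\Vect X))$ is a NUTS because the VNUTS definition requires $\Total{\Vsnot X}(\Vect X)\in\Tot{\Strfun{\Web{\Vsnot X}}(\Web{\Vect X})}$, i.e.\ equal to its own biorthogonal. On morphisms $\Strfun\cX(\Vect t)=\Strfun{\Web{\Vsnot X}}(\Vect t)$ is a $\NUTS$-morphism between the right objects by Condition~\ref{enum:vnuts-cond-tot}. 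Functoriality is immediate: identities and composition in $\NUTS$ are those of $\REL$, and $\Strfun{\Web{\Vsnot X}}$ is already a functor on $\REL$.

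Next, $\Strnat\cX_{X,\Vect Y}=\Strnat{\Web{\Vsnot X}}_{\Web X,\Web{\Vect Y}}$ is a $\NUTS$-morphism of the required type by Condition~\ref{enum:vnuts-cond-strength}. It remains to establish the naturality of $\Strnat\cX$ and the commutation in $\NUTS$ of the three diagrams of Figure~\ref{fig:strength-monoidality}. Here I would invoke the forgetful functor: all the arrows appearing in those diagrams (the coherence isos $\Leftu,\Rightu,\Assoc,\Sym$, the Seely isos $\Seelyz,\Seelyt$, $\Digg{}$, the tensor of morphisms, and $\Excl\_$ on morphisms) are, by the construction of $\NUTS$ in Section~\ref{sec:NUTS-exponential} and Lemma~\ref{lemma:NUTS-excl}, exactly the same relations as their $\REL$ counterparts, and $\Strfun\cX,\Strnat\cX$ are carried to $\Strfun{\Web{\Vsnot X}},\Strnat{\Web{\Vsnot X}}$. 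Hence each naturality square and each coherence diagram in $\NUTS$ maps to the corresponding diagram in $\REL$, which commutes since $\Web{\Vsnot X}$ is a strong functor on $\REL$; as a commuting diagram is an equality of composite relations and the forgetful functor is injective on morphisms, these equalities lift back to $\NUTS$. Naturality is moreover demanded only along $\NUTS$-morphisms, a subclass of the $\REL$-morphisms for which it is already known.

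Finally, for injectivity of $\Vsnot X\mapsto\cX$ I would recover the three components of $\Vsnot X$ from $\cX$: the totality operation via $\Total{\Vsnot X}(\Vect X)=\Total{\Strfun\cX(\Vect X)}$ and the strength via the fourth bullet, read as relations. The only delicate point is recovering the underlying functor $\Strfun{\Web{\Vsnot X}}$ on \emph{all} of $\REL$, whereas $\Strfun\cX$ is a priori defined only on $\NUTS$-morphisms. This is handled by noting that every set is the web of some NUTS and that, taking $\Total Y=\Part{\Web Y}$ (which equals $\Biorth{\Total Y}$), Lemma~\ref{lemma:nuts-limpl-charact} makes \emph{every} relation a $\NUTS$-morphism into $Y$; thus each $\REL$-morphism $\Vect s$ equals $\Strfun\cX(\Vect s)$ for suitable NUTS with prescribed webs, so the first and third bullets determine $\Strfun{\Web{\Vsnot X}}$ on objects and morphisms. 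The main obstacle is the coherence-diagram step, which I reduce to the already-established strong-functor laws for $\Web{\Vsnot X}$ in $\REL$ through the faithful, structure-preserving forgetful functor; the injectivity argument needs only the small extra observation that the web map and the $\NUTS$-morphisms cover all sets and all relations.
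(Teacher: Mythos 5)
Your proposal is correct and takes essentially the same route as the paper: the verification that $\cX$ is a strong functor (which the paper dismisses as ``clear'') is exactly the intended lift-from-$\REL$ argument via the faithful, structure-preserving forgetful functor, and your injectivity step — using NUTS with full powerset totality so that every set arises as a web and every relation as a $\NUTS$-morphism — is precisely the paper's recovery $\Strfun{\Web{\Vsnot X}}=\Nutsf\Comp\Strfun\cX\Comp\Nutsm$ with $\Nutsm(E)=(E,\Part E)$ and $\Strnat{\Web{\Vsnot X}}_{E,F}=\Strnat\cX_{\Nutsm(E),\Nutsm(F)}$. The only cosmetic difference is that the paper packages your ad hoc construction as the functor $\Nutsm:\REL\to\NUTS$ (right adjoint to $\Nutsf$).
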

\begin{proof}
  It is clear that $\cX$ so defined is a strong functor.  Let us check
  that $\Vsnot X$ can be retrieved from $\cX$. Given a set $E$,
  $(E,\Part E)$ is a NUTS that we denote as $\Nutsm(E)$. Notice that
  $\Nutsm$ can be extended into a functor $\REL\to\NUTS$ which acts as
  the identity on morphisms. There is also a forgetful functor
  $\Nutsf:\NUTS\to\REL$ which maps $X$ to $\Web X$ and acts as the
  identity on morphisms ($\Nutsm$ is right adjoint to $\Nutsf$). Let
  $\Vsnot X$ be a unary VNUTS and let $\cX:\NUTS\to\NUTS$ be the
  associated strong functor. Then we have
  $\Strfun{\Web{\Vsnot X}}=\Nutsf\Comp\Strfun\cX\Comp\Nutsm$ and
  $\Strnat{\Web{\Vsnot X}}_{E,F}=\Strnat\cX_{\Nutsm(E),\Nutsm(F)}$ for
  any sets $E$ and $F$. Last, given a NUTS $X$, we have that
  $\Total{\Vsnot X}(X)$ is just the totality component of the NUTS
  $\cX(X)$. This shows that the VNUTS which induces $\cX$ can be
  retrieved from $\cX$.
\end{proof}

For this reason we use $\Vsnot X$ to denote the functor $\cX$.

\begin{remark}
  Another possibility would be to define a VNUTS at the first place as
  a strong functor $\cX:\NUTS^n\to\NUTS$ satisfying additional
  properties whose purpose would be to make the definition of the
  underlying $\Vsnot X$ possible. This option, suggested by the
  reviewers, will be explored further. It is crucial to notice that
  these additional properties (that is, the existence of $\Vsnot X$)
  are crucial in the proof of Theorem~\ref{th:VNUTS-model}. In
  particular, it is essential that $\Web{\cX(X)}$ depends only on
  $\Web X$.
\end{remark}

Given $n\in\Nat$ let $\VNUTS n$ be the class of strong $n$-ary VNUTS.
We identify $\VNUTS 0$ with the class of objects of the Seely category
$\NUTS$. The following refers to
Definition~\ref{def:categorical-muLL-models}
\begin{theorem}\label{th:VNUTS-model}
  $(\NUTS,(\VNUTS n)_{n\in\Nat})$ is a Seely model of $\MULL$.
\end{theorem}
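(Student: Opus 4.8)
The plan is to verify the five clauses of Definition~\ref{def:categorical-muLL-models} for $(\NUTS,(\VNUTS n)_{n\in\Nat})$. Clause~(\ref{enum:seel-mull-1}) is Lemma~\ref{lemma:NUTS-excl}, and clause~(\ref{enum:seel-mull-2}) holds by definition of $\VNUTS n$ together with Lemma~\ref{lemma:VNUTS-strong-functor}, which identifies a VNUTS with the strong functor it induces. The organising remark is that a VNUTS $\Vsnot X$ factors into a web part $\Strfun{\Web{\Vsnot X}}$, which is a variable set, and a totality part $\Total{\Vsnot X}$; since $(\REL,(\VREL n)_{n\in\Nat})$ is already a Seely model of $\MULL$ (Section~\ref{sec:strong-VS-Seely-model}), every web-level verification is inherited, and the only genuine work is to equip each constructed functor with a totality assignment and to check the two clauses~(\ref{enum:vnuts-cond-tot}) and~(\ref{enum:vnuts-cond-strength}) of the VNUTS definition, \IE that the action on morphisms and the strength are $\NUTS$-morphisms.

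For clauses~(\ref{enum:seel-mull-3}) and~(\ref{enum:seel-mull-4}) I would argue operation by operation. A projection carries the totality $\Total{X_i}$, which is immediate. For $\ITens$, $\IWith$ and $\Excl\_$ the totality is exactly the NUTS operation defined in Section~\ref{sec:NUTS}, and the two VNUTS clauses follow from the characterisation lemmas (Lemma~\ref{lemma:nuts-limpl-charact}, Lemma~\ref{lemma:nuts-excl-map} and Lemma~\ref{lemma:nuts-excl-map-bil}). For the De~Morgan dual the totality is $\Orth{\Total{\Vsnot X}(\Orth{\Vect X})}$ and both clauses reduce to Lemma~\ref{lemma:nuts-orth-morph}. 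For a composite $\Vsnot X\Comp\Vect{\Vsnot V}$ the totality is obtained by composition, and the clauses hold because being a $\NUTS$-morphism is closed under the composite functorial actions and strengths, just as the corresponding preservation properties were for variable sets in Section~\ref{sec:comp-var-sets}.

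The substantial clause is~(\ref{enum:seel-mull-5}). Fix a unary VNUTS and write $\cF=\Strfun{\Vsnot X}:\NUTS\to\NUTS$ and $G=\Strfun{\Web{\Vsnot X}}:\REL\to\REL$. Since $\Web{\cF(A)}$ depends only on $\Web A$, the web of any final coalgebra must be the $\REL$-fixpoint $W=\Fungfp G$ of Lemma~\ref{lemma:rel-fixpoint-final}, with identity coalgebra structure. On $W$ consider $\Phi:\Tot W\to\Tot W$, $\Phi(\cT)=\Total{\Vsnot X}((W,\cT))$; it is monotone by clause~(\ref{enum:vnuts-cond-tot}) applied to identities, so on the complete lattice $\Tot W$ Knaster--Tarski yields a greatest fixed point $\nu\Phi=\Union\Eset{\cT\in\Tot W\St\cT\subseteq\Phi(\cT)}$, which I take as $\Total{\Fungfp{\Vsnot X}}$. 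A totality $\cT$ makes $(W,\cT)$ a coalgebra exactly when $\cT\subseteq\Phi(\cT)$, so $(W,\nu\Phi)$ is one. For finality, given a coalgebra $(A,g)$, Lemma~\ref{lemma:rel-fixpoint-final} supplies a unique web-level coalgebra morphism $h:\Web A\to W$; it remains to show $h$ is total, which is a coinduction. Put $\cS=\Upcl{\Eset{\Matappa hu\St u\in\Total A}}$, so that $h\in\NUTS(A,(W,\cS))$ by construction; clause~(\ref{enum:vnuts-cond-tot}) then makes $G(h)\in\NUTS(\cF(A),(W,\Phi(\cS)))$. Since $h=G(h)\Compl g$ at the web level and both $g$ and $G(h)$ are $\NUTS$-morphisms, $h\in\NUTS(A,(W,\Phi(\cS)))$, \IE $\Matappa hu\in\Phi(\cS)$ for every $u\in\Total A$; as $\Phi(\cS)$ is upward closed this gives $\cS\subseteq\Phi(\cS)$, hence $\cS\subseteq\nu\Phi$ and $h\in\NUTS(A,(W,\nu\Phi))$. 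Uniqueness is inherited from $\REL$. This supplies the required final object in each fibre (see the Remark after Definition~\ref{def:categorical-muLL-models}), so Lemma~\ref{lemma:strfun-gfp-general} produces the strong functor $\Fungfp{\Vsnot X}$ on $\NUTS$, whose web part is the variable set $\Fungfp G$ (Lemma~\ref{lemma:REL-variable-set-fixpoint}) and whose totality is the parametrised $\nu\Phi$; by Lemma~\ref{lemma:VNUTS-strong-functor} it is induced by a VNUTS, so $\Fungfp{\Vsnot X}\in\VNUTS k$.

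I expect the finality coinduction just sketched to be the main obstacle, together with checking the strength clause~(\ref{enum:vnuts-cond-strength}) for $\Fungfp{\Vsnot X}$: one must verify that the web-level strength characterised by~\Eqref{eq:final-coalg-strength-charact} respects the greatest-fixed-point totalities, which again proceeds by a coinductive comparison of candidates in $\Tot W$ rather than by direct computation. Everything else is a routine transfer of the $\REL$ computations of Section~\ref{sec:strong-VS-Seely-model} through the characterisation lemmas of Section~\ref{sec:NUTS}. Finally, the least fixed point is not needed for this theorem, since clause~(\ref{enum:seel-mull-5}) only asks for final coalgebras; the interpretation of $\mu$ is recovered by duality via Lemma~\ref{lemma:strfun-lfp-general}, and the divergence between $\nu\Phi$ and the least pre-fixed point is precisely what separates $\mu$ from $\nu$ in this model.
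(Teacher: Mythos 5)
Your proof is correct, and except for one step it follows the same decomposition as the paper's own proof: Conditions~(\ref{enum:seel-mull-1})--(\ref{enum:seel-mull-4}) are handled identically (web-level structure inherited from $\REL$, totality given by the $\NUTS$ operations or by composition, the two VNUTS clauses checked through the characterisation lemmas), and Condition~(\ref{enum:seel-mull-5}) is reduced, fibrewise, to a greatest fixed point of the monotone operator $\Phi(\cS)=\Total{\Vsnot X}(W,\cS)$ on the complete lattice $\Tot W$ via Knaster--Tarski. Where you genuinely diverge is the finality argument. The paper realises the greatest fixed point as the stabilisation $\cT_\theta$ of the transfinite descending iteration $\cT_0=\Part W$, $\cT_{\alpha+1}=\Phi(\cT_\alpha)$, $\cT_\lambda=\Inter_{\alpha<\lambda}\cT_\alpha$, and proves that the unique $\REL$-morphism into the fixed point is total by induction on ordinals ($\Matappa ev\in\cT_\alpha$ for all $\alpha$, with a separate limit case). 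You instead take $\nu\Phi=\Union\Eset{\cT\in\Tot W\St\cT\subseteq\Phi(\cT)}$ and prove totality of $h$ by pure coinduction: $\cS=\Upcl{\Eset{\Matappa hu\St u\in\Total A}}$ is a post-fixed point because $g$ and $\Strfun{\Web{\Vsnot X}}(h)$ are $\NUTS$-morphisms and $h=\Strfun{\Web{\Vsnot X}}(h)\Compl g$, hence $\cS\subseteq\nu\Phi$. Both arguments are sound; yours is shorter and avoids ordinals and the cardinality/stabilisation argument entirely, while the paper's iterative description is exactly what makes the totalities computable later (the $\Phi^n$ calculations for $\Tnat$, trees and the stream types all use the approximants). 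One caveat on your last step: Lemma~\ref{lemma:VNUTS-strong-functor} is an injectivity statement, so it cannot by itself certify that the strong functor $\Fungfp{\Vsnot X}$ produced by Lemma~\ref{lemma:strfun-gfp-general} lies in $\VNUTS n$; as in the paper's appendix one must still identify, at web level, the action on morphisms and the strength of that abstract functor with those of the variable set $\Fungfp{\Web{\Vsnot X}}$. This is not an extra coinduction, as you anticipate for the strength, but a uniqueness argument: both candidates satisfy the same characterising equations (\Eqref{eq:final-coalg-strength-charact} and its analogue for morphisms) as maps into the final $\REL$-coalgebra, hence coincide, and totality of the web-level strength then comes for free from the $\NUTS$-level one. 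With that adjustment your plan closes completely.
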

\begin{proof}[Partial proof]
We deal with Condition~(\ref{enum:seel-mull-5}).\\
%
Let first $\Vsnot X=(\Web{\Vsnot X},\Total{\Vsnot X})$
be a unary VNUTS. Let $E=\Funfp{\Strfun{\Web{\Vsnot X}}}$ which is the
least set such that $\Strfun{\Web{\Vsnot X}}(E)=E$, that is
$E=\Union_{n=0}^\infty\Strfun{\Web{\Vsnot X}}^n(\emptyset)$. Let
$\Phi:\Tot E\to\Tot E$ be defined as follows: given $\cS\in\Tot E$,
then $(E,\cS)$ is a NUTS, and we set
$\Phi(\cS)=\Total{\Vsnot X}(E,\cS)\in\Tot{\Strfun{\Web{\Vsnot
      X}}(E)}=\Tot E$. This function $\Phi$ is monotone. Let indeed
$\cS_1,\cS_2\in\Tot E$ with $\cS_1\subseteq\cS_2$. Then we have
$\Id\in\NUTS((E,\cS_1),(E,\cS_2))$ and therefore, by
Condition~(\ref{enum:vnuts-cond-tot}) satisfied by $\Vsnot X$, we have
$\Id=\Strfun{\Web{\Vsnot X}}(\Id)\in\NUTS(\Strfun{\Vsnot
  X}(E,\cS_1),\Strfun{\Vsnot
  X}(E,\cS_2))=\NUTS((E,\Phi(\cS_1)),(E,\Phi(\cS_2))$ which means that
$\Phi(\cS_1)\subseteq\Phi(\cS_2)$. By the Knaster Tarski Theorem (remember
that $\Tot E$ is a complete lattice), $\Phi$ has a greatest fixpoint
$\cT$ that we can describe as follows. Let
$(\cT_\alpha)_{\alpha\in\Ordinals}$, where $\Ordinals$ is the class of
ordinals, be defined by: $\cT_0=\Part E$ (the largest possible notion
of totality on $E$), $\cT_{\alpha+1}=\Phi(\cT_\alpha)$ and
$\cT_\lambda=\Inter_{\alpha<\lambda}\cT_\alpha$ when $\lambda$ is a
limit ordinal. This sequence is decreasing (easy induction on ordinals
using the monotonicity of $\Phi$) and there is an ordinal $\theta$
such that $\cT_{\theta+1}=\cT_\theta$ (by a cardinality argument; we
can assume that $\theta$ is the least such ordinal). The greatest
fixpoint of $\Phi$ is then $\cT_\theta$ as easily checked.

By construction $((E,\cT_\theta),\Id)$ is an object of
$\COALGFUN{\NUTS}{\Strfun{\Vsnot X}}$, we prove that it is the
final object. So let $(Y,t)$ be another object of the same
category. Since $(\Web Y,t)$ is an object of
$\COALGFUN\REL{\Strfun{\Web{\Vsnot X}}}$ and since $(E,\Id)$ is the
final object in that category, we know by
Lemma~\ref{lemma:rel-fixpoint-final} that there is exactly one
$e\in\REL(\Web Y,E)$ such that $\Strfun{\Web{\Vsnot X}}(e)\Compl
t=e$. We prove that actually $e\in\NUTS(Y,(E,\cT_\theta))$ so let
$v\in\Total Y$. We prove by induction on the ordinal $\alpha$ that
$\Matappa ev\in\cT_\alpha$. For $\alpha=0$ it is obvious since
$\cT_0=\Part E$. Assume that the property holds for $\alpha$ and let
us prove it for $\alpha+1$. We have
$\Matappa tv\in\Total{\Vsnot X}(Y)=\Total{\Strfun{\Vsnot X}(Y)}$ since
$t\in\NUTS(Y,\Strfun{\Vsnot X}(Y))$. Since
$\Strfun{\Vsnot X}(e)\in\NUTS(\Strfun{\Vsnot X}(Y),\Strfun{\Vsnot
  X}(E,\cT_\alpha))$ and
$\Strfun{\Vsnot X}(E,\cT_\alpha)=(E,\cT_{\alpha+1})$ we have
$\Matappa{(\Strfun{\Vsnot X}(e)\Compl t)}v\in\cT_{\alpha+1}$, that is
$\Matappa ev\in\cT_{\alpha+1}$. Last if $\lambda$ is a limit ordinal
and if we assume $\forall\alpha<\lambda\ \Matappa ev\in\cT_\alpha$ we
have $\Matappa
ev\in\Inter_{\alpha<\lambda}\cT_\alpha=\cT_\lambda$. Therefore
$\Matappa ev\in\cT_\theta$. We use $\Fungfp{\Strfun{\Vsnot X}}$ to
denote this final coalgebra $(E,\cT_\theta)$ (its definition depends
only on $\Strfun{\Vsnot X}$ and does not involve the strength
$\Strnat{\Vsnot X}$).

So we have proven the first part of Condition~(\ref{enum:seel-mull-5})
in the definition of a Seely model of $\MULL$ (see
Section~\ref{def:categorical-muLL-models}). As to the second part, let
$\Vsnot X$ be an $n+1$-ary VNUTS. We know by the general
Lemma~\ref{lemma:strfun-gfp-general} how to build a 
strong functor $\Fungfp{\Vsnot X}:\NUTS^n\to\NUTS$ with suitable properties.
To end the proof, it suffices to exhibit an $n$-ary VNUTS
$\Vsnot Y=(\Web{\Vsnot Y},\Total{\Vsnot Y})$ whose associated strong
functor coincides with $\Fungfp{\Vsnot X}$. The construction of
$\Vsnot Y$ is essentially straightforward, using the constructions
available in $\REL$.
\end{proof}

\begin{remark}
  For any closed formula $A$, the web of its interpretation $\Tsemn A$
  in $\NUTS$ coincides with its interpretation $\Tsemr A$ in
  $\REL$. It is also easy to check that for any proof $\pi$ of
  $\Seq A$, one has $\Psemn\pi=\Psemr\pi$ (this can be formalized
  using the functor $\Nutsf:\NUTS\to\REL$ introduced in the proof of
  Lemma~\ref{lemma:VNUTS-strong-functor}, which acts trivially on
  morphisms).
\end{remark}
\todo[inline]{I've removed the remark about topological functors which
  I found useless.}
\begin{remark}
  The same method can be applied in many contexts. For instance, we
  can replace $\REL$ with the category of coherence spaces --~where
  least and greatest fixpoints are interpreted in the same way~-- and
  $\NUTS$ with the category of coherence spaces with totality where
  the interpretations will be different. One of the reviewers
  suggested that this situation might be generalized using the concept
  of \emph{topological functors}, this option will be explored in
  further work.
\end{remark}


\subsection{Examples of data-types}

\subsubsection{Integers}\label{sec:example-integers}
The type of ``flat integers'' is defined by
$\Tnat=\Lfpll\zeta{(\Plus\One\zeta)}$. In $\REL$, $\Plus 1\zeta$ is
interpreted as the unary variable set
$\Tsemr{\Plus\One\zeta}_\zeta:\REL\to\REL$ which maps a set $E$ to
$\Plus\One E=\Eset{(1,\Onelem)}\cup(\Eset 2\times E)$. Hence
$\Tsemr\Tnat$ is the least set such that
$\Tsem\Tnat=\Eset{(1,\Onelem)}\cup(\Eset 2\times \Tsem\Tnat)$ that is,
the set of all tuples $\Snum n=(2,(2,(\cdots(1,\Onelem)\cdots)))$
where $n$ is the number of occurrence of $2$ so that we can consider
the elements of $\Tsem\Tnat$ as integers. We have
$\Web{\Tsemn\Tnat}=\Tsemr\Tnat$ and we compute $\Total{\Tsemn\Tnat}$
dually wrt.~the proof of Theorem~\ref{th:VNUTS-model}: it is the least
fixed point of the operator
$\Phi:\Tot{\Tsemr\Tnat}\to\Tot{\Tsemr\Tnat}$ such that, if
$\cT\in\Tot{\Tsemr\Tnat}$ then
$\Phi(\cT)=\Eset{u\subseteq\Tsemr\Tnat\St\Snum 0\in u\text{ or
  }\Eset{\Snum n\in\Tsemr\Tnat\St\Snum{n+1}\in u}\in\cT}$. Therefore
$\Tot{\Tsemn\Tnat}=\Eset{u\subseteq\Tsemr\Tnat\St u\not=\emptyset}$.
\begin{theorem}
  If $\pi$ is a proof of $\Seq\Tnat$ then $\Psemn\pi=\Psemr\pi$ is a
  non-empty subset of $\Tsemr\Tnat$.
\end{theorem}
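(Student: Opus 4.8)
The plan is to derive the non-emptiness of $\Psemn\pi$ directly from the fact that $\NUTS$ is a Seely model of $\MULL$ (Theorem~\ref{th:VNUTS-model}): since the interpretation of any proof is a morphism of $\NUTS$, and morphisms of $\NUTS$ are by definition \emph{total} relations, the interpretation of a closed proof of $\Seq\Tnat$ must land in the totality structure $\Total{\Tsemn\Tnat}$, which the computation preceding the statement identifies with the set of \emph{non-empty} subsets of $\Tsemr\Tnat$. So the totality discipline of the model does all the work; there is no need to reason about the shape of $\pi$ at all.

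First I would unfold the interpretation. A proof $\pi$ of $\Seq\Tnat$ yields a morphism $\Psemn\pi\in\NUTS(\One,\Tsemn\Tnat)=\Total{\Limpl\One{\Tsemn\Tnat}}$. Using that $\Web\One=\Eset\Onelem$ and $\Total\One=\Eset{\Eset\Onelem}$, every $t\subseteq\Web\One\times\Web{\Tsemn\Tnat}$ has the form $t=\Eset\Onelem\times w$ with $w=\Matappa t{\Eset\Onelem}\subseteq\Tsemr\Tnat$, and by Lemma~\ref{lemma:nuts-limpl-charact} such a $t$ belongs to $\Total{\Limpl\One{\Tsemn\Tnat}}$ iff $w\in\Total{\Tsemn\Tnat}$. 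Hence the subset $w$ associated with $\Psemn\pi$ lies in $\Total{\Tsemn\Tnat}$. By the computation just before the theorem, $\Total{\Tsemn\Tnat}=\Eset{u\subseteq\Tsemr\Tnat\St u\not=\emptyset}$, so $w$ is non-empty. Finally, $\Psemn\pi=\Psemr\pi$ as subsets of $\Tsemr\Tnat$ by the remark following Theorem~\ref{th:VNUTS-model} (the forgetful functor $\Nutsf:\NUTS\to\REL$ acts trivially on morphisms and $\Web{\Tsemn\Tnat}=\Tsemr\Tnat$), which gives the full claim.

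I do not expect any serious obstacle: the entire argument rests on the soundness already delivered by Theorem~\ref{th:VNUTS-model}, together with the explicit description of $\Total{\Tsemn\Tnat}$. The only points requiring a little care are the identification of a morphism out of $\One$ with an element of the codomain's totality (via $\Total\One=\Eset{\Eset\Onelem}$ and Lemma~\ref{lemma:nuts-limpl-charact}) and the verification that the $\REL$ and $\NUTS$ interpretations coincide as plain subsets — both routine given the machinery set up earlier. The conceptual content is entirely in the preceding paragraph's computation that $\Tnat$'s totality is precisely the non-empty subsets; the theorem is then a one-line corollary of that fact and of soundness.
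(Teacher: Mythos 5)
Your proposal is correct and follows exactly the paper's argument: the paper also deduces $\Psemr\pi=\Psemn\pi\in\Total{\Tsemn\Tnat}$ from the soundness of the $\NUTS$ model (Theorem~\ref{th:VNUTS-model}) together with the coincidence of the $\REL$ and $\NUTS$ interpretations, and then concludes non-emptiness from the computation $\Total{\Tsemn\Tnat}=\Eset{u\subseteq\Tsemr\Tnat\St u\not=\emptyset}$. The only difference is that you spell out the routine identification of a morphism in $\NUTS(\One,\Tsemn\Tnat)$ with a total subset of $\Tsemr\Tnat$ via $\Total\One=\Eset{\Eset\Onelem}$ and Lemma~\ref{lemma:nuts-limpl-charact}, which the paper leaves implicit.
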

Indeed we know that $\Psemr\pi=\Psemn\pi\in\Total{\Tsemn\Tnat}$. Using
an additional notion of coherence (which can be fully compatible with
$\REL$ as in the non-uniform coherence space models
of~\cite{BucciarelliEhrhard99,Boudes11}) we can also prove that
$\Psemr\pi$ has at most one element, and hence is a singleton
$\Eset n$. This is a denotational version of normalization expressing
that indeed $\pi$ ``has a value'' (and actually exactly one, which
expresses a weak form of confluence).

\subsubsection{Binary trees with integer leaves} This type can be
defined as $\tau=\Lfpll\zeta{(\Tnat\IPlus\Tensp\zeta\zeta)}$. Then an
element of $\Tsemr\tau=\Web{\Tsemn\tau}$ is an element of the set
described by the following syntax:
$\alpha,\beta,\cdots\Bnfeq\Leaf n\Bnfor\Bnode\alpha\beta$. A
computation similar to the previous one shows that
$\Tot{\Tsemn\tau}=\Eset{u\subseteq\Tsemr\tau\St u\not=\emptyset}$.

\subsubsection{An empty type of streams of integers} After
reading~\cite{BaeldeDoumaneSaurin16}, one could be tempted to define
the type of streams of integers as
$\sigma_0=\Gfpll\zeta{(\Tens\Tnat\zeta)}$. The variable set
$\Tsemr{\Tens\Tnat\zeta}_\zeta:\REL\to\REL$ maps a set $E$ to
$\Nat\times E$. The least fixed point of this operation on sets is
$\emptyset$ and hence $\Web{\Tsemn\sigma}=\emptyset$ and notice that
$\Tot\emptyset=\Eset{\emptyset,\Eset\emptyset}$. In that case, the
operation $\Phi:\Tot{\emptyset}\to\Tot\emptyset$ maps $\cT$ to
$\Eset{u\times v\St v\in\cT\text{ and
  }u\in\Part\Nat\setminus\Eset\emptyset}$ and hence $\Eset\emptyset$
to itself. It follows that $\Total{\Tsemn{\sigma_0}}=\Eset\emptyset$
that is $\Tsemn{\sigma_0}=\Top$, the final object of $\NUTS$. What is the
meaning of this trivial interpretation? It simply reflects that,
though $\sigma_0$ has a lot of non trivial proofs in $\MULL$, it is
impossible to extract any finite information from these proofs within
$\MULL$, and accordingly all these proofs are interpreted as
$\emptyset$. 
\begin{theorem}
  In $\MULL$, there is no proof of $\Seq{\Orth{\sigma_0},\Tnat}$.
\end{theorem}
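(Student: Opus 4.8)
The plan is to argue semantically via the soundness of the $\NUTS$ model (Theorem~\ref{th:VNUTS-model}): any proof $\pi$ of $\Seq{\Orth{\sigma_0},\Tnat}$ is interpreted by a morphism $\Psemn\pi\in\NUTS(\One,\Tsemn{\Orth{\sigma_0}}\IPar\Tsemn\Tnat)$, so it suffices to show this hom-set is empty, \IE that the interpretation of the sequent has no total point.

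First I would identify the target object. We computed above that $\Tsemn{\sigma_0}=\Top$, hence $\Tsemn{\Orth{\sigma_0}}=\Orth{\Tsemn{\sigma_0}}=\Orth\Top=\Zero=(\emptyset,\emptyset)$. Since $\IPar$ is the De~Morgan dual of $\ITens$ we have $\Par{\Orth X}Y=\Limpl XY$, so the interpretation of the sequent is $\Tsemn{\Orth{\sigma_0}}\IPar\Tsemn\Tnat=\Limpl{\Tsemn{\sigma_0}}{\Tsemn\Tnat}=\Limpl\Top{\Tsemn\Tnat}$. Its web is $\Web\Top\times\Web{\Tsemn\Tnat}=\emptyset$, so the only thing left to determine is whether its totality is $\Eset\emptyset$ or $\emptyset$.

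The crux is to show $\Total{\Limpl\Top{\Tsemn\Tnat}}=\emptyset$. By Lemma~\ref{lemma:nuts-limpl-charact}, a relation $t$ lies in $\Total{\Limpl\Top{\Tsemn\Tnat}}$ iff $\Matappa tu\in\Total{\Tsemn\Tnat}$ for every $u\in\Total\Top$. But $\Total\Top=\Eset\emptyset$, so this forces $\Matappa t\emptyset=\emptyset\in\Total{\Tsemn\Tnat}$; and by the computation in Section~\ref{sec:example-integers}, $\Total{\Tsemn\Tnat}$ consists exactly of the nonempty subsets of $\Tsemr\Tnat$, whence $\emptyset\notin\Total{\Tsemn\Tnat}$. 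Hence no $t$ qualifies and $\Limpl\Top{\Tsemn\Tnat}=\Zero$. Since $\Total\Zero=\emptyset$ while $\Eset\Onelem\in\Total\One$, a second application of Lemma~\ref{lemma:nuts-limpl-charact} gives $\NUTS(\One,\Zero)=\emptyset$; thus $\Psemn\pi$ cannot exist, and neither can $\pi$.

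The verification is short and I expect no genuine obstacle; the only points requiring care are the bookkeeping identifications — that $\Seq{\Orth{\sigma_0},\Tnat}$ is interpreted by $\Limpl\Top{\Tsemn\Tnat}$, and that soundness really assigns a total morphism to every proof. Conceptually the statement just records that a \emph{total} map $\Top\to\Tnat$ would have to send the unique total computation of $\Top$, namely the empty set, to a terminating integer computation, \IE a nonempty set — which is impossible.
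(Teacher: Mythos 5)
Your proposal is correct and is essentially the paper's own argument: both are semantic proofs in the $\NUTS$ model resting on $\Tsemn{\sigma_0}=\Top$, Lemma~\ref{lemma:nuts-limpl-charact}, and the fact that $\Matappa t\emptyset=\emptyset\notin\Total{\Tsemn\Tnat}$. The only difference is presentational — the paper reads $\Psem\pi$ directly as an element of $\NUTS(\Tsemn{\sigma_0},\Tsemn\Tnat)$ and derives the contradiction at once, whereas you first show $\Limpl\Top{\Tsemn\Tnat}=\Zero$ and then that $\NUTS(\One,\Zero)=\emptyset$, which is the same argument with one extra bookkeeping step.
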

In other words there is no proof of $\Seq{\Limpl{\sigma_0}\Tnat}$ in
$\MULL$; typically a function extracting the first element of a stream
would be a proof of this type\dots{} if it would exist! Here is the
argument: if $\pi$ were a proof of $\Seq{\Orth{\sigma_0},\Tnat}$, we
would have $\Psem\pi\in\NUTS(\Tsemn{\sigma_0},\Tsemn\Tnat)$ and hence
$\Matappa{\Psem\pi}\emptyset\in\Total{\Tsemn\Tnat}$ which is not the
case since $\Matappa{\Psem\pi}\emptyset=\emptyset$. If types
like $\sigma_0$ are meaningful in a proof-search perspective, their
relevance as data-types in a Curry-Howard approach to $\MULL$ is
dubious.

\subsubsection{A non-empty type of streams of integers} We set now
$\sigma=\Gfpll\zeta{(\One\Bwith\Tensp{\Tnat}{\zeta})}$. This type
looks like the previous one, but the type $\One$ leaves space
for \emph{partial} empty streams. Warning: it is \emph{not} a type of
finite or infinite streams; the $\IWith$ means that this empty stream
will not be a total element: it will have to be complemented by some
total element from the right argument of the $\IWith$. More precisely
$\Tsemr{\One\IWith\Tensp\Nat\zeta}_\zeta:\REL\to\REL$ is the variable
set which maps a set $E$ to
$\Eset{(1,\Onelem)}\cup\Eset 2\times\Nat\times E$ so that up to
renaming $\Web{\Tsemn\sigma}=\Fseq\Nat$ (all \emph{finite sequences}
of integers). In this case, the operator
$\Phi:\Tot{\Fseq\Nat}\to\Tot{\Fseq\Nat}$ maps $\cT$ to
\[
  \Eset{v\subseteq\Fseq\Nat\St\Emptyseq\in v\text{ and }\exists
    n\in\Nat,u\in\cT\ \Eset n\times u\subseteq v}
\]
where we use
$\Emptyseq$ for the empty sequence. So for instance
{\footnotesize%
  \begin{align*}
  \Phi^0(\Part{\Fseq\Nat})&=\Part{\Fseq\Nat}\quad
  \Phi^1(\Part{\Fseq\Nat})=\Eset{u\in\Part{\Fseq\Nat}\St\Emptyseq\in
                            u}\\
  \Phi^3(\Part{\Fseq\Nat})&=\Eset{u\in\Part{\Fseq\Nat}\St \exists
                            n_1,n_2\ \Emptyseq,(n_1),(n_1,n_2)\in u}\,.
  \end{align*}%
}
The greatest fixed
point is reached in $\omega$ steps:
\begin{multline*}
  \Tot{\Tsemn{\sigma}}
  =\Inter_{n<\omega}\Phi^n(\Part{\Fseq\Nat})\\
  =\Eset{u\subseteq\Fseq\Nat\St\exists f\in\Nat^\omega\,\forall
  k<\omega\ (f(1),\dots,f(k))\in u}\,.
\end{multline*}
So a total subset of $\Web{\Tsemn\sigma}$ must contain (at least) an
infinite stream of integer. For this type of streams $\sigma$ it is
easy to build a proof of $\Seq{\Orth\sigma,\Tnat}$ extracting the
first element of a stream, interpreted as
$\Eset{((n),n)\St n\in\Nat}$.

\section{Conclusion and further work}\label{sec:conclusion}

We will study next the semantics of infinite proofs of $\MULL$ (whose
definition extends that of~\cite{BaeldeDoumaneSaurin16} for
$\MUMALL$). A crucial step is to prove that these infinite proofs can
be interpreted as total sets in $\NUTS$, this will be presented in a
further paper. This interpretation of proofs is based on the
interpretation of their finite approximations in $\REL$ (remember that
the interpretations of a $\MULL$ proof in $\NUTS$ and in $\REL$ are
\emph{exactly the same set}).

Our models will also serve as guidelines for the design of a
functional language based on $\MULL$, generalizing Gödel's System T in
the spirit of~\cite{Loader97} though, as explained in the
Introduction, Loader's syntax is not fully compatible with $\LL$ as it
is based on \emph{cocartesian} cartesian closed categories. Our system
will primarily implement Park's rule, but we will also consider other options
based on polymorphism in the spirit of~\cite{Mendler91,CamposFiore20}
or~\cite{Matthes98}, or general recursion with guardedness
restrictions as in~\cite{Coquand93,Paulin-Mohring93,Gimenez98}.

Its syntax will be based on the idea of representing data-types as
\emph{positive} formulas of $\MULL$ interpreted in $\Em\cL$ and
therefore equipped with morphisms implementing weakening, contraction
and promotion: as noticed in~\cite{Baelde12}, $\Lfpll\zeta\_$ is a
positive operation whereas $\Gfpll\zeta\_$ is negative. In $\Em\cL$,
the $\IPlus$ of $\LL$ is a coproduct and the $\ITens$ is a cartesian
product as expected (and $\ITens$ distributes over $\IPlus$). The
targeted calculus will feature a notion of values (positive terms)
accounting for the morphisms of $\Em\cL$, substitution in terms being
allowed only for values because only them can safely be discarded and
duplicated. Thanks to this choice of design, weakening and contraction
will remain implicit operations as in the usual
$\lambda$-calculus. Our calculus will have explicit promotion and
dereliction operations, allowing to implement both CBN and CBV in the
same setting, just as in Levy's
Call-by-push-value~\cite{LevyP06,Ehrhard16a}.

We thank the reviewers of this paper for their careful reading and
very useful suggestions. This work was partly funded by the ANR
project PPS, ANR-19-CE48-0014. This paper is a preprint version of an
article published at LICS'21.

%


\bibliography{newbiblio}

\section{Appendix}

\subsection{Proof of Lemma~\ref{lemma:functor-gfp-general}}
\begin{proof} We have
$\cF(g,\Fungfp\cF(B))\in\cA(\Fungfp\cF(B),\cF(B',\Fungfp\cF(B)))$ thus
defining an $\cF_{B'}$-coalgebra structure on $\Fungfp\cF(B)$ and hence
there exists a unique morphism $\Fungfp\cF(g)$ such that
\[
  \cF(B',\Fungfp\cF(g))\Compl\cF(g,\Fungfp\cF(B))=\Fungfp\cF(g)\,,
\]
that is $\cF(g,\Fungfp\cF(g))=\Fungfp\cF(g)$.

Functoriality follows: consider also $g'\in\cB(B',B'')$, then we know
that $h=\Fungfp\cF(g'\Compl g)$ satisfies $\cF(g'\Compl g,h)=h$ by the
definition above. Now $h'=\Fungfp\cF(g')\Compl\Fungfp\cF(g)$ satisfies
the same equation by functoriality of $\cF$ and because
$\cF(g,\Fungfp\cF(g))=\Fungfp\cF(g)$ and
$\cF(g',\Fungfp\cF(g'))=\Fungfp\cF(g')$, and hence $h'=h$ by
Lemma~\ref{lemma:equations-final-coalgebra}, taking
$l=\cF(g'\Compl g,\Fungfp\cF(B))$. In the same way one proves that
$\Fungfp\cF(\Id)=\Id$. \end{proof}

\subsection{Proof of Lemma~\ref{lemma:strfun-gfp-general}}
\begin{proof} The part of the statement which concerns the functor
$\Strfun{\Fungfp{\Vsnot F}}$ is a direct application of
Lemma~\ref{lemma:functor-gfp-general} so we only have to deal with the
strength.  Let us prove naturality so let
$\Vect f\in\LCAT^n(\Vect X,\Vect{X'})$ and $g\in\LCAT(Y,Y')$, we must
prove that the following diagram commutes
\[
  \begin{tikzpicture}[->, >=stealth]
      \node (1) {$\Tens{\Excl Y}{\Strfun{\Fungfp{\Vcsnot F}}(\Vect X)}$};
      \node (2) [right of=1, node distance=5.4cm]
      {$\Strfun{\Fungfp{\Vcsnot F}}(\Tens{\Excl Y}{\Vect X})$};
      \node (3) [below of=1, node distance=1.2cm] 
      {$\Tens{\Excl {Y'}}{\Strfun{\Fungfp{\Vcsnot F}}(\Vect{X'})}$};
      \node (4) [below of=2, node distance=1.2cm] 
      {$\Strfun{\Fungfp{\Vcsnot F}}(\Tens{\Excl {Y'}}{\Vect{X'}})$};
       \tikzstyle{every node}=[midway,auto,font=\scriptsize]
       \draw (1) -- node {$\Strnat{\Fungfp{\Vcsnot F}}_{Y,\Vect X}$} (2);
       \draw (1) -- node [swap]  
       {$\Tens{\Excl g}{\Strfun{\Fungfp{\Vcsnot F}}(\Vect f)}$} (3);
       \draw (2) -- node
       {$\Strfun{\Fungfp{\Vcsnot F}}(\Tens{\Excl g}{\Vect f})$} (4);
       \draw (3) -- node {$\Strnat{\Fungfp{\Vcsnot F}}_{Y',\Vect{X'}}$} (4);
  \end{tikzpicture}   
\]
Let
$h_1=\Strnat{\Fungfp{\Vcsnot F}}_{Y',\Vect{X'}} \Compl(\Tens{\Excl
  g}{\Strfun{\Fungfp{\Vcsnot F}}(\Vect f)})$ and
$h_2=\Strfun{\Fungfp{\Vcsnot F}}(\Tens{\Excl g}{\Vect
  f})\Compl\Strnat{\Fungfp{\Vcsnot F}}_{Y,\Vect X}$ be the two
morphisms we must prove equal. We
use Lemma~\ref{lemma:equations-final-coalgebra}, taking the following
morphism $l$.
\begin{equation*}
    \begin{tikzpicture}[->, >=stealth]
      \node (1) {$\Tens{\Excl Y}{\Strfun{\Fungfp{\Vcsnot F}}(\Vect X)}
        =\Excl Y\ITens
          \Strfun{\Vcsnot F}(\Vect X,\Strfun{\Fungfp{\Vcsnot F}}(\Vect X))$};
      \node (2) [below of=1, node distance=1.2cm]
      {$\Strfun{\Vcsnot F}(\Tens{\Excl Y}{\Vect X},
        \Tens{\Excl Y}{\Strfun{\Fungfp{\Vcsnot F}}(\Vect X)})$};
      \node (3) [below of=2, node distance=1.2cm]
      {$\Strfun{\Vcsnot F}(\Tens{\Excl {Y'}}{\Vect{X'}},
        \Tens{\Excl Y}{\Strfun{\Fungfp{\Vcsnot F}}(\Vect X)})$};
       \tikzstyle{every node}=[midway,auto,font=\scriptsize]
       \draw (1) -- node {$\Strnat{\Vcsnot F}_{Y,(\Vect X,
           \Strfun{\Fungfp{\Vcsnot F}}(\Vect X))}$} (2);
       \draw (2) -- node {$\Strfun{\Vcsnot F}(\Tens{\Excl g}
         {\Vect f},\Id)$} (3);
     \end{tikzpicture}   
\end{equation*}
With these notations we have
\begin{align*}
  &\Strfun{\Vcsnot F}(\Tens{\Excl {Y'}}{\Vect{X'}},h_1)\Compl l\\
  &= \Strfun{\Vcsnot F}
    (\Tens{\Excl {Y'}}{\Vect{X'}},\Strnat{\Fungfp{\Vcsnot F}}_{Y',\Vect{X'}})
    \Compl \Strfun{\Vcsnot F}(\Tens{\Excl {Y'}}{\Vect{X'}},
    \Tens{\Excl g}{\Strfun{\Fungfp{\Vcsnot F}}(\Vect f)})\\
  &\quad\quad \Compl \Strfun{\Vcsnot F}(\Tens{\Excl g}{\Vect f},
    \Tens{\Excl Y}{\Strfun{\Fungfp{\Vcsnot F}}(\Vect X)})
    \Compl \Strnat{\Vcsnot F}_{Y,(\Vect X,\Strfun{\Fungfp{\Vcsnot F}}(\Vect X))}\\
  &= \Strfun{\Vcsnot F}
    (\Tens{\Excl {Y'}}{\Vect{X'}},\Strnat{\Fungfp{\Vcsnot F}}_{Y',\Vect{X'}})
    \Compl \Strfun{\Vcsnot F}(\Tens{\Excl g}{\Vect f},
    \Tens{\Excl g}{\Strfun{\Fungfp{\Vcsnot F}}(\Vect f)})
    \Compl \Strnat{\Vcsnot F}_{Y,(\Vect X,\Strfun{\Fungfp{\Vcsnot F}}(\Vect X))}\\
  &= \Strfun{\Vcsnot F}
    (\Tens{\Excl {Y'}}{\Vect{X'}},\Strnat{\Fungfp{\Vcsnot F}}_{Y',\Vect{X'}})
    \Compl \Strnat{\Vcsnot F}_{Y',(\Vect{X'},\Strfun{\Fungfp{\Vcsnot F}}(\Vect{X'}))}
    \Compl (\Tens{\Excl g}{\Strfun{\Vcsnot F}}
    (\Vect f,\Strfun{\Fungfp{\Vcsnot F}}(\Vect f)))\\
  &\quad\quad\quad\quad\quad\quad \text{ by naturality of }\Strnat{\Vcsnot F}\\
  &= \Strnat{\Fungfp{\Vcsnot F}}_{Y',\Vect{X'}}
    \Compl(\Tens{\Excl g}{\Strfun{\Vcsnot F}}
    (\Vect f,\Strfun{\Fungfp{\Vcsnot F}}(\Vect f)))
    \text{ by~\Eqref{eq:final-coalg-strength-charact}}\\
  &= \Strnat{\Fungfp{\Vcsnot F}}_{Y',\Vect{X'}}
    \Compl(\Tens{\Excl g}{\Strfun{\Fungfp{\Vcsnot F}}(\Vect f)})
    \text{ by~Lemma~\ref{lemma:functor-gfp-general}}
\end{align*}
so that $\Strfun{\Vcsnot F}(\Tens{\Excl {Y'}}{\Vect{X'}},h_1)\Compl l=h_1$
as required. On the other hand we have
\begin{align*}
  &\Strfun{\Vcsnot F}(\Tens{\Excl {Y'}}{\Vect{X'}},h_2)\Compl l\\
  &= \Strfun{\Vcsnot F}(\Tens{\Excl {Y'}}{\Vect{X'}},
    \Strfun{\Fungfp{\Vcsnot F}}(\Tens{\Excl g}{\Vect f}))
  \Compl \Strfun{\Vcsnot F}(\Tens{\Excl {Y'}}{\Vect{X'}},
    \Strnat{\Fungfp{\Vcsnot F}}_{Y,\Vect X})\\
  &\quad\quad \Compl \Strfun{\Vcsnot F}(\Tens{\Excl g}{\Vect f},
    \Tens{\Excl Y}{\Strfun{\Fungfp{\Vcsnot F}}(\Vect X)})
    \Compl \Strnat{\Vcsnot F}_{Y,(\Vect X,\Strfun{\Fungfp{\Vcsnot F}}(\Vect X))}\\
  &= \Strfun{\Vcsnot F}(\Tens{\Excl {Y'}}{\Vect{X'}},
    \Strfun{\Fungfp{\Vcsnot F}}(\Tens{\Excl g}{\Vect f}))
    \Compl \Strfun{\Vcsnot F}(\Tens{\Excl g}{\Vect f},
    \Tens{\Excl Y}{\Strfun{\Fungfp{\Vcsnot F}}(\Vect X)})\\
  &\quad\quad \Compl \Strfun{\Vcsnot F}(\Tens{\Excl Y}{\Vect X},
    \Strnat{\Fungfp{\Vcsnot F}}_{Y,\Vect X})
    \Compl \Strnat{\Vcsnot F}_{Y,(\Vect X,\Strfun{\Fungfp{\Vcsnot F}}(\Vect X))}\\
  &= \Strfun{\Vcsnot F}(\Tens{\Excl g}{\Vect f},
    \Strfun{\Fungfp{\Vcsnot F}}(\Tens{\Excl g}{\Vect f}))
    \Compl\Strnat{\Fungfp{\Vcsnot F}}_{Y,\Vect X}
    \text{ by~\Eqref{eq:final-coalg-strength-charact}}\\
  &=  \Strfun{\Fungfp{\Vcsnot F}}(\Tens{\Excl g}{\Vect f})
    \Compl\Strnat{\Fungfp{\Vcsnot F}}_{Y,\Vect X}
    \text{ by~Lemma~\ref{lemma:functor-gfp-general}}
\end{align*}
so that
$\Strfun{\Vcsnot F}(\Tens{\Excl {Y'}}{\Vect{X'}},h_2)\Compl l=h_2$
which proves our contention. The commutation of the diagrams of
Figure~\ref{fig:strength-monoidality} for
$\Strnat{\Fungfp{\Vcsnot F}}$ is proven similarly.%
\end{proof}

\subsection{Proof of Lemma~\ref{lemma:rel-embedding-retraction}}
\begin{proof}
  Let $a\in E$, since $(a,a)\in\Id_E=t\Compl s$, there must exist
  $b\in F$ such that $(a,b)\in s$ and $(b,a)\in t$. If $(a,b')\in s$
  then $(b,b')\in s\Compl t\subseteq\Id_F$ and hence $b'=b$. It
  follows that $s$ is a total function $E\to F$. Let $(a,b)\in s$
  (that is $a\in E$ and $b=s(a)$). Since $t\Compl s=\Id_E$, we must
  have $(b,a)\in t$. Conversely let $(b,a)\in t$, we have
  $(b,s(a))\in s\Compl t$ and hence $b=s(a)$. We have proven that
  $t=\Eset{(s(a),a)\St a\in E}$. If $a,a'\in $ satisfy $s(a)=s(a')$ we
  have therefore $(a,a')\in t\Compl s=\Id_E$ and hence $a=a'$; this
  shows that $s$ is injective.
\end{proof}

\subsection{Proof of Lemma~\ref{lemma:rel-fixpoint-final}}
\begin{proof}
  Let $(E,t)$ be an $\Vsnot F$-coalgebra.
  We define a sequence $e_n\in\REL(E,\Funfp{\Vsnot F})$ as follows:
  $e_0=\emptyset$ and $e_{n+1}=\Vsnot F(e_n)\Compl t$. Then
  $e_n\subseteq e_{n+1}$ for all $n$ by an easy induction, using the
  fact that $\Vsnot F$ is locally continuous. Let
  $e=\Union_{n=0}^\infty e_n\in\REL(E,\Funfp{\Vsnot F})$, by
  local continuity of $\Vsnot F$ we have
  \(
    \Vsnot F(e)\Compl t
    = \left(\Union_{n=0}^\infty\Vsnot F(e_n)\right)\Compl t
    = \Union_{n=0}^\infty (\Vsnot F(e_n)\Compl t)
    = \Union_{n=0}^\infty e_{n+1}=e
    \)
  which means that
  \[
    e\in\COALGFUN\REL{\Vsnot F}((E,t),(\Funfp{\Vsnot F},\Id))\,.
  \]
  We end the proof by showing that $e$ is the unique such morphism, so
  let
  \(e'\in\COALGFUN\REL{\Vsnot F}((E,t),(\Funfp{\Vsnot F},\Id))\) which
  means that $e'\in\REL(E,\Funfp{\Vsnot F})$ and
  $\Vsnot F(e')\Compl t=e'$.

  Let $i_n\in\REL(\Funfp{\Vsnot F},\Funfp{\Vsnot F})$ be defined by
  induction by $i_0=\emptyset$ and $i_{n+1}=\Vsnot F(i_n)$. Then
  $(i_n)_{n\in\Nat}$ is monotone and $\Union_{n=0}^\infty i_n=\Id$ by
  definition of $\Funfp{\Vsnot F}$. We prove by induction on $n$ that
  $\forall n\in\Nat\ i_n\Compl e'=i_n\Compl e$.  Clearly
  $i_0\Compl e'=i_0\Compl e=\emptyset$. Next
  \begin{align*}
    i_{n+1}\Compl e'
    &= \Vsnot F(i_n)\Compl \Vsnot F(e')\Compl t\\
    &= \Vsnot F(i_n\Compl e')\Compl t\\
    &= \Vsnot F(i_n\Compl e)\Compl t\text{\quad by inductive hypothesis}\\
    &= i_{n+1}\Compl e\,.
  \end{align*}
  Therefore
  $e'=\left(\Union_{n\in\Nat}i_n\right)\Compl
  e'=\Union_{n\in\Nat}(i_n\Compl e')=\Union_{n\in\Nat}(i_n\Compl
  e)=e$.
\end{proof}

\subsection{Proof of Proposition~\ref{prop:hom-continuous-dir-cocontinuous}}
\begin{proof}
  Let $\cD$ be a directed set of sets and let $H$ be a set. For each
  $E\in\cD$ let $s_E\in\REL(\Vsnot F(E),H)$ so that $(s_E)_{E\in\cD}$
  defines a cocone, that is, for each $E,F\in\cD$ such that
  $E\subseteq F$, one has $s_E=s_F\Compl\Vsnot F(\Relii_{E,F})$. Let
  $L=\Union\cD$. Let $s\in\REL(\Vsnot F(L),H)$ be given by
  \(
  s=\Union_{E\in\cD}s_E\Compl\Vsnot F(\Relip_{E,L})
  \).
  Let $E\in\cD$, we have
  \(
    s\Compl\Vsnot F(\Relii_{E,L})
    =\Union_{F\in\cD}s_F\Compl\Vsnot F(\Relip_{F,L}\Compl\Relii_{E,L})
    \)
  so that $s_E\subseteq s\Compl\Vsnot F(\Relii_{E,L})$ (since
  $s_F\Compl\Vsnot F(\Relip_{F,L}\Compl\Relii_{E,L})=s_E$ when $F=E$).

  We prove the converse inclusion. Let $F\in\cD$ and let $G\in\cD$ be
  such that $E,F\subseteq G$ (remember that $\cD$ is directed). We
  have
  \begin{align*}
    &s_F\Compl\Vsnot F(\Relip_{F,L}\Compl\Relii_{E,L})
    =s_F\Compl\Vsnot F(\Relip_{F,G}\Compl\Relip_{G,L}
      \Compl\Relii_{G,L}\Compl\Relii_{E,G})\\
    &\quad=s_F\Vsnot F(\Relip_{F,G}\Compl\Relii_{E,G})
    =s_G\Vsnot F(\Relii_{F,G})\Compl\Vsnot F(\Relip_{F,G}\Compl\Relii_{E,G})\\
    &\quad\subseteq s_G\Compl\Vsnot F(\Relii_{E,G})=s_E
  \end{align*}
  where we have used the fact that
  $\Relii_{F,G}\Compl\Relip_{F,G}\subseteq\Id_G$ and hence
  $\Vsnot F(\Relii_{F,G}\Compl\Relip_{F,G})\subseteq\Id_{\Vsnot F(G)}$
  by local continuity of $\Vsnot F$.

  So $s_F\Compl\Vsnot F(\Relip_{F,L}\Compl\Relii_{E,L})\subseteq s_E$
  for all $F\in\cD$ and hence
  $s\Compl\Vsnot F(\Relii_{E,L})\subseteq s_E$ as contended.

  Let now $s'\in\REL(\Vsnot F(L),H)$ be such that
  $s'\Compl\Vsnot F(\Relii_{E,L})=s_E$ for each $E\in\cD$, we show
  that $s'=s$ thus proving the uniqueness part of the universal
  property. For $E\in\cD$, let
  $\theta_E=\Relii_{E,L}\Relip_{E,L}\in\REL(L,L)$. Then
  $(\theta_E)_{E\in\cD}$ is a directed family (for $\subseteq$) and
  $\Union_{E\in\cD}\theta_E=\Id_L$. By local continuity of $\Vsnot F$,
  we have
  \begin{align*}
    s'
    &=s'\Compl\Id_{\Vsnot F(L)}
    =s'\Compl\Union_{E\in\cD}\Vsnot F(\theta_E)\\
    &=\Union_{E\in\cD}s'\Compl\Vsnot F(\Relii_{E,L})
      \Compl\Vsnot F(\Relip_{E,L})
    =\Union_{E\in\cD}s_E\Compl\Vsnot F(\Relip_{E,L})=s
  \end{align*}
  by our assumption on $s'$ and by definition of $s$. This shows that
  the cocone $(\Vsnot F(\Relii_{E,L}))_{E\in\cD}$ on
  $\Vsnot F\Compl\Relii$ is colimiting, thus proving that
  $\Vsnot F\Compl\Relii$ is directed cocontinuous.
\end{proof}

\subsection{Proof of Lemma~\ref{lemma:hom-conts-stable-fixpoint}}
\begin{proof}
  As usual we assume that $n=1$ to increase readability. 
%
  We need to prove first that $\Fungfp{\Vsnot F}$ is monotone on
  morphisms, so let $s,t\in\REL(E,F)$ with $s\subseteq t$. We have
  $\Fungfp{\Vsnot F}(s)=\Union_{n\in\Nat}s_n$ and
  $\Fungfp{\Vsnot F}(t)=\Union_{n\in\Nat}t_n$ with
  $s_0=t_0=\emptyset$, $s_{n+1}=\Vsnot F(s,s_n)$ and
  $t_{n+1}=\Vsnot F(t,t_n)$ (we use the action of $\Fungfp{\Vsnot F}$
  on morphisms resulting from Lemma~\ref{lemma:functor-gfp-general}
  and from the characterization of the morphisms to the final object
  given in the proof of Lemma~\ref{lemma:rel-fixpoint-final}). By
  induction and hom-monotonicity of $\Vsnot F$ we have
  $\forall n\in\Nat\ s_n\subseteq t_n$ and hence
  $\Fungfp{\Vsnot F}(s)\subseteq\Fungfp{\Vsnot F}(t)$.  Let us prove
  now local continuity so let $D\subseteq\REL(E,F)$ be directed and let
  $t=\Union D$, we prove that
  $\Fungfp{\Vsnot F}(t)=\Union_{s\in D}\Fungfp{\Vsnot
    F}(s)\in\REL(\Fungfp{\Vsnot F}(E),\Fungfp{\Vsnot F}(F))$ using
  Lemma~\ref{lemma:equations-final-coalgebra} (with the notations of
  that lemma, we take $l=\Vsnot F(t,\Fungfp{\Vsnot F}(E))$).
  We have
  \[
    \Vsnot F_F(\Fungfp{\Vsnot F}(t))\Compl\Vsnot F(t,\Fungfp{\Vsnot
      F}(E))=\Fungfp{\Vsnot F}(t)
  \]
  by definition of the functor $\Fungfp{\Vsnot F}$ and
  \begin{align*}
    &\Vsnot F_F(\Union_{s\in D}\Fungfp{\Vsnot F}(s))
    \Compl\Vsnot F(t,\Fungfp{\Vsnot F}(E))\\
    &= \Union_{s\in D}\Vsnot F(F,\Fungfp{\Vsnot F}(s))
      \Compl\Union_{s\in D}\Vsnot F(s,\Fungfp{\Vsnot F}(E))
      \text{\quad by hom-cont.}\\
    &= \Union_{s\in D}\Vsnot F(s,\Fungfp{\Vsnot F}(s))
    = \Union_{s\in D}\Fungfp{\Vsnot F}(s)\,.
  \end{align*}
  In the second equation, we used the facts that $D$ is directed and
  the monotonicity of $\Vsnot F$ and $\Fungfp{\Vsnot F}$ on morphisms.

  Let $E\subseteq F$, we prove that
  $\Fungfp{\Vsnot F}(E)\subseteq\Fungfp{\Vsnot F}(F)$. This results
  from the observation that if $E'\subseteq F'$, then
  $\Vsnot F_E(E')\subseteq\Vsnot F_F(F')$ and hence
  $\forall n\in\Nat\ \Vsnot F_E^n(\emptyset)\subseteq\Vsnot
  F_F^n(\emptyset)$. Let us check that
  $\Fungfp{\Vsnot F}(\Relii_{E,F})=\Relii_{\Fungfp{\Vsnot
      F}(E),\Fungfp{\Vsnot F}(F)}\in\REL(\Fungfp{\Vsnot
    F}(E),\Fungfp{\Vsnot F}(F))$.
  We have
  \begin{multline*}
    \Vsnot F(F,\Fungfp{\Vsnot F}(\Relii_{E,F}))\Compl\Vsnot
    F(\Relii_{E,F},\Fungfp{\Vsnot F}(E))\\
    =\Vsnot F(\Relii_{E,F},\Fungfp{\Vsnot
      F}(\Relii_{E,F}))=\Fungfp{\Vsnot F}(\Relii_{E,F})
  \end{multline*} %
  by definition of the functor $\Fungfp{\Vsnot F}$ and
  \begin{align*}
    \Vsnot F(F,\Relii_{\Fungfp{\Vsnot F}(E),\Fungfp{\Vsnot F}(F)})\Compl\Vsnot
    F(\Relii_{E,F},\Fungfp{\Vsnot F}(E))
    &=\Relii_{\Vsnot F(E,\Fungfp{\Vsnot F}(E)),
      \Vsnot F(F,\Fungfp{\Vsnot F}(F))}\\
    &=\Relii_{\Fungfp{\Vsnot F}(E),\Fungfp{\Vsnot F}(F)}
  \end{align*}
  by strictness of $\Vsnot F$. The equation follows by
  Lemma~\ref{lemma:equations-final-coalgebra}, so that the functor
  $\Fungfp{\Vsnot F}$ is strict. 
\end{proof}

\subsection{Proof of Lemma~\ref{lemma:nuts-limpl-charact}}
\begin{proof}
  Let $t\in\Total{\Limpl XY}$ and let $u\in\Total X$. Let
  $v'\in\Orth{\Total Y}$, since
  $u\times v'\in\Total{\Tens X{\Orth Y}}$ we have
  $t\cap(u\times v')\not=\emptyset$ and hence
  $(\Matappa tu)\cap v'\not=\emptyset$. Therefore
  $\Matappa tu\in\Biorth{\Total Y}=\Total Y$. Conversely assume that
  $\forall u\in\Total X\ \Matappa tu\in\Total Y$. Let $u\in\Total X$
  and $v'\in\Total{\Orth Y}=\Orth{\Total Y}$. Since
  $\Matappa tu\in\Total Y$ we have
  $(\Matappa tu)\cap v'\not=\emptyset$ and hence
  $t\cap(u\times v')\not=\emptyset$ and this shows that
  $t\in\Total{\Limpl XY}$.
\end{proof}

\subsection{Proof of Lemma~\ref{lemma:NUTS-iso-charact}}
\begin{proof}
  Assume that $t$ is an iso in $\NUTS$ so that there is $t'\in\NUTS(Y,X)$
  such that $\Matapp{t'}t=\Id_{\Web X}$ and
  $\Matapp t{t'}=\Id_{\Web Y}$ and since we know that the isos in
  $\REL$ are the bijections we know that $t$ is a bijection. The fact
  that $\forall u\subseteq\Web X\ u\in\Total X\Equiv t(u)\in\Total Y$
  results from the fact that both $t$ and $t'=\Funinv t$ are morphisms
  in $\NUTS$.

  The converse implication is obvious.
\end{proof}

\subsection{Proof of Lemma~\ref{lemma:limpl-tens-charact}}
\begin{proof}
  The condition is obviously necessary, let us prove that it is
  sufficient so assume that $t$ fulfills it and let us prove that
  $t\in\Total{\Limpl{\Tens{X_1}{X_2}}{Y}}$. To this end it suffices to
  prove that
  $\Orth t\in\Total{\Limpl{\Orth Y}{\Orth{\Tensp{X_1}{X_2}}}}$. So let
  $v'\in\Total{\Orth Y}$ and let us prove that
  $\Matappa{\Orth
    t}{v'}\in\Total{\Orth{\Tensp{X_1}{X_2}}}=\Orth{\Eset{\Tens{u_1}{u_2}\St
      u_1\in\Total{X_1}\text{ and }u_2\in\Total{X_2}}}$. So let
  $u_i\in\Total{X_i}$ for $i=1,2$. We know that
  $\Matappa t{\Tensp{u_1}{u_2}}\in\Total Y$ and hence
  $\Matappap t{\Tensp{u_1}{u_2}}\cap v'\not=\emptyset$, that is
  $\Tensp{u_1}{u_2}\cap\Matappap{\Orth t}{v'}\not=\emptyset$, proving
  our contention.
\end{proof}

\subsection{Proof of Lemma~\ref{lemma:Assoc-tens-limpl}}
\begin{proof}
  Let $t\in\Total{\Limpl{\Tensp{X_1}{X_2}}{Y}}$ and let us prove that
  $s=\Matappa\Assoc t\in\Total{\Limpl{X_1}{\Limplp{X_2}{Y}}}$. Given
  $u_i\in\Total{X_i}$ is suffices to prove that
  $\Matappa{\Matappap{t'}{u_1}}{u_2}\in\Total Y$ which results from
  the fact that
  $\Matappa{\Matappap{s}{u_1}}{u_2}=\Matappa
  t{\Tensp{u_1}{u_2}}$. Conversely let
  $s\in\Total{\Limpl{X_1}{\Limplp{X_2}{Y}}}$ and let us prove that
  $t=\Matappa{\Funinv\Assoc}{s}\in\Total{\Limpl{\Tensp{X_1}{X_2}}{Y}}$. This
  results from lemma~\ref{lemma:limpl-tens-charact} and from the
  equation
  $\Matappa{\Matappap{s}{u_1}}{u_2}=\Matappa t{\Tensp{u_1}{u_2}}$.
\end{proof}

\subsection{Proof of Lemma~\ref{lemma:nuts-excl-map}}
\begin{proof}
  The condition is obviously necessary, so let us assume that it
  holds. By Lemma~\ref{lemma:nuts-orth-morph}, it suffices to prove
  that $\Orth t\in\NUTS(\Orth Y,\Orthp{\Excl X})$. Let
  $v'\in\Total{\Orth Y}$, we prove that
  $\Matappa{\Orth t}{v'}\in\Orth{\Total{\Excl Y}}$. So let
  $u\in\Total X$, since
  $\Matappa t{\Promhc u}\in\Total Y$ and hence
  $\Matappap t{\Promhc u}\cap v'\not=\emptyset$, that is
  $\Matappap{\Orth t}{v'}\cap\Promhc u\not=\emptyset$.
\end{proof}

\subsection{Proof of Lemma~\ref{lemma:nuts-excl-map-bil}}
\begin{proof}
  We deal with the case $k=2$. The condition is necessary since, if
  $u_1\in\Total{X_1}$ and $u_2\in\Total{X_2}$, then
  $\Tens{\Promhc{u_1}}{\Promhc{u_2}}\in\Total{\Tens{\Excl{X_1}}{\Excl{X_2}}}$.
  So assume that it holds. Let
  $t'=\Curlin(t)\in\REL(\Limpl{\Web{X_1}}{\Limplp{\Web{X_2}}{\Web
      Y}})$. Let $u_1\in\Total{X_1}$, we have
  $\Matappa{t'}{\Promhc{u_1}}\in\Part{\Web{\Limpl{\Excl{X_2}}Y}}$. Let
  $u_2\in\Total{X_2}$, we have
  $\Matappa{\Matappap{t'}{\Promhc{u_1}}}{\Promhc{u_1}} =\Matappa
  t{\Tensp{\Promhc{u_1}}{\Promhc{u_2}}}\in\Total Y$ by our
  assumption. It follows by Lemma~\ref{lemma:nuts-excl-map} that
  $\Matappa{t'}{\Promhc{u_1}}\in\Total{\Limpl{\Excl{X_2}}Y}$ and since
  this holds for any $u_1\in\Total{X_1}$ we actually have
  $t'\in\NUTS(\Excl{X_1},\Limpl{\Excl{X_2}}Y)$. It follows that
  $t=\Funinv{\Curlin}(t')\in\NUTS(\Tens{\Excl{X_1}}{\Excl{X_2}},Y)$ as
  contended.
\end{proof}

\subsection{Proof of Lemma~\ref{lemma:NUTS-excl}}
\begin{proof}
Given an object $X$ of $\NUTS$, we set
$\Der X=\Der{\Web X}\in\REL(\Web{\Excl X},\Web X)$ and
$\Digg X=\Digg{\Web X}\in\REL(\Web{\Excl X},\Web{\Excll X})$. Given
$u\in\Total X$, we have $\Matappa{\Der X}{\Promhc u}=u\in\Total X$ and
$\Matappa{\Digg X}{\Promhc u}=\Prommhc u\in\Total{\Excll X}$. It
follows by Lemma~\ref{lemma:nuts-excl-map} that
$\Der X\in\NUTS(\Excl X,X)$ and $\Digg X\in\NUTS(\Excl X,\Excll X)$.

Naturality and monadicity trivially hold because they hold in $\REL$:
we have an obvious faithful forgetful functor $\NUTS\to\REL$ which
commutes with all $\LL$ categorical constructs.

We are left with defining the strong monoidality structure of
$\Excl\_$ (Seely isomorphisms), for $\Seelyz\in\NUTS(\One,\Excl\Top)$
we take the same morphism as in $\REL$. And we set
$\Seelyt_{X_1,X_2}=\Seelyt_{\Web{X_1},\Web{X_2}}
\in\REL(\Web{\Tens{\Excl{X_1}}{\Excl{X_2}}},\Web{\Exclp{\With{X_1}{X_2}}})$. Let
$u_i\in\Total{X_i}$ for $i=1,2$. We have
\(
  \Matappa{\Seelyt_{X_1,X_2}}{\Tensp{\Promhc{u_1}}{\Promhc{u_2}}}
  =\Promhc{(\With{u_1}{u_2})}\in\Total{\Exclp{\With{X_1}{X_2}}}
  \)
since $\With{u_1}{u_2}\in\Total{\With{X_1}{X_2}}$, and hence by
Lemma~\ref{lemma:nuts-excl-map-bil} we have
\(
  \Seelyt_{X_1,X_2}\in\NUTS(\Tensp{\Excl{X_1}}{\Excl{X_2}},
  \Exclp{\With{X_1}{X_2}})
\).
Any element $w$ of $\Total{\With{X_1}{X_2}}$ is of shape
$w=\With{u_1}{u_2}$ with $u_i\in\Total{X_i}$, namely
$u_i=\Matappa{\Proj i}w$. We have
\(
  \Matappa{\Funinv{(\Seelyt_{X_1,X_2})}}{\Promhc w}
  =\Tens{\Promhc{u_1}}{\Promhc{u_2}}\in\Total{\Tens{\Excl{X_1}}{\Excl{X_2}}}
  \)
and hence by Lemma~\ref{lemma:nuts-excl-map} we have
$\Funinv{(\Seelyt_{X_1,X_2})}
\in\NUTS(\Exclp{\With{X_1}{X_2}},\Tensp{\Excl{X_1}}{\Excl{X_2}})$. This
ends the proof that $\NUTS$ is a model of classical Linear Logic since
the required commutations obviously hold because they hold in $\REL$.
\end{proof}


\subsection{Full proof of Theorem~\ref{th:VNUTS-model}}
\begin{proof}
Concerning Condition~(\ref{enum:seel-mull-3}), let
$(\Vsnot X_i)_{i=1}^k$ be elements of $\VNUTS n$ and let
$\Vsnot X\in\VNUTS k$. Considering $\Vsnot X$ and the $\Vsnot X_i$'s
as strong functors, we know that $\Vsnot X\Comp\Vect{\Vsnot X}$ is a
strong functor $\NUTS^n\to\NUTS$. We simply have to exhibit a VNUTS
whose associated strong functor is $\Vsnot X\Comp\Vect{\Vsnot X}$.
Let $\Vsnot F=\Web{\Vsnot X}\Comp\Web{\Vect{\Vsnot X}}$ (composition
of variable sets, Section~\ref{sec:strong-VS-Seely-model}). Let
$\Vect X\in\NUTS^n$, each $\Strfun{\Vsnot X_i}(\Vect X)$ is an object
of $\NUTS$ and hence\\
$(\Strfun{\Vsnot F}(\Web{\Vect{X}}), \Total{\Vsnot X}(\Strfun{\Vsnot
  X_1}(\Vect X),\dots,\Strfun{\Vsnot X_k}(\Vect X)))$ is a
NUTS. Moreover given $\Vect t\in\NUTS^n(\Vect X,\Vect Y)$, we know
that for each $i=1,\dots,k$, one has
$\Strfun{\Vsnot X_i}(\Vect t)\in\NUTS(\Strfun{\Vsnot X_i}(\Vect
X),\Strfun{\Vsnot X_i}(\Vect Y))$ since $\Vsnot X_i$ is a VNUTS. Since
$\Vsnot X$ is a VNUTS we have
\begin{multline*}
  \Strfun{\Vsnot F}(\Vect t)\\
  \in\NUTS(\Strfun{\Vsnot X}(\Strfun{\Vsnot
    X_1}(\Vect X),\dots,\Strfun{\Vsnot X_k}(\Vect X)),\Strfun{\Vsnot
    X}(\Strfun{\Vsnot X_1}(\Vect Y),\dots,\Strfun{\Vsnot X_k}(\Vect
  Y)))\,.
\end{multline*}

Let $X\in\Obj\NUTS$ and $\Vect Y\in\Obj{\NUTS^k}$. For $i=1,\dots,k$
we know that
$\Strnat{\Vsnot X_i}_{X,\Vect Y}\in\NUTS(\Tens{\Excl X}{\Strfun{\Vsnot
    X_i}(\Vect Y)},\Strfun{\Vsnot X_i}(\Tens{\Excl X}{\Vect
  Y}))$. Therefore
\begin{multline*}
  \Strfun{\Vsnot X}((\Strnat{\Vsnot X_i}_{X,\Vect
    Y})_{i=1}^k)\\
  \in\NUTS(\Strfun{\Vsnot
  X}((\Tens{\Excl X}{\Strfun{\Vsnot X_i}(\Vect Y)})_{i=1}^k),
\Strfun{\Vsnot X}((\Strfun{\Vsnot
  X_i}(\Tens{\Excl X}{\Vect Y}))_{i=1}^k))
\end{multline*}
and hence
\begin{multline*}
  \Strfun{\Vsnot X}((\Strnat{\Vsnot X_i}_{X,\Vect
    Y})_{i=1}^k)\Compl\Strnat{\Vsnot X}_{X,(\Strfun{\Vsnot X_i}(\Vect
    Y))_{i=1}^k}\\
  \in
  \NUTS(\Tens{\Excl X}{\Strfun{\Vsnot X}
((\Strfun{\Vsnot X_i}(\Vect Y))_{i=1}^k)},\Strfun{\Vsnot X}((\Strfun{\Vsnot
  X_i}(\Tens{\Excl X}{\Vect Y}))_{i=1}^k)
)\,.
\end{multline*}
Moreover we have
\begin{align*}
  \Strnat{\Vsnot F}_{\Web X,\Web{\Vect Y}}
  &=\Strfun{\Web{\Vsnot
    X}}((\Strnat{\Web{\Vsnot X_i}}_{\Web X,\Web{\Vect
    Y}})_{i=1}^k)\Compl\Strnat{\Web{\Vsnot X}}_{\Web X,(\Web{\Strfun{\Vsnot X_i}(\Vect
    Y)})_{i=1}^k}\\
  &\hspace{5em}\text{\quad by definition of }\Vsnot F\\
  &=\Strfun{\Web{\Vsnot
    X}}((\Strnat{\Web{\Vsnot X_i}}_{\Web X,\Web{\Vect
    Y}})_{i=1}^k)
    \Compl\Strnat{\Web{\Vsnot X}}_{\Web X,(\Strfun{\Web{\Vsnot X_i}}(\Web{\Vect
    Y}))_{i=1}^k}\\
  &=\Strfun{\Vsnot X}((\Strnat{\Vsnot X_i}_{X,\Vect
    Y})_{i=1}^k)\Compl\Strnat{\Vsnot X}_{X,(\Strfun{\Vsnot X_i}(\Vect
    Y))_{i=1}^k} 
\end{align*}
using again the fact that $\Vsnot X$ and the $\Vsnot{X}_i$'s are
VNUTS.
This shows that the pair $\Vsnot Y=(\Web{\Vsnot Y},\Total{\Vsnot Y})$
given by $\Web{\Vsnot Y}=\Vsnot F$ and
$\Total{\Vsnot Y}(\Vect X)=\Total{\Vsnot X}(\Strfun{\Vsnot X_1}(\Vect
X),\dots,\Strfun{\Vsnot X_k}(\Vect X))$ is a VNUTS whose associated
strong functor is $\Vsnot X\Comp\Vect{\Vsnot X}$ thus proving our contention.

Concerning Condition~(\ref{enum:seel-mull-4}), let us deal only with
the case of $\Excl\_$, the others being similar. We have to exhibit a
unary VNUTS $\Vsnot X$ whose associated strong functor $\NUTS\to\NUTS$
coincides with $\Excl\_$ (which is known to be a strong functor
$\NUTS\to\NUTS$ by Section~\ref{sec:NUTS-exponential} and by the
general considerations of
Section~\ref{sec:strong-fun-LL-operations}). For $\Web{\Vsnot X}$,
which has to be a variable set $\REL\to\REL$, we take the
interpretation $\Vsnot E$ of $\Excl\_$ in the model $\REL$
(Section~\ref{sec:strong-VS-Seely-model}) which is an element of
$\VREL 1$, that is, a unary variable set. Next, given
$X\in\Obj\NUTS$, we take $\Total{\Vsnot X}(X)=\Total{\Excl
  X}$. Condition~(\ref{enum:vnuts-cond-tot}) in the definition of
VNUTS holds by functoriality of $\Excl\_$ on
$\NUTS$. Condition~(\ref{enum:vnuts-cond-strength}) holds by the
definition of $\Strnat{\Vsnot F}_{\Web X,\Web Y}$ as described in
Section~\ref{sec:strong-fun-LL-operations} which coincides with
$\Monoidalt\Compl\Tensp{\Digg X}{\Excl Y}\in\NUTS(\Tens{\Excl X}{\Excl
  Y},\Excl{\Tensp{\Excl X}{Y}})$.

Let us now turn to Condition~(\ref{enum:seel-mull-5}) which is a bit more
challenging.

\subsubsection{Fixed Points of VNUTS}\label{sec:fix-VNUTS}
Let first $\Vsnot X=(\Web{\Vsnot X},\Total{\Vsnot X})$ be a \emph{unary}
VNUTS. Let $E=\Funfp{\Strfun{\Web{\Vsnot X}}}$ which is the least set
such that $\Strfun{\Web{\Vsnot X}}(E)=E$, that is
$E=\Union_{n=0}^\infty\Strfun{\Web{\Vsnot X}}^n(\emptyset)$. Let
$\Phi:\Tot E\to\Tot E$ be defined as follows: given $\cS\in\Tot E$,
then $(E,\cS)$ is a NUTS, and we set
$\Phi(\cS)=\Total{\Vsnot X}(E,\cS)\in\Tot{\Strfun{\Web{\Vsnot
      X}}(E)}=\Tot E$.
This function $\Phi$ is monotone. Let indeed $\cS_1,\cS_2\in\Tot E$
with $\cS_1\subseteq \cS_2$. Then we have
$\Id\in\NUTS((E,\cS_1),(E,\cS_2))$ and therefore, by
Condition~(\ref{enum:vnuts-cond-tot}) satisfied by $\Vsnot X$, we have
\begin{align*}
  \Id=\Strfun{\Web{\Vsnot X}}(\Id)
  &\in\NUTS(\Strfun{\Vsnot
    X}(E,\cS_1),\Strfun{\Vsnot
    X}(E,\cS_2))\\
  &\hspace{3em}=\NUTS((E,\Phi(\cS_1)),(E,\Phi(\cS_2))
\end{align*}
which means that $\Phi(\cS_1)\subseteq\Phi(\cS_2)$.
By the Knaster Tarski Theorem (remember that $\Tot E$ is a complete
lattice), $\Phi$ has a greatest fixpoint $\cT$ that we can describe as
follows. Let $(\cT_\alpha)_{\alpha\in\Ordinals}$, where $\Ordinals$ is
the class of ordinals, be defined by: $\cT_0=\Part E$ (the largest
possible notion of totality on $E$), $\cT_{\alpha+1}=\Phi(\cT_\alpha)$
and $\cT_\lambda=\Inter_{\alpha<\lambda}\cT_\alpha$ when $\lambda$ is
a limit ordinal. This sequence is decreasing (easy induction on
ordinals using the monotonicity of $\Phi$) and there is an ordinal
$\theta$ such that $\cT_{\theta+1}=\cT_\theta$ (by a cardinality
argument; we can assume that $\theta$ is the least such ordinal). The
greatest fixpoint of $\Phi$ is then $\cT_\theta$ as easily checked.

By construction $((E,\cT_\theta),\Id)$ is an object of
$\COALGFUN{\NUTS}{\Strfun{\Vsnot X}}$, we prove that it is the
final object. So let $(Y,t)$ be another object of the same
category. Since $(\Web Y,t)$ is an object of
$\COALGFUN\REL{\Strfun{\Web{\Vsnot X}}}$ and since $(E,\Id)$ is the
final object in that category, we know by
Lemma~\ref{lemma:rel-fixpoint-final} that there is exactly one
$e\in\REL(\Web Y,E)$ such that $\Strfun{\Web{\Vsnot X}}(e)\Compl
t=e$. We prove that actually $e\in\NUTS(Y,(E,\cT_\theta))$ so let
$v\in\Total Y$. We prove by induction on the ordinal $\alpha$ that
$\Matappa ev\in\cT_\alpha$. For $\alpha=0$ it is obvious since
$\cT_0=\Part E$. Assume that the property holds for $\alpha$ and let
us prove it for $\alpha+1$. We have
$\Matappa tv\in\Total{\Vsnot X}(Y)=\Total{\Strfun{\Vsnot X}(Y)}$ since
$t\in\NUTS(Y,\Strfun{\Vsnot X}(Y))$. Since
$\Strfun{\Vsnot X}(e)\in\NUTS(\Strfun{\Vsnot X}(Y),\Strfun{\Vsnot
  X}(E,\cT_\alpha))$ and since
$\Strfun{\Vsnot X}(E,\cT_\alpha)=(E,\cT_{\alpha+1})$ we have
$\Matappa{(\Strfun{\Vsnot X}(e)\Compl t)}v\in\cT_{\alpha+1}$, that is
$\Matappa ev\in\cT_{\alpha+1}$. Last if $\lambda$ is a limit ordinal
and if we assume $\forall\alpha<\lambda\ \Matappa ev\in\cT_\alpha$ we
have $\Matappa
ev\in\Inter_{\alpha<\lambda}\cT_\alpha=\cT_\lambda$. Therefore
$\Matappa ev\in\cT_\theta$. We use $\Fungfp{\Strfun{\Vsnot X}}$ to
denote this final coalgebra $(E,\cT_\theta)$ (its definition depends
only on $\Strfun{\Vsnot X}$ and does not involve the strength
$\Strnat{\Vsnot X}$).

So we have proven the first part of Condition~(\ref{enum:seel-mull-5})
in the definition of a Seely model of $\MULL$ (see
Section~\ref{def:categorical-muLL-models}). As to the second part, let
$\Vsnot X$ be an $n+1$-ary VNUTS. We know by the general
Lemma~\ref{lemma:strfun-gfp-general} that there is a uniquely defined
strong functor $\Fungfp{\Vsnot X}:\NUTS^n\to\NUTS$ such that
\begin{Itemize}
\item
  $\Strfun{\Fungfp{\Vsnot X}}(\Vect X)=\Fungfp{(\Strfun{\Vsnot
      X}_{\Vect X})}$, so that
  $\Strfun{\Vsnot X}(\Vect X,\Strfun{\Fungfp{\Vsnot X}}(\Vect
  X))=\Strfun{\Fungfp{\Vsnot X}}(\Vect X)$, for all
  $\Vect X\in\Obj{\NUTS^n}$,
\item
  $\Strfun{\Vsnot X}(\Vect t,\Strfun{\Fungfp{\Vsnot X}}(\Vect
  t))=\Strfun{\Fungfp{\Vsnot X}}(\Vect t)$ for all
  $\Vect t\in\NUTS(\Vect X,\Vect Y)$
\item and
  $\Strfun{\Vsnot X}(\Tens{Y}{\Vect X},\Strnat{\Fungfp{\Vsnot
      X}}_{Y,\Vect X})\Compl\Strnat{\Vsnot X}_{Y,(\Vect
    X,\Strfun{\Fungfp{\Vsnot X}}(\Vect X))}=\Strnat{\Fungfp{\Vsnot X}}_{Y,\Vect
    X}$ for all $Y\in\Obj\NUTS$ and $\Vect X\in\Obj{\NUTS^n}$.
\end{Itemize}
To end the proof, it will be enough to exhibit an $n$-ary VNUTS
$\Vsnot Y=(\Web{\Vsnot Y},\Total{\Vsnot Y})$ whose associated strong
functor coincides with $\Fungfp{\Vsnot X}$. We know that
$\Web{\Vsnot X}$ is a variable set $\REL^{n+1}\to\REL$ so let
$\Vsnot F=\Fungfp{\Web{\Vsnot X}}=\Funfp{\Web{\Vsnot X}}$ which is a
variable set $\REL^n\to\REL$ (see Section~\ref{sec:VS-fixpoints}). Let
$\Vect X\in\Obj{\NUTS^n}$, we have
$\Web{\Strfun{\Fungfp{\Vsnot X}}(\Vect
  X)}=\Web{\Fungfp{(\Strfun{\Vsnot X}_{\Vect
      X})}}=\Union_{n=0}^\infty\Web{\Strfun{\Vsnot X}_{\Vect
    X}}^n(\emptyset)=\Strfun{\Vsnot F}(\Web{\Vect X})$. Let
$\Vect t\in\NUTS^n(\Vect X,\Vect Y)$, then
$\Strfun{\Fungfp{\Vsnot X}}(\Vect t)$ is the unique element $s$ of
$\NUTS(\Strfun{\Fungfp{\Vsnot X}}(\Vect X),\Strfun{\Fungfp{\Vsnot
    X}}(\Vect Y))$ (this hom-set is a subset of
$\REL(\Vsnot F(\Web{\Vect X}),\Vsnot F(\Web{\Vect Y}))$) which
satisfies $\Strfun{\Vsnot X}(\Vect t,s)=s$, that is
$\Strfun{\Web{\Vsnot X}}(\Vect t,s)=s$. This means that
$\Strfun{\Fungfp{\Vsnot X}}(\Vect t)=s=\Strfun{\Vsnot F}(\Vect t)$. By
a completely similar uniqueness argument we have
$\Strnat{\Fungfp{\Vsnot X}}_{X,\Vect Y}=\Strnat{\Vsnot F}_{\Web
  X,\Web{\Vect Y}}$ for all $X\in\Obj{\NUTS}$ and
$\Vect Y\in\Obj{\NUTS^n}$. So we set $\Web{\Vsnot Y}=\Vsnot F$.

Next, given $\Vect X\in\Obj{\NUTS^n}$ we set
\begin{multline*}
  \Total{\Vsnot Y}(\Vect X)=\Total{\Strfun{\Fungfp{\Vsnot X}}(\Vect
    X)}\\
  \in\Tot{\Web{\Strfun{\Fungfp{\Vsnot X}}(\Vect
  X)}}=\Tot{\Strfun{\Vsnot F}(\Web{\Vect X})}
\end{multline*}
Given
$\Vect t\in\NUTS(\Vect X,\Vect Y)$ we have
\begin{multline*}
  \Strfun{\Vsnot F}(\Vect t)=\Strfun{\Fungfp{\Vsnot X}}(\Vect
  t)\\
  \in\NUTS((\Strfun{\Vsnot F}(\Web{\Vect X}),\Total{\Vsnot Y}(\Vect
  X)),(\Strfun{\Vsnot F}(\Web{\Vect Y}),\Total{\Vsnot Y}(\Vect Y))
\end{multline*}
since
$(\Strfun{\Vsnot F}(\Web{\Vect X}),\Total{\Vsnot Y}(\Vect
X))=\Strfun{\Fungfp{\Vsnot X}}(\Vect X)$ and similarly for $\Vect
Y$. Last since
$\Strnat{\Vsnot F}_{\Web X,\Web{\Vect Y}}=\Strnat{\Fungfp{\Vsnot
    X}}_{X,\Vect Y}\in\NUTS(\Tens{\Excl X}{\Strfun{\Fungfp{\Vsnot
      X}}(\Vect Y)},\Strfun{\Fungfp{\Vsnot X}}(\Tens X{\Vect Y}))$ we
know that $\Vsnot Y=(\Web{\Vsnot Y},\Total{\Vsnot Y})$ is a VNUTS
whose associated strong functor is $\Fungfp{\Vsnot X}$. This ends the
proof that $(\NUTS,(\VNUTS n)_{n\in\Nat})$ is a Seely model of
$\MULL$.
\end{proof}






\end{document}